\documentclass[a4paper,12pt,fleqn]{article}
\usepackage[margin=1.25in]{geometry}
\usepackage[utf8]{inputenc}

\usepackage[USenglish]{babel}
\usepackage{enumitem}

\usepackage{setspace}

\usepackage{amssymb}
\usepackage{amsfonts}
\usepackage{amsmath}
\usepackage{mathabx}
\usepackage{bbm}

\usepackage{comment}
\usepackage{graphicx}
\usepackage{titlesec}
\usepackage{amsthm}
\usepackage{mathrsfs}  

\usepackage{relsize}
\usepackage{exscale}

\usepackage{float}

\newcommand{\bsf}[1]{\textnormal{\sffamily\bfseries #1}}

\usepackage{cellspace}

\newcommand{\powerset}{\raisebox{.15\baselineskip}{\large\ensuremath{\wp}}}
\usepackage{xcolor}
\definecolor{darkred}{rgb}{0.55, 0.0, 0.0}
\definecolor{darkscarlet}{rgb}{0.34, 0.01, 0.1}
\definecolor{Gold1}{rgb}{0.9, 0.7, 0.1}
\definecolor{green4}{rgb}{0.0, 0.26, 0.15}
\definecolor{green3}{rgb}{0.0, 0.5, 0.0}

\definecolor{maastricht}{rgb}{0.0, 0.18, 0.39}
\definecolor{mycolor}{rgb}{0.0, 0.18, 0.39}
\definecolor{bg}{rgb}{0.97, 0.97, 0.94}

\usepackage[most]{tcolorbox}
\newtcolorbox{mybox}[2][]{%
colback=bg,
colframe=black!75!black,
fonttitle=\bfseries,
coltitle=white,
colbacktitle=bg,
enhanced,
attach boxed title to top left={yshift=-1.2mm, xshift=2mm},
title=#2,
#1}

\usepackage{tikz}

\usetikzlibrary{fit,calc,shapes.callouts,shapes.arrows,matrix,positioning,arrows}

\usetikzlibrary{calc,shapes.callouts,shapes.arrows}

\usepackage{pgfplots}
\pgfplotsset{compat=1.18}

\usepackage[authoryear]{natbib}

\usepackage[
  colorlinks=true,
  citecolor=green3,
  linkcolor=maastricht,
  urlcolor=darkscarlet,
  runcolor=darkscarlet,
  pagebackref=true
]{hyperref}

\usepackage[nameinlink]{cleveref}

\crefdefaultlabelformat{#2\textsc{#1}#3}
\Crefname{definition}{Definition}{\textsc{Definitions}}
\Crefname{theorem}{Theorem}{\textsc{Theorems}}
\Crefname{section}{Section}{\textsc{Sections}}
\Crefname{lemma}{Lemma}{\textsc{Lemmas}}
\Crefname{corollary}{\textsc{Corollary}}{\textsc{Corollaries}}
\Crefname{figure}{Figure}{\textsc{Figures}}
\Crefname{example}{Example}{\textsc{Examples}}
\Crefname{remark}{Remark}{\textsc{Remark}}

\usepackage{thmtools}

\renewcommand\thmcontinues[1]{Continued}

\newcommand*{\fullref}[1]{\hyperref[{#1}]{\autoref*{#1} \nameref*{#1}}}

\usepackage{titletoc}
\newtheoremstyle{boldnote}%
  {3pt}{3pt}
  {\normalfont}%
  {}%
  {\bfseries}
  {.}%
  { }%
  {\thmname{#1}~\thmnumber{#2}\textbf{\thmnote{ (#3)}}}

  \newtheoremstyle{boldnote-italic}%
  {3pt}{3pt}
  {\itshape}%
  {}%
  {\bfseries}
  {.}%
  { }%
  {\thmname{#1}~\thmnumber{#2}\textbf{\thmnote{ (#3)}}}

\theoremstyle{boldnote-italic}
\newtheorem{theorem}{Theorem}

\newtheorem{corollary}{{Corollary}}

\newtheorem{lemma}{{Lemma}}

\newtheorem{proposition}{Proposition}

\theoremstyle{boldnote}
\newtheorem{definition}{{Definition}}[section]

\newtheorem{example}{{Example}}
\newtheorem{remark}{{Remark}}[section]

\usepackage{scalerel}

\usepackage{dashbox}

\usetikzlibrary{calc}
\usetikzlibrary{matrix}
\usetikzlibrary{positioning}
\usetikzlibrary{arrows}
\usepackage{epigraph}

\usepackage{tcolorbox}

\usepackage[labelfont=bf]{caption}

\usetikzlibrary{calc}
\usetikzlibrary{matrix}
\usetikzlibrary{positioning}
        
    \tikzset{
  every overlay node/.style={
    anchor=north west, inner sep=0pt,
  },
}
\begin{document}
\spacing{1.5}

\pagenumbering{roman}
\thispagestyle{empty}

\title{\vspace{-3cm}{\sc \Large Rationalizable Behavior in Matching with Externalities}\thanks{
We greatly appreciate the comments of and discussions with Pierpaolo Battigalli,  Laura Doval, Bettina Klaus, Pierfrancesco Guarino, Nathan Yoder, Marek Pycia, Christian Seel, Ziwei Wang and the participants of   the 14th Conference on Economic Design (Essex, UK), the Coalition Theory Network Workshop (CTN, Padova, Italy), the 2025 SAET Conference (Ischia, Italy), Match-Up 2026 (Paris, France), SAET-EWET 2026 (Venice, Italy), SING 2026 (Naples, Italy), and SSCW 2026 (Tokyo, Japan). The usual disclaimers apply. Antonio  Nicolò acknowledges financial support  by EU  Next generation under grant PRIN call PNRR 2022, project \lq\lq Incentivizing participation of compatible pairs in Kidney Paired Exchange Programs\rq\rq n. P2022P5CHH.
Riccardo Saulle acknowledges financial support  by Unicredit Foundation.
}}
 \author{
 Antonio Nicolò\thanks{University of Manchester and University of Padova. E-mail: 
\texttt{antonio.nicolo@manchester.ac.uk}} \and
Pietro Salmaso\thanks{University of Naples Federico  II. E-mail: \texttt{pietro.salmaso@unina.it}} \and
 Riccardo D.  Saulle\thanks{University of Padova. E-mail: \texttt{riccardo.saulle@unipd.it}}
}

\thispagestyle{empty}
\maketitle

\vspace{-1cm}

\begin{abstract}
In many matching markets, agents care not only about their own partners but also about the matches formed by others. With externalities, stability depends on what agents believe would happen after a deviation. We introduce rationalizable conjectures: beliefs that survive iterated elimination, in the spirit of rationalizability in non-cooperative games. These beliefs define conjecture-rationalizable stability, a solution concept that always exists, extends Gale--Shapley stability, and coincides with it when externalities are absent. We also introduce rationalizable matchings, a non-equilibrium counterpart, and show that every conjecture-rationalizable stable matching is rationalizable. In matching with couples, our concept yields non-empty predictions even when standard stability is vacuous. Finally, we provide an epistemic foundation: rationalizability is behaviorally implied by pairwise rationality and common belief in pairwise rationality, while conjecture-rationalizable stability additionally requires belief correctness.

\bigskip

\noindent 
\textbf{Keywords:} Matching, Externalities, Stability,  Rationalizability.\\
\noindent \textbf{JEL Codes:}  C71, C72, C78, D62
\end{abstract}

\newpage
\thispagestyle{empty} \newpage

\titleformat{\section}[hang]
{\normalfont\large\fillast\sc\color{black}}
{\scshape  \oldstylenums{\thesection}}
{1ex minus .1ex}{\large}
\titlespacing{\section}{3pc}{*3}{*2}[3pc]

\titleformat{\subsection}[hang]
{\normalfont\fillast\bf\color{black}}
{\scshape  \oldstylenums{\thesubsection}}
{1ex minus .1ex}{\normalsize}
\titlespacing{\subsection}{3pc}{*3}{*2}[3pc]

\titleformat{\subsubsection}[hang]
{\normalfont\large\fillast\sc\color{black}}
{\scshape  \oldstylenums{\thesubsubsection}}
{1ex minus .1ex}{\normalsize}
\titlespacing{\subsubsection}{3pc}{*3}{*2}[3pc]

{\hypersetup{linkcolor=black}\tableofcontents}

\thispagestyle{empty}
\clearpage
\pagenumbering{arabic} 
\setcounter{page}{1}

\vspace{-0.5em}
\begin{quote}
 {\it\footnotesize \emph{Proteus:} If hearty sorrow
Be a sufficient ransom for offense,
I tender ’t here. I do as truly suffer
As e’er I did commit.

\emph{Valentine:}  Then I am paid,
And once again I do receive thee honest.
[...]
And that my love may appear plain and free,
All that was mine in Sylvia I give thee.
}

 \hfill ---{\footnotesize The Two Gentlemen of Verona, Act 5, Scene 4 (W. Shakespeare)}---

\end{quote}
\vspace{-0.5em}

\section{Introduction}

\subsection*{Motivation \& Results}

\noindent {\sc The Two Gentlemen of Verona}, often regarded as Shakespeare's earliest comedy, stages a marriage problem with externalities. Proteus and Valentine are two friends who both desire Sylvia. In the seminal marriage model of \citet{Gale1962}, this would be a familiar problem of competing preferences over partners, with agents on each side ranking potential partners on the other. But Shakespeare's final scene suggests something more: each man cares not only about whether he marries Sylvia, but also about what happens to the other. The object of choice is not merely a spouse, but a matching. The example is stylized, but the phenomenon is pervasive. Potential employees may evaluate a job offer differently depending not only on the job itself but also on the employment prospects of their spouses. A hospital's value from hiring a cardiologist may fall if a nearby hospital has just hired a more prominent cardiologist, while its value from hiring a neurologist or an oncologist may rise because differentiation avoids direct competition.

\medskip

\noindent This is exactly the feature ruled out by the classical marriage model of \citet{Gale1962}. Gale and Shapley’s central contribution was to reduce matching to a stability problem over individual preferences.\footnote{The reduced-form approach abstracts from the particular process through which agents meet, bargain, or learn. Although we acknowledge the relevance of non-cooperative investigation, we emphasize the advantages of the reduced-form's parsimony in capturing complex strategic situations. For a discussion of reduced-form and non-cooperative models, see \citet{LiuQFor}.} An allocation is stable if no matched agent would prefer to remain single, and no pair of agents would both prefer to match with each other rather than stay with their current partners. Because agents evaluate only their own partners, stability is local: to assess a blocking pair, no further counterfactual is needed. The pair carries within itself all the information required to evaluate the deviation.

\medskip

\noindent Externalities break this locality. When preferences are defined over full matchings, a pair contemplating a deviation must evaluate not only the partnership it would form, but also the residual matching that is expected to arise among the remaining agents. A pairwise deviation is therefore only half of a counterfactual: the other half is the residual matching induced outside the deviating pair. Anticipating the behavior of others becomes central to the  definition of stability.

\medskip

\noindent This is the conceptual problem underlying P-stability, the standard way in which  pairwise stability is formulated in matching markets with externalities. Since preferences are defined over full matchings, the statement that a pair blocks is incomplete unless the theory specifies the residual  matching used to evaluate the deviation. P-stability resolves this incompleteness by imposing a simple counterfactual: when a pair deviates, the agents left behind become single, while all other matches remain unchanged. Thus, P-stability makes blocking well defined by selecting a particular residual matching.

\medskip

\noindent This selection is natural and tractable, but it is also a substantive assumption.
 We therefore ask a prior question: which residual matchings are credible? Our answer is not another post-deviation rule, but a rationality criterion for the counterfactuals that make blocking meaningful.

\medskip

\noindent We formalize this idea through conjectural stability.
 Agents  are  equipped  with a set of conjectures about the behavior of the other agents, formally represented as residual  matchings. Then,  a  matching is  said to be conjecturally stable if no agent has an incentive to deviate given her conjectures\footnote{Equivalently, an agent deviates from a matching only if she is certain to be better off in every matching she considers possible.} (\Cref{stability}).

\medskip

\noindent Since different conjectures generate different predictions (\Cref{nestedness}), the central issue is which conjectures are justified when preferences are transparent.\footnote{That is, common knowledge in the usual sense.}

\medskip

\noindent To address this issue, we introduce conjecture-rationalizable stability.\footnote{The terminology emphasizes that rationalizability is imposed on the conjectures that complete pairwise deviations. This distinguishes our concept from rationalizable conjectural equilibrium \citep{Rub1994} and from the notions of rationalizable conjectural stability in \citet{Mauleon2021,Mauleon2025}, which build on Rubinstein and Wolinsky's framework and on imperfect information with feedback, rather than on the conjectural-stability tradition initiated by \citet{Sasaki1986,Sasaki1996}.} Its construction mirrors the logic of rationalizability in non-cooperative game theory \citep{Pearce1984,Bernheim1984}. We introduce a conjecture-based notion of dominance (\Cref{conjecturedominance}) and an iterative procedure that eliminates dominated residual matchings from agents' conjectures (\Cref{procedure}). A matching is conjecture-rationalizable stable if no agent or pair can profitably block it under rationalizable conjectures. Blocking must therefore be justified by beliefs that survive iterated reasoning.

\medskip

\noindent A key feature of the construction is partner-dependence. In non-cooperative games, each agent holds beliefs about a fixed set of opponents and their strategies.  In matching environments, by contrast, conjectures are tied to the contemplated partner:.changing the partner changes what can be conjectured. If agent $i$ considers deviating with $j$, her conjecture concerns the behavior of the agents outside \{i,j\}. If she considers deviating with $k$, it concerns a different residual set. Hence, a single belief about “what others do” is not enough.
\medskip

\noindent Our main result establishes that conjecture-rationalizable stable matchings always exist (\Cref{th1}). Beyond existence, we derive three results that clarify the structure and epistemic underpinnings of the framework. First, we provide a fixed-point characterization of rationalizable conjecture systems (\Cref{th2}), analogous to classical fixed-point characterizations of best reply sets in non-cooperative game theory \citep{Pearce1984}. Second, we define rationalizable matchings (\Cref{rat}), the counterpart of rationalizable strategy profiles in our environment, and show that every conjecture-rationalizable stable matching is rationalizable (\Cref{thRef}). Third, we provide an epistemic foundation for our solution concepts and for P-stability. The epistemic analysis identifies the rationality assumptions underlying our concepts and links them to common belief in rationality and belief correctness (\Cref{epistemic1}, \Cref{epistemic2}). 
It shows that conjecture-rationalizable stability imposes rationalizability on
off-path conjectures and correctness on the conjectures associated with the
realized matching, whereas rationalizable matchings require only the former. Traditional P-stability imposes an even stronger epistemic requirement, since beliefs must agree with the particular post-deviation matching selected by the P-stability counterfactual (\Cref{epistemic3}).
\medskip

\noindent 
Together, these results establish   that stability with externalities is a question not just of profitable deviations but of credible conjectures.
Without externalities, a blocking pair carries its own counterfactual. With externalities, it does not: the pair must also conjecture how the other agents match. Our contribution is to discipline these conjectures by rationalizability, separating those that are merely assumed from those that can be justified by iterated reasoning. Thus, the paper preserves the reduced-form parsimony of matching theory while extending its scope: pairwise deviations are still evaluated without specifying a full extensive form, but the conjectures that complete them are disciplined by non-cooperative reasoning.

\subsection*{Related Work}
\label{lit}

The literature on matching with externalities largely builds on the seminal contribution of \citet{Sasaki1986} who  introduces the  idea of conjectural stability for matching problems and     identifies  three  solution concepts: 
optimistic-stability,  conservative-stability, and P-stability.\footnote{\citet{Sasaki1986} call conservative-stability the notion of 
pessimistic-stability. We follow the terminology of \citet{Bando2016} to avoid confusion with P-stability. Optimistic stability  allows deviation only if agents improve in at least one matching in which they believe, while conservative-stability allows deviation only if agents improve in all matchings in which they believe. \citet{Sasaki1986} shows that the three solutions are nested---optimistically-stable matchings are P-stable which in turn are conservatively-stable.}
The latter concept of stability  is the predominant one in the literature \citep[e.g.,][]{Roth1990}. A matching is said to be P-stable if there are no singles or pairs that block, given that  former partners remain single and all the other agents remain matched as they were prior to the blocking. This notion is  based on a tractable but very  
 specific assumption about how the agents react to a blocking.\footnote{
This assumption recasts the 
outsider-independent dominance relation \citep{Koczy} for models of coalition formation, and it  appears under different names in the literature. Relevant examples are the  $\gamma$-model \citep{Hart},  non-interference \citep{Konishi} and disintegration rule \citep{Bloch}  in coalitional games. See also \cite{Saulle} who impose such assumption on a general model of coalition formation that encompasses matching and coalitional games as particular cases.}\footnote{In \Cref{foundation} we provide an epistemic foundation of P-stability and show that it builds on strong conditions.}
Moreover, it is well established that a P-stable matching does not always exist\footnote{In \Cref{couple}, we discuss an application of matching with couples in which the set of P-stable matchings is empty, and we show that our solution provides a reasonable prediction.}
\citep[e.g.][]{Sasaki1986,Roth1990}. 
This drawback    motivates  \citet{Sasaki1996} to focus on conservative-stability. In line with \citet{Sasaki1996}  our approach utilizes conservative agent behavior. However, unlike \citet{Sasaki1996} we assume that   agents' conjectures depend on  their preferences, and thus they are   endogenous to the model.\footnote{
 \citet{Sasaki1996} discuss the possibility of endogenizing agents' conjectures  by adopting  the  notion of rational expectation equilibrium  in \citet{Li1993}, 
where  agents' conjectures satisfy a form of reduced game consistency. This notion is based on  arbitrary assumptions about the knowledge that  non-blocking  agents have  about the behavior of the blocking agents. \citet{Sasaki1996} shows that the rational expectation solution for matching with externalities may fail to exist. \citet{Hafalir2008} introduces a weaker notion of rational expectations, namely sophisticated expectations,  which always  include the rational expectation solution.
Due to the arbitrary nature of the expectation rules in rational expectations and sophisticated expectations, they  bear no relation to the solution concepts we propose here.}

Recently, \citet{Marzena2022,Marzena2023,Pycia2023} study many-to-many matching models with contracts and externalities. 
\citet{Pycia2023} introduce a stability notion in which choice functions—rather than preferences—serve as the primitive objects. Although they mainly focus on P-stability, their results apply to a broader class of stability notions by appropriately modifying the choice function. However, the way in which preferences can be inferred from choice functions, and consequently the conjectures that agents must consider, is not explored in depth.
In contrast, we take preferences as primitive and explicitly investigate how agents form conjectures given a preference profile. Furthermore, unlike \citet{Pycia2023}, who rely on a substitutability condition that rules out complementarities in agents’ preferences even in the one-to-one case,\footnote{For instance, the framework in \citet{Pycia2023} cannot generate predictions for married couples who search for jobs in the same geographic area.} our existence results do not depend on any such sufficient condition.\footnote{On the role of complementarities and peer effects in matching and coalition formation, and on conditions that accommodate them while preserving the existence of stable outcomes, see \citet{Pycia2012} and \citet{Rostek2020}.}
\citet{Marzena2022,Marzena2023} introduce a stability notion that allows for predictions in a wide range of matching environments, including the roommate problem. Their notion also employs conjectures, but only to restrict the set of possible pairwise blockings, rather than—as in our approach—to represent agents’ beliefs about how the market would rearrange if they were to match in a given way. 
Unlike our contribution,  the stability notion of \citet{Marzena2022,Marzena2023} does not generalize the  canonical stability notion \citep{Gale1962}.\footnote{In the one-to-one matching model without externalities it does not  reduce to stability \citep{Gale1962}, but strictly enlarges it.}
\medskip

 Our contribution aligns with what we term  “Pearce–Bernheim program,” a line of research  that   aims to develop
solutions for games in reduced-form, and provide  epistemic foundations, by   adapting  (pre-)epistemic  notions from non-cooperative game theory.\footnote{We distinguish this line of research from the Kreps-Wilson program, as put forward by \citet{LiuQ2020,LiuQ2023,LiuQFor}, who seeks to  develop solutions for cooperative games with incomplete information by  addressing the question of which outcomes can be stable by pinning down the exact belief that supports stable outcomes.}
Within this research program, it is important to   acknowledge the pioneering contribution of \citet{Liu2014} which  addresses the problem of defining \lq\lq plausible" beliefs in the context of static  matching with incomplete information.\footnote{See also \citet{Chen2023} and \citet{Pour2024}.}  The set of incomplete-information stable outcomes proposed in their study is derived through a rationalizability-type procedure.
More recently, \citet{Wang2023}, building on the rationalizability approach of \citet{Liu2014}, introduces a notion of rationalizable stability for static two-sided matching markets with one-sided incomplete information and provides an epistemic foundation. A distinct epistemic foundation for the same incomplete-information stability is offered by \citet{Pomatto2022}, who models pairwise deviations as explicit extensive-form games of offers and applies forward-induction reasoning---common strong belief in rationality \citep{Battigalli2002}---to show that stability under forward induction coincides with the incomplete-information stability of \citet{Liu2014}, even though the two procedures differ at finite orders of iteration. Our use of rationalizability differs conceptually from these contributions. In \citet{Liu2014} and \citet{Wang2023}, the iterative procedure is part of a stability notion; in \citet{Pomatto2022}, it is forward-induction reasoning in an explicit extensive-form game of offers. In all three, uncertainty concerns agents' types or match qualities. In contrast, we employ a conjectural stability approach in a complete-information environment with externalities, where uncertainty is counterfactual: after a deviation, agents must conjecture how the residual market would be matched. We discipline these counterfactual conjectures by rationalizability while keeping deviations in reduced form, without specifying a full extensive form.

A related use of similar terminology appears in \citet{Mauleon2021} and \citet{Mauleon2025}, who study notions of rationalizable conjectural stability in, respectively, networks and school-choice problems with imperfect information and feedback about others' links or matches. Their approach builds on the rationalizable conjectural equilibrium framework of \citet{Rub1994}. Our notion is different: it builds on the conjectural-stability tradition of \citet{Sasaki1996}, and rationalizability  is used to discipline counterfactual conjectures.

A further strand of rationalizability-based work studies coalitional and social environments, including \citet{Herings2004} and \citet{Ambrus2006}.
While related in spirit, these papers do not address our question: how to endogenize the partner-dependent conjectures that make pairwise deviations well defined under externalities.

\subsection*{A motivating example}
\label{preview}

\begin{example}[A Tale of Three Brothers]
\label{ex}

Consider a two-sided market with three brothers $a_1,a_2,a_3$ on side $A$ and one 
employer $b_1$ on side $B$. 
The brother $a_1$ is the oldest and most experienced, $a_2$ is the middle brother, 
and $a_3$ is the youngest. 
The employer has one position to fill. 
There are four possible matchings: no brother is hired, or exactly one of the three 
brothers is hired. 
We denote by $\mu^0$ the matching in which no one is hired, and by $\mu^i$ the 
matching in which $a_i$ is hired, for each $i=1,2,3$. 
These matchings are displayed in \hyperref[exFig1]{\textcolor{Gold1}{Figure 1}}.

The employer values experience: her most preferred outcome is to hire $a_1$, followed 
by $a_2$. 
Hiring $a_3$, however, is worse for her than leaving the position vacant.

The brothers see the situation differently. 
For them, the job is a burden rather than a prize: none ranks first the matching in 
which he himself is hired. 
The oldest would rather no one were hired than take the job himself; the middle brother 
would rather see $a_3$ take it; the youngest would rather see one of his older brothers 
take it. 
Beyond their own assignment, each brother also cares about which sibling, if any, ends 
up with the job, capturing externalities in a simple way.  For instance, even in the matchings where the youngest is not hired, he is not indifferent: he would rather see the oldest employed than the middle brother, and either of them rather than no one at all ($\mu^1 \succ_{a_3} \mu^2 \succ_{a_3} \mu^0$).
The complete preference profile is on the left-hand side of 
\hyperref[exFig2]{\textcolor{Gold1}{Figure 2}}.

\begin{figure}[ht]
    \centering
\begin{tikzpicture}[scale=0.75, transform shape,
    box/.style={draw,inner sep=7pt,rounded corners=5pt}]
    \node at (1, 0.2) {\Large$\mu^0$};
    \node[draw, circle] (a1) at (0, -1) {$a_1$};
    \node[draw, circle] (a2) at (0, -2) {$a_2$};
    \node[draw, circle] (a3) at (0, -3) {$a_3$};
    \node[draw, circle] (b1) at (2, -2) {$b_1$};
    \node[box,fit=(a1)(a2)(a3)(b1)] {};
\end{tikzpicture}
\hspace{0.2cm}
\begin{tikzpicture}[scale=0.75, transform shape,
    box/.style={draw,mycolor,inner sep=7pt,rounded corners=5pt}]
    \node at (1, 0.2) {\Large$\mu^1$};
    \node[draw, circle] (a1) at (0, -1) {$a_1$};
    \node[draw, circle] (a2) at (0, -2) {$a_2$};
    \node[draw, circle] (a3) at (0, -3) {$a_3$};
    \node[draw, circle] (b1) at (2, -2) {$b_1$};
    \draw (a1) -- (b1);
    \node[box,fit=(a1)(a2)(a3)(b1)] {};
\end{tikzpicture}
\hspace{0.2cm}
\begin{tikzpicture}[scale=0.75, transform shape,
    box/.style={draw,mycolor,inner sep=7pt,rounded corners=5pt}]
    \node at (1, 0.2) {\Large$\mu^2$};
    \node[draw, circle] (a1) at (0, -1) {$a_1$};
    \node[draw, circle] (a2) at (0, -2) {$a_2$};
    \node[draw, circle] (a3) at (0, -3) {$a_3$};
    \node[draw, circle] (b1) at (2, -2) {$b_1$};
    \draw (a2) -- (b1);
    \node[box,fit=(a1)(a2)(a3)(b1)] {};
\end{tikzpicture}
\hspace{0.2cm}
\begin{tikzpicture}[scale=0.75, transform shape,
    box/.style={draw,mycolor,inner sep=7pt,rounded corners=5pt}]
    \node at (1, 0.2) {\Large$\mu^3$};
    \node[draw, circle] (a1) at (0, -1) {$a_1$};
    \node[draw, circle] (a2) at (0, -2) {$a_2$};
    \node[draw, circle] (a3) at (0, -3) {$a_3$};
    \node[draw, circle] (b1) at (2, -2) {$b_1$};
    \draw (a3) -- (b1);
    \node[box,fit=(a1)(a2)(a3)(b1)] {};
\end{tikzpicture}
    \caption{Matchings among $a_1,a_2,a_3$, and $b_1$. An edge between $a_i$ and 
    $b_1$ indicates that the two agents are matched.}
    \label{exFig1}
\end{figure}

\begin{figure}[ht]
\hspace{-0.5cm}
  \begin{minipage}[t]{0.35\textwidth}
      \begin{tabular}[c]{c|c|c|c}
\multicolumn{3}{c|}{$A$} & \multicolumn{1}{|c}{$B$} \\\hline
$a_1$ & $a_2$ & $a_3$ & $b_1$  \\\hline
\rule{0pt}{1em} 
$\mu^0$ & $\mu^3$ & $\mu^1$ & $\mu^1$   \\
$\mu^1$ & $\mu^2$ & $\mu^2$ & $\mu^2$  \\
$\mu^2$ & $\mu^1$ & $\mu^3$ & $\mu^0$  \\
$\mu^3$ & $\mu^0$ & $\mu^0$ & $\mu^3$
\end{tabular}
  \end{minipage}\hspace{-1cm}
    \begin{minipage}[t]{0.60\textwidth}
\begin{tabular}{c|c|c|c}
       & \( a_1 \) & \( a_2 \) & \( a_3 \) \\
\hline
\rule{0pt}{1em} 
{\footnotesize \texttt{Step 0}} 
& $\mu^0_{-a_1},\mu^2_{-a_1},\mu^3_{-a_1}$       
& $\mu^0_{-a_2},\mu^1_{-a_2},\mu^3_{-a_2}$          
& $\mu^0_{-a_3},\mu^1_{-a_3},\mu^2_{-a_3}$         \\
\rule{0pt}{1em} 
{\footnotesize\texttt{Step 1}} 
& $\mu^0_{-a_1},\mu^2_{-a_1}$       
& $\mu^0_{-a_2},\mu^1_{-a_2}$       
& $\mu^0_{-a_3},\mu^1_{-a_3},\mu^2_{-a_3}$         \\
\rule{0pt}{1em} 
{\footnotesize\texttt{Step 2}}
& $\mu^2_{-a_1}$     
& $\mu^0_{-a_2},\mu^1_{-a_2}$        
& $\mu^1_{-a_3},\mu^2_{-a_3}$    \\ 
\rule{0pt}{1em} 
{\footnotesize\texttt{Step 3}}
& $\mu^2_{-a_1}$     
& $\mu^1_{-a_2}$        
& $\mu^1_{-a_3}$    \\ 
\end{tabular}
    \end{minipage}
    \caption{Left: agents' preferences, listed from most to least preferred. 
    Right: iterated elimination of inconceivable conjectures. Each entry is a 
    set of residual matchings; for instance, $\mu^2_{-a_1}$ denotes the 
    restriction of $\mu^2$ to the agents other than $a_1$.}
    \label{exFig2}
\end{figure}

This market has no P-stable matching: $\mu^1$ is blocked by $a_1$, $\mu^0$ is 
blocked by $(a_2,b_1)$, $\mu^2$ is blocked by $(a_1,b_1)$, and $\mu^3$ is blocked 
by $b_1$. 
These blocking arguments, however, all rely on specific assumptions about the 
conjectures held by the deviating agents. 
Before proceeding further, it is useful to clarify what agents conjecture. A conjecture is a residual matching. Thus, if \(a_1\) contemplates remaining single, she conjectures how the agents in \(N\setminus\{a_1\}\) are matched.

Consider for instance the matching $\mu^1$. This matching is the preferred one by the employer who has no incentive to deviate. Agent $a_1$ has incentive to deviate because according to P-stability he holds the conjecture that if he deviates from $\mu^1$ the resulting matching is $\mu^0_{-a_1}$. However, if \(a_1\) could conjecture \(\mu^2_{-a_1}\) after leaving \(b_1\), the deviation from \(\mu^1\) would be deterred. It is precisely this assumption about agents' conjectures under P-stability that the following elimination procedure calls into question.
\medskip

\noindent{\bf Iterated Elimination.}
We iteratively exclude conjectures that no agent can rationally hold.\footnote{The procedure builds on the assumption that 
preferences are common knowledge among all agents.}
Conjectures are trivial whenever an agent on side $A$ is matched with $b_1$: the 
residual matching is then uniquely determined, since the remaining agents on side $A$ 
are all unmatched. 
The only non-trivial conjectures are those of unmatched agents on side $A$, each of 
whom must form a belief about whether $b_1$ remains single or matches one of the 
other brothers.

\noindent{\bf Step 0.}
Each single agent $a_i$ begins with the full set of possible conjectures. 
For instance, $a_1$'s initial conjecture set is 
$\left\{\mu^0_{-a_1},\mu^2_{-a_1},\mu^3_{-a_1}\right\}$.

\noindent{\bf Step 1.}
Since hiring $a_3$ is not individually rational for $b_1$, neither $a_1$ nor $a_2$ 
can rationally believe that $b_1$ would match with $a_3$ when they are not matched with $b_1$. 
The conjecture $\mu^3_{-a_1}$ is therefore eliminated from $a_1$'s set, and 
$\mu^3_{-a_2}$ from $a_2$'s set.

\noindent{\bf Step 2.}
Since $\mu^3_{-a_2}$ has been eliminated, $a_2$ would consider remaining single only 
if he believed $b_1$ would match with $a_3$, a conjecture no longer available. 
Moreover, $b_1$ strictly prefers $a_2$ over remaining unmatched. 
Together, these facts imply that $a_1$ cannot rationally believe $b_1$ would remain 
single: if $a_1$ is unmatched, he should believe that $a_2$ and $b_1$ will match. 
The conjecture $\mu^0_{-a_1}$ is therefore eliminated, leaving $a_1$ with the unique 
conjecture $\mu^2_{-a_1}$. 
Since all agents infer that $\mu^2_{-a_1}$ is the unique $a_1$'s conjecture if he remains unmatched,  $a_3$ cannot believe $b_1$ would remain single, because both $b_1$ and $a_1$ (given his conjecture) would prefer to match. Therefore, 
$\mu^0_{-a_3}$ is eliminated.

\noindent{\bf Step 3.}
$\mu^1$. 
Since $a_1$ prefers matching with $b_1$ than remaining alone, 
 consequently, both $a_2$ and $a_3$, when single, must conjecture that $b_1$ is 
matched with $a_1$: the unique remaining conjecture for each is $\mu^1_{-a_2}$ and 
$\mu^1_{-a_3}$, respectively.

The steps of this procedure are summarized on the right-hand side of 
\hyperref[exFig2]{\textcolor{Gold1}{Figure 2}}.
These surviving conjectures support the stability of $\mu^1$: given that $a_1$ 
believes deviating leads to $\mu^2$, which he likes less than $\mu^1$, he has no 
incentive to leave $b_1$; and $b_1$ prefers $\mu^1$ to any deviation. 
The matching $\mu^1$ is  stable when agents' conjectures are \lq\lq rationalizable". In the rest of the paper we formalize this intuition.

\end{example}

\section{Setup}

Let $A$ and $B$ be two mutually disjoint finite {\bf sets of agents} and let $ N \equiv A\cup B$. A coalition $S$ is any non-empty subset of $  N$. 
A  {\bf matching} is a bijection $\mu:  N\longrightarrow   N$ such that 
(i) $\mu(\mu(i))=i$ for all $i\in   N$ and (ii) $\mu(a)\in B\cup\{a\}$, and $\mu(b)\in A\cup \{b\}$ for all $a\in A$ and $b\in B$.
Let $\mathcal{M}$ be the set of all matchings. 
 
Each agent $i \in N$ has a complete and transitive {\bf preference ordering} 
$\succeq_i$ over $\mathcal{M}$. Preferences are defined over entire matchings, 
not just over partners. This is the most general way to capture externalities. 
We write $\succ_i$ for the strict part and $\sim_i$ for the 
indifference part. A preference profile is 
$\succeq\, \equiv (\succeq_i)_{i \in N}$.
A {\bf matching game with externalities} is thus $\Gamma:=\langle A,B,\succsim\rangle$. 


\paragraph*{Partners and residual matchings.}
For any $i \in N$, we call $\mu(i)$ the partner of $i$, and we write 
$\mathcal P_i$ for the set of all possible partners of $i$, including $i$ herself. Fix 
$i \in N$ and $j \in \mathcal  P_i$. Let $\mathcal{M}_{-i,j}$ be the set of matchings of 
$N \setminus \{i,j\}$.\footnote{Formally, each $\mu_{-i,j} \in 
\mathcal{M}_{-i,j}$ is a bijection $\mu_{-i,j} : N\setminus\{i,j\} \to 
N\setminus\{i,j\}$ such that $\mu_{-i,j}(\mu_{-i,j}(k)) = k$ for all 
$k \in N\setminus\{i,j\}$, with $\mu_{-i,j}(a) \in (B\setminus\{i,j\})\cup\{a\}$ 
for all $a \in A\setminus\{i,j\}$ and $\mu_{-i,j}(b) \in 
(A\setminus\{i,j\})\cup\{b\}$ for all $b \in B\setminus\{i,j\}$.} If 
$\mu(i) = j$, we write $\mu_{-i,j} \in \mathcal{M}_{-i,j}$ for the restriction 
of $\mu$ to $N \setminus \{i,j\}$.
\medskip

\noindent Given \(\nu_{-i,j}\in\mathcal M_{-i,j}\), define
\(\nu_{-i,j}+(i,j)\) as the unique matching \(\mu\in\mathcal M\) such that
$
\mu(k)=\nu_{-i,j}(k)\quad\text{for every }k\in N\setminus\{i,j\},
$
$
\mu(i)=j$ and  $\mu(j)=i.
$
When \(i=j\), we write \(\nu_{-i}+(i)\) instead of
\(\nu_{-i,i}+(i,i)\).

\paragraph*{Conjectures.}
A conjecture of $i$ given partner $j \in P_i$ is any element 
$\mu_{-i,j} \in \mathcal{M}_{-i,j}$.
Given a preference profile $\succsim$, an agent $i\in N$ and a partner $j\in \mathcal{P}_i$ 
we denote by $\mathbf{C}(i,j,\succsim) \subseteq  \mathcal{M}_{-i,j}$  
the set of possible conjectures of $i$ when she is paired with $j$.
When $i=j$ we simply write $\mathcal{M}_{-i}$ and  
$\mathbf{C}(i,\succsim)\equiv \mathbf{C}(i,i,\succsim) \subseteq  \mathcal{M}_{-i}$.

 \begin{definition}[System of Conjectures]
 \label{conjectures}  Define $\mathscr{P}\equiv\{(i,j)\in N\times N|j\in \mathcal{P}_i\}$.
      A  system of conjectures is a correspondence $\mathbf{C}(\, \cdot \, ,\succsim):   \mathscr{P}\rightrightarrows \bigcup_{i,j\in\mathscr{P} }\mathcal{M}_{-i,j}$ such that for all $i,j\in N\times N$ with $j\in \mathcal{P}_i$ it holds that  $\mathbf{C}(i,j,\succsim) \subseteq  \mathcal{M}_{-i,j}$. A  system of conjectures is non-degenerate if for all $(i,j)\in N\times N$ with $j\in \mathcal{P}_i$, $\mathbf{C}(i,j,\succsim) \neq \emptyset$. 
       \end{definition}

\paragraph*{Conjectural Stability.} The next  definition introduces  a class of stability notion---conjectural stability or  $\mathbf{C}$-stability in short---
in which an agent deviates only if she improves under 
every conjecture she holds.

\begin{definition}{\bf (Conjectural Stability).\footnote{
    Conjectural stability presented here, differs from the notion introduced by \citet{Sasaki1996}. First, we do not include the matching of conjecturer agents as a part of their conjectures. Then, in \citet{Sasaki1996},   singles are not permitted and Condition 1 is  replaced by a  notion requiring  that any conjecturally stable matching $\mu$ is  conjectured by all pairs that are matched according to $\mu$.  Notably, we will show (see \Cref{admiStable}) that the solutions we propose in this paper satisfy this requirement.}}
\label{stability}
Fix a matching game with externalities  $\Gamma:=\langle A,B,\succsim\rangle$ and a system of conjectures $\mathbf{C}$. A matching $\mu$ is  $\mathbf{C}$-stable if the following conditions hold:
\begin{enumerate}
 \item $\mathbf{C}$-individual rationality:
  $\nexists  k\in   N$  such that $\nu_{-k}+(k)\succ_k \mu $
    $\forall \nu_{-k}\in \mathbf{C}(k,\succsim)$

     \item  $\mathbf{C}$-unblocking:
    $\nexists(a,b)\in A\times B$  such that  $\nu_{-a,b}+(a,b)\succ_{a}\mu$ and $\nu_{-a,b}+(a,b)\succ_b \mu$
$\forall \nu_{-a,b}\in \mathbf{C}(a,b,\succsim)$

     \end{enumerate}
\end{definition}

\begin{remark}
    Each $\mathbf C(i,j,\succeq)$ is a possibility set: it lists the residual 
matchings that $i$ deems possible when paired with $j$, with no probability 
weights attached. Consistently, the procedure of \Cref{procedure} 
starts from $\mathbf C^{0}(i,j,\succeq)=\mathcal M_{-i,j}$, the set of all conceivable 
conjectures, and refines it by iterated elimination.
\end{remark}

\begin{remark}
    Throughout, conjecture systems are symmetric on pairs: 
$\mathbf C(i,j,\succeq)=\mathbf C(j,i,\succeq)$ for every feasible pair $(i,j)$. This is not an 
assumption but a property of the systems we study:  it holds at $\mathbf C^{0}$ since 
$\mathcal M_{-i,j}=\mathcal M_{-j,i}$, and is preserved by the elimination of 
\Cref{procedure}, whose dominance condition is 
symmetric in the two members of a pair.

\end{remark}

\begin{remark}[Gale-Shapley Stability]
     For every non-degenerate system of conjectures $\mathbf C$, $\mathbf C$-stability generalizes 
Gale--Shapley stability: when there are no externalities, the two coincide.
    
\end{remark}

\begin{remark}[P-stability]
\label{P-stability}
P-stability \citep{Sasaki1986}—the predominant notion in the literature
\citep[e.g.,][]{Roth1990}—is the special case of \Cref{stability} in which a
deviating agent expects no reaction: the former partner is left single, and
everyone else stays matched as in $\mu$. Formally, $\mu$ is P-stable if it is
$\mathbf C^P_\mu$-stable for the singleton system of conjectures given by:
\begin{enumerate}
 \item Single deviation: for $k\in N$,
 $\mathbf C^P_\mu(k,\succeq)=\{\mu^P_{-k}\}$, where
 $\mu^P_{-k}(\mu(k))=\mu(k)$ and $\mu^P_{-k}(j)=\mu(j)$ for all
 $j\notin\{k,\mu(k)\}$.

 \item Pair deviation: for $(a,b)\in A\times B$ with $\mu(a)\neq b$,
 $\mathbf C^P_\mu(a,b,\succeq)=\{\mu^P_{-a,b}\}$, where
 $\mu^P_{-a,b}(\mu(a))=\mu(a)$, $\mu^P_{-a,b}(\mu(b))=\mu(b)$, and
 $\mu^P_{-a,b}(j)=\mu(j)$ for all $j\notin\{a,b,\mu(a),\mu(b)\}$.\footnote{When
 the deviating agent is already single under $\mu$ (that is, $\mu(k)=k$,
 $\mu(a)=a$, or $\mu(b)=b$), the condition on the corresponding former partner is
 vacuous, since that partner coincides with the deviating agent and is removed
 from the residual market.}
\end{enumerate}
\end{remark}

 The next proposition records the basic comparative logic of the framework: 
enlarging conjectures enlarges the set of stable matchings.  It explains why 
conjectures must be disciplined to obtain sharp predictions.
 All proofs are collected in the \hyperlink{Appendix}{\sc Appendix}. 

 \begin{proposition}
 \label{nestedness}
Let $\mathbf{C},\mathbf{C}'$ be two systems of conjectures such that 
$\mathbf{C}(i,j,\succsim)\subseteq \mathbf{C}'(i,j,\succsim)$ for all 
$i,j\in N$ with $j\in\mathcal{P}_i$. Then every 
$\mathbf{C}$-stable matching is $\mathbf{C}'$-stable.
 \end{proposition}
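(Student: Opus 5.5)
The plan is to argue by contraposition, separately for each of the two conditions in \Cref{stability}. The single fact driving everything is that a blocking condition is a universally quantified statement over the conjecture set. Such a statement becomes \emph{harder} to satisfy as the set grows. So anything that blocks under $\mathbf{C}'$ also blocks under the smaller $\mathbf{C}$.

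First, for $\mathbf{C}'$-individual rationality, suppose $\mu$ fails it. Then there is $k\in N$ with $\nu_{-k}+(k)\succ_k\mu$ for every $\nu_{-k}\in\mathbf{C}'(k,\succsim)$. Since $\mathbf{C}(k,\succsim)=\mathbf{C}(k,k,\succsim)\subseteq\mathbf{C}'(k,k,\succsim)$, the same strict preference holds for every $\nu_{-k}\in\mathbf{C}(k,\succsim)$. Hence $\mu$ also fails $\mathbf{C}$-individual rationality.

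Second, for $\mathbf{C}'$-unblocking, suppose some pair $(a,b)\in A\times B$ satisfies $\nu_{-a,b}+(a,b)\succ_a\mu$ and $\nu_{-a,b}+(a,b)\succ_b\mu$ for all $\nu_{-a,b}\in\mathbf{C}'(a,b,\succsim)$. Restricting to the subset $\mathbf{C}(a,b,\succsim)$ preserves both strict preferences. Hence $(a,b)$ blocks $\mu$ under $\mathbf{C}$ as well.

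Combining the two steps, if $\mu$ is $\mathbf{C}$-stable then neither violation can occur under $\mathbf{C}'$, so $\mu$ is $\mathbf{C}'$-stable.

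The only point needing care is an edge case rather than a real obstacle. If $\mathbf{C}(i,j,\succsim)=\emptyset$ for some pair, the blocking condition holds vacuously under $\mathbf{C}$. The implication above still goes in the right direction: vacuous blocking under $\mathbf{C}$ only makes $\mathbf{C}$-stability harder, which is consistent with the claim. So no non-degeneracy assumption is needed, and I would note this explicitly.
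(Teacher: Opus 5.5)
Your proof is correct and matches the paper's argument. Both argue by contraposition, split into the individual-rationality and pair-blocking cases, and observe that a blocking condition quantified over all of $\mathbf{C}'(i,j,\succsim)$ restricts to the subset $\mathbf{C}(i,j,\succsim)$. Your remark that the empty-set case needs no non-degeneracy assumption is accurate, and the paper likewise does not impose one here.
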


\section{Conjecture-Rationalizable Stability}
\label{rationalizableconjectures}

One of the most important  aspects in a theory of conjectural stability for matching games with externalities is  determining agents' conjectures. This section offers a systematic treatment of the problem for this purpose.



We introduce the notion of 
{\bf C}-dominated conjectures motivated by the idea of dominated strategies in non-cooperative game theory. The guiding idea is simple:  conjecture $\mu_{-i,j}$ is $\mathbf{C}$-dominated if $\mu_{-i,j}$ prescribes a dominated behavior for some agent in 
$N\setminus\{i,j\}$. 

To see the analogy with non-cooperative games, fix a system of conjectures $\mathbf{C}$ and an   agent $i$  holding a conjecture under which   some other  
agent $k$ is matched to $\mu(k)$.
 Think of \lq\lq $k$ matches $\mu(k)$'' as an action available to $k$, alongside the action 
\lq\lq $k$ stays single".
  Under this interpretation, \lq\lq $k$ matches $\mu(k)$'' is 
{\bf C}-dominated by \lq\lq $k$ stays single" if the latter is strictly preferred to the former  by  $k$  under all conjectures that $k$ may hold. When it is the case,  $i$'s conjecture is itself $\mathbf{C}$-dominated.


\begin{definition}[$\mathbf C$-Dominated Conjectures]
\label{conjecturedominance}
Fix $\Gamma=\langle A,B,\succsim\rangle$ and a system of conjectures $\mathbf C$.
Let $i\in N$, $j\in\mathcal P_i$, and $\mu_{-i,j}\in\mathcal M_{-i,j}$.
For notational simplicity, write $\mu(k)$ for $\mu_{-i,j}(k)$.
The conjecture $\mu_{-i,j}$ is \textbf{$\mathbf C$-dominated} if one of the following conditions holds:

\begin{description}
\item[(1)] There exists $k\in N\setminus\{i,j\}$ such that
$\nu_{-k}+(k)\succ_k \nu'_{-k,\mu(k)}+(k,\mu(k))$
for every $\nu_{-k}\in\mathbf C(k,\succsim)$ and every
$\nu'_{-k,\mu(k)}\in\mathbf C(k,\mu(k),\succsim)$.

\item[(2)] There exists $(a,b)\in(A\setminus\{i,j\})\times(B\setminus\{i,j\})$
such that
$\nu_{-a,b}+(a,b)\succ_a \nu'_{-a,\mu(a)}+(a,\mu(a))$
and
$\nu_{-a,b}+(a,b)\succ_b \nu''_{-b,\mu(b)}+(b,\mu(b))$
for every $\nu_{-a,b}\in\mathbf C(a,b,\succsim)$,
every $\nu'_{-a,\mu(a)}\in\mathbf C(a,\mu(a),\succsim)$, and every
$\nu''_{-b,\mu(b)}\in\mathbf C(b,\mu(b),\succsim)$.
\end{description}

A conjecture is \textbf{$\mathbf C$-undominated} if it is not $\mathbf C$-dominated.
\end{definition}

The concept of $\mathbf{C}$-dominance  allows us to   formalize  a  rationalizable procedure.

 \begin{definition}[Rationalizable procedure]
    \label{procedure}
   Fix $\Gamma:=\langle A,B, \succsim \rangle$ and consider the following procedure.
\begin{description}
        \item ({\tt Step $n=0$}) For each $(i,j)\in N\times N$ with $j\in \mathcal{P}_i$ let  $\mathbf{C}^0(i,j,\succsim)=\mathcal{M}_{-i,j}$
        \item ({\tt Step $n>0$})  Assume $\mathbf{C}^{n-1}$ has been defined and   for  each $(i,j)\in N\times N$ with $j\in \mathcal{P}_i$ let 
        $  \mathbf{C}^n(i,j,\succsim)=\{\mu_{-i,j}\in  \mathcal{M}_{-i,j}|\mu_{-i,j} \ \text{is} \ \mathbf{C}^{n-1}\text{-undominated}\}$
    \end{description}
    \noindent
    Let \(\mathbf C^\infty\) be the system of rationalizable conjectures defined by
$
\mathbf C^\infty(i,j,\succsim)
=
\bigcap_{n\in\mathbb N}\mathbf C^n(i,j,\succsim)
$
for every  \((i,j)\in\mathscr P\).

\end{definition}

We call the set of $\mathbf{C}^\infty$-stable matchings the set of {\bf conjecture-rationalizable stable matchings} and denote it $\mathbf{S}^\infty$.

The following example provides a simple application of the concepts of rationalizable conjectures and conjecture-rationalizable stable matchings. 
  \begin{example}[The Two Gentlemen of Verona]
\label{esempioTGoV}

Consider the final scene of Shakespeare's \emph{Two Gentlemen of Verona}, discussed in the introduction. 
Let $a_1$ be Valentine, $a_2$ be Proteus, and $b_1$ be Sylvia. 
There are three possible matchings: no one marries Sylvia, Valentine marries Sylvia, or Proteus marries Sylvia. 
We denote them by $\mu^0$, $\mu^1$, and $\mu^2$, respectively. 
 \hyperref[exA]{\textcolor{Gold1}{Figure 3}} displays them together with preferences.

\begin{figure}[ht]
\centering

\begin{tabular}{c c c c}

\begin{tikzpicture}[scale=0.75, transform shape,
    box/.style={draw,inner sep=7pt,rounded corners=5pt}]
    \node at (1, 0.2) {\Large$\mu^0$};

    \node[draw, circle] (a1) at (0, -1) {$a_1$};
    \node[draw, circle] (a2) at (0, -2) {$a_2$};
    \node[draw, circle] (b1) at (2, -2) {$b_1$};

    \node[box,fit=(a1)(a2)(b1)] {};
\end{tikzpicture}

&

\begin{tikzpicture}[scale=0.75, transform shape,
    box/.style={draw,mycolor,inner sep=7pt,rounded corners=5pt}]
    \node at (1, 0.2) {\Large$\mu^1$};

    \node[draw, circle] (a1) at (0, -1) {$a_1$};
    \node[draw, circle] (a2) at (0, -2) {$a_2$};
    \node[draw, circle] (b1) at (2, -2) {$b_1$};

    \draw (a1) -- (b1);
    \node[box,fit=(a1)(a2)(b1)] {};
\end{tikzpicture}

&

\begin{tikzpicture}[scale=0.75, transform shape,
    box/.style={draw,mycolor,inner sep=7pt,rounded corners=5pt}]
    \node at (1, 0.2) {\Large$\mu^2$};

    \node[draw, circle] (a1) at (0, -1) {$a_1$};
    \node[draw, circle] (a2) at (0, -2) {$a_2$};
    \node[draw, circle] (b1) at (2, -2) {$b_1$};

    \draw (a2) -- (b1);
    \node[box,fit=(a1)(a2)(b1)] {};
\end{tikzpicture}

&

\raisebox{1.2cm}{
\begin{tabular}[c]{c|c|c}
\multicolumn{2}{c|}{$A$} & \multicolumn{1}{|c}{$B$} \\\hline
$a_1$ & $a_2$ & $b_1$ \\\hline
\rule{0pt}{1em}
$\mu^2$ & $\mu^1$ & $\mu^1$ \\
$\mu^1$ & $\mu^2$ & $\mu^2$ \\
$\mu^0$ & $\mu^0$ & $\mu^0$
\end{tabular}
}

\end{tabular}

\caption{Agents' preferences and possible matchings.}
\label{exA}
\end{figure}

We first observe that $\mu^0_{-a_1}$ is $\mathbf{C}^{0}$-undominated as a conjecture
for $a_1$ when he is single.

Under $\mu^0_{-a_1}$ both $a_2$ and $b_1$ are single, therefore by \Cref{conjecturedominance},  this conjecture is $\mathbf {C}^{0}$-dominated if
$\mu^2_{-a_2,b_1}+(a_2,b_1)\succ_{a_2}\mu'_{-a_2,a_2}+(a_2)$
for every $\mu'_{-a_2}\in\mathbf C^{0}(a_2,\succsim)$. This fails
because $\mu^1_{-a_2}\in\mathbf C^{0}(a_2,\succsim)$ and
$\mu^1_{-a_2}+(a_2)=\mu^1\succ_{a_2}\mu^2=\mu^2_{-a_2,b_1}+(a_2,b_1)$.
Intuitively, when Valentine  is not matched with Sylvia he may conjecture  that also Proteus is not matched with her. This conjecture cannot be ruled out because Proteus may rationally remain single if he conjectures  that Valentine and Sylvia marry.

By a symmetric argument, $\mu^0_{-a_2}$ is $\mathbf{C}^{0}$-undominated
as a conjecture for $a_2$, since $\mu^2_{-a_1}\in\mathbf C^{0}(a_1,\succsim)$
and $\mu^2_{-a_1}+(a_1)=\mu^2\succ_{a_1}\mu^1=\mu^1_{-a_1,b_1}+(a_1,b_1)$.  

Iterating similar arguments for all conjectures in $\mathbf{C}^{0}$ it can be proved that  $\mathbf{C}^0=\mathbf{C}^{1}=\mathbf{C}^{\infty}$.

Under conjecture-rationalizable stability, a deviation is admissible only if it is profitable under all rationalizable conjectures of the deviating agents. Deviations from $\mu^1$ and $\mu^2$ are deterred by the rationalizable conjectures leading to $\mu^0$. 
 Both matchings are therefore 
conjecture-rationalizable stable.

\begin{remark}[Rational Expectations]
This example also separates our concept from the rational-expectations solution 
of \citet{Li1993}. Under rational expectations, a deviating pair anticipates that the remaining agents will rearrange  into a stable matching of the residual market.

Fix a matching game with externalities
\(\Gamma=\langle A,B,\succsim\rangle\). For every 
\((i,j)\in N\times N\) with \(j\in\mathcal P_i\), let
\(\Gamma_{-i,j}^{\,i,j}\) denote the reduced matching game obtained after
\((i,j)\)  forms and leaves the market.\footnote{Its set of
agents is \(N\setminus\{i,j\}\), and the preferences of every remaining agent
\(k\in N\setminus\{i,j\}\) are defined by
$
\nu_{-i,j}\succsim^{\,i,j}_{k}\nu'_{-i,j}
\quad\Longleftrightarrow\quad
\nu_{-i,j}+(i,j)\succsim_k \nu'_{-i,j}+(i,j),
$
for all \(\nu_{-i,j},\nu'_{-i,j}\in\mathcal M_{-i,j}\).
When \(i=j\), we write \(\Gamma_{-i}^{\,i}\) and use
\(\nu_{-i}+(i)\).}
A system of conjectures \(\mathbf C^{RE}\) is a
\textbf{rational-expectations system of conjectures} if, for every 
$(i,j)\in N\times N$ with \(j\in\mathcal P_i\),
$
\mathbf C^{RE}(i,j,\succsim)
=
\mathbf S^{RE}\!\left(\Gamma_{-i,j}^{\,i,j}\right),
$
where \(\mathbf S^{RE}\!\left(\Gamma_{-i,j}^{\,i,j}\right)\) is the set of
rational-expectations stable matchings of the reduced game
\(\Gamma_{-i,j}^{\,i,j}\).


In this  example, no stable matching with rational expectations exists. The matching  $\mu^1$  is not stable because Valentine would deviate remaining single: in 
the reduced economy of Sylvia and Proteus, the unique stable matching pairs 
them, which is Valentine's best outcome. The same holds for $\mu^2$ by symmetry. 
Rational expectations thus require that, whenever an agent conjectures herself 
single, all others assume she stays single. This grants the conjecturing agent a kind of commitment power that is hard to justify. 
\end{remark}
\end{example}


In contrast, conjecture-rationalizable stability always exists. This is our first main result.

\begin{theorem}[Existence]
\label{th1}
Fix $\Gamma:=\langle A,B, \succsim \rangle$. Then $\mathbf{S}^\infty\neq \emptyset$.
\end{theorem}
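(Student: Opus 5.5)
The plan is to work entirely with the limit system $\mathbf C^\infty$ and to exploit that it is a fixed point of the elimination operator in \Cref{procedure}. Then I would look for a matching $\mu$ whose own restrictions are rationalizable conjectures, and argue that such a matching cannot be blocked in every case.

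\emph{Step 1 (monotonicity and stabilization).} First I will show by induction that $\mathbf C^{n}(i,j,\succsim)\subseteq\mathbf C^{n-1}(i,j,\succsim)$ for all $(i,j)\in\mathscr P$. The key observation is that \Cref{conjecturedominance} is antitone in the system. If $\mathbf C\subseteq\mathbf C'$, then every $\mathbf C'$-dominated conjecture is $\mathbf C$-dominated, because both conditions (1) and (2) quantify universally over the conjecture sets. Since $\mathcal M$ is finite, the sequence stabilizes at some finite $n^*$. Hence $\mathbf C^\infty=\mathbf C^{n^*}$, and $\mathbf C^\infty(i,j,\succsim)$ is exactly the set of $\mathbf C^\infty$-undominated conjectures in $\mathcal M_{-i,j}$.

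\emph{Step 2 (consistent matchings).} Call $\mu$ \emph{consistent} if $\mu_{-i,\mu(i)}\in\mathbf C^\infty(i,\mu(i),\succsim)$ for every $i\in N$. I will prove that consistent matchings exist. The idea is to build one greedily. At each stage, pick an agent or pair whose current action is not $\mathbf C^\infty$-dominated, fix it, and recurse on the residual market. The fixed-point property from Step 1 then makes the resulting full matching undominated from every agent's point of view. The point to check is that domination of a conjecture only refers to the behaviour of agents inside it. Therefore a matching in which no agent and no pair plays a dominated action yields undominated restrictions for all $i$ simultaneously.

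\emph{Step 3 (no blocking).} Among consistent matchings, I will select one that is unblocked, arguing by contradiction. Suppose every consistent $\mu$ is blocked by some $k$ (or pair $(a,b)$) under all conjectures in $\mathbf C^\infty(k,\succsim)$ (resp.\ $\mathbf C^\infty(a,b,\succsim)$).

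Because blocking compares only with $\mu$, I need to upgrade it to the dominance comparison in \Cref{conjecturedominance}. That comparison is against \emph{all} conjectures in $\mathbf C^\infty(k,\mu(k),\succsim)$, not just $\mu_{-k,\mu(k)}$. I would do this by choosing $\mu$ extremal: for instance, for the blocking agent, $\mu$ should be $\succsim_k$-minimal among the matchings $\nu'+(k,\mu(k))$ with $\nu'\in\mathbf C^\infty(k,\mu(k),\succsim)$. This can be arranged because the sets are finite and, by Step 2, consistent matchings can be generated from any surviving conjecture. With that choice, a blocking $k$ would make every conjecture in which $k$ is matched to $\mu(k)$ $\mathbf C^\infty$-dominated via condition (1), and analogously via condition (2) for pairs. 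This would contradict $\mu_{-i,\mu(i)}\in\mathbf C^\infty$ for $i\neq k$ together with the fixed-point property.

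The main obstacle is precisely this upgrade in Step 3. The minimal choice for one agent may not be simultaneously minimal for all potential blockers. If a single extremal choice does not work, my fallback is a finite improvement argument. Each time a consistent $\mu$ is blocked, I would replace it by a consistent matching built around the blocking agent or pair, and show, using finiteness and the fixed point, that an infinite chain would force some surviving conjecture to be $\mathbf C^\infty$-dominated.
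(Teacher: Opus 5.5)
Your Step 1 is fine, and your observation in Step 2 is correct: a matching with no $\mathbf C^\infty$-dominating agent or pair has all its restrictions in $\mathbf C^\infty$, by the fixed point. The gap is in producing such a matching and, above all, in Step 3.

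In Step 3 you try to upgrade blocking to $\mathbf C^\infty$-dominance, and this does not work. If $\mu$ is $\succsim_k$-minimal, you only get $\nu+(k)\succ_k\mu$ and $\nu'+(k,\mu(k))\succsim_k\mu$. These two relations do not chain into $\nu+(k)\succ_k\nu'+(k,\mu(k))$. Condition (1) of \Cref{conjecturedominance} would need $\mu$ to be $\succsim_k$-\emph{maximal}, and that cannot in general be arranged for all potential blockers at once. The improvement-path fallback has no potential function, and blocking dynamics in two-sided markets can cycle even without externalities, so it does not close the argument. The greedy construction in Step 2 is also not well defined. Whether the match of $(a,b)$ is dominated depends, through condition (2), on the partners of agents not yet fixed. ``No dominating pair'' is a two-sided stability requirement, and greedy fixing does not deliver it when preferences cycle.

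The missing idea is to stop targeting $\mathbf C^\infty$-dominance and instead compare worst case with worst case in an auxiliary market without externalities.
\begin{itemize}
\item For each feasible $(i,j)$, pick $c_{ij}\in\mathbf C^\infty(i,j,\succsim)$ that is $\succsim_i$-worst. Let $i$ rank partner $j$ by $c_{ij}+(i,j)$, and take a Gale--Shapley stable $\mu^*$ of this market.
\item Dominance implies a strict worst-versus-worst preference. So stability of $\mu^*$ yields exactly your consistency property.
\item Consistency then gives $\mu^*\succsim_k c_{k\mu^*(k)}+(k,\mu^*(k))$. A blocking $k$ would give $c_{kk}+(k)\succ_k\mu^*\succsim_k c_{k\mu^*(k)}+(k,\mu^*(k))$, violating individual rationality in the auxiliary market. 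The pair case is analogous.
\end{itemize}
This is how \Cref{lemmaDknonempty} makes ``minimality'' work for everyone simultaneously: $\mu^*$ itself need not be extremal, only sandwiched above each agent's worst conjecture.

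You also never show that $\mathbf C^\infty(i,j,\succsim)\neq\emptyset$ for every feasible $(i,j)$. This is needed both to choose $c_{ij}$ and because an empty set makes the corresponding deviation vacuously profitable in \Cref{stability}. The paper obtains it separately by induction on $n$ (\Cref{Lem_useful}). For each $(i,j)$, it takes as an undominated conjecture a stable matching of the residual market on $N\setminus\{i,j\}$ under the same worst-case preferences.
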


We next characterize the system of rationalizable conjectures as a fixed point. 
The notion of self-undominance below mirrors the self-stabilizing property of 
\citet{Liu2014, Wang2023}, itself inspired by Pearce's best-reply set 
\citep{Pearce1984}.

 A system of conjecture $\mathbf{C}$ is self-undominated if every conjecture of every agent is $\mathbf{C}$-undominated.

\begin{definition}[Self-undominance] Fix $\Gamma:=\langle A,B, \succsim  \rangle$. 
    A system of conjectures $\mathbf{C}$ is said  \textbf{self-undominated} if for all $i,j\in N\times N$ with $j\in \mathcal{P}_i$ and $\mu_{-i,j}\in \mathcal{M}_{-i,j}$, it holds that  $\mu_{-i,j}\in \mathbf{C}(i,j,\succsim)$ only if it  is $\mathbf{C}$-undominated.
\end{definition}

The following fixed-point characterization shows that the system of rationalizable conjectures  is well-behaved.

\begin{proposition}[Fixed-Point characterization]
\label{th2}
It holds that: 
    \begin{enumerate}
        \item If $\mathbf{C}$ is self-undominated, then $\mathbf{C}(i,j,\succsim)\subseteq \mathbf C^\infty(i,j,\succsim)$  for all $i,j\in N\times N$ with  $j\in \mathcal{P}_i$.
        \item The system of rationalizable conjectures $\mathbf{C}^\infty$ is self-undominated. 
    \end{enumerate}
Hence, $\mathbf{C}^\infty$ is the largest self-undominated system of conjectures.
\end{proposition}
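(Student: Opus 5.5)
The plan is to exploit one structural fact: $\mathbf C$-dominance is monotone in the system $\mathbf C$. Enlarging the sets $\mathbf C(k,\cdot,\succsim)$ only adds universal quantifiers to conditions (1) and (2) of \Cref{conjecturedominance}. Hence if $\mathbf C\subseteq\mathbf C'$ pointwise, every $\mathbf C'$-dominated conjecture is also $\mathbf C$-dominated. Equivalently, every $\mathbf C$-undominated conjecture is $\mathbf C'$-undominated. So the operator $\Phi$, which sends $\mathbf C$ to the system of $\mathbf C$-undominated conjectures, is monotone with respect to pointwise inclusion. In this notation $\mathbf C^{n}=\Phi(\mathbf C^{n-1})$, and a system is self-undominated exactly when $\mathbf C\subseteq\Phi(\mathbf C)$.

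\emph{Step 1: the sequence is decreasing.} I would show $\mathbf C^{n}\subseteq\mathbf C^{n-1}$ by induction. The base case $\mathbf C^1\subseteq\mathbf C^0$ is trivial, since $\mathbf C^0(i,j,\succsim)=\mathcal M_{-i,j}$. If $\mathbf C^{n}\subseteq\mathbf C^{n-1}$, monotonicity gives $\mathbf C^{n+1}=\Phi(\mathbf C^n)\subseteq\Phi(\mathbf C^{n-1})=\mathbf C^n$. Each $\mathcal M_{-i,j}$ is finite and there are finitely many pairs in $\mathscr P$, so the sequence stabilizes at some finite $\bar n$. That gives $\mathbf C^\infty=\mathbf C^{\bar n}=\mathbf C^{\bar n+1}$.

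\emph{Part 1.} Let $\mathbf C$ be self-undominated. I would prove $\mathbf C\subseteq\mathbf C^n$ for all $n$ by induction. The case $n=0$ is trivial. If $\mathbf C\subseteq\mathbf C^{n-1}$, then $\mathbf C\subseteq\Phi(\mathbf C)\subseteq\Phi(\mathbf C^{n-1})=\mathbf C^n$, using self-undominance and then monotonicity. Intersecting over $n$ gives $\mathbf C\subseteq\mathbf C^\infty$.

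\emph{Part 2.} By stabilization, $\Phi(\mathbf C^\infty)=\Phi(\mathbf C^{\bar n})=\mathbf C^{\bar n+1}=\mathbf C^\infty$. So every conjecture in $\mathbf C^\infty$ is $\mathbf C^\infty$-undominated, which is self-undominance. Combining the two parts, $\mathbf C^\infty$ is self-undominated and contains every self-undominated system, so it is the largest one.

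The main obstacle is making the monotonicity claim airtight when sets become empty. If some $\mathbf C(k,\cdot,\succsim)$ is empty, the universal quantifiers in \Cref{conjecturedominance} hold vacuously. Such a set is then at its smallest, so the dominance conditions are easiest to satisfy there. This runs in the correct direction for antitonicity of "dominated", so the argument is unaffected, but I would state it explicitly. The finiteness step also deserves a sentence. If one prefers to avoid it, part 2 can be proved directly: suppose $\mu_{-i,j}\in\mathbf C^\infty$ were $\mathbf C^\infty$-dominated, and use finiteness to find an $n$ with $\mathbf C^{n}=\mathbf C^\infty$. Then $\mu_{-i,j}$ would be $\mathbf C^{n}$-dominated and so excluded from $\mathbf C^{n+1}$, a contradiction.
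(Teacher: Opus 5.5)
Your proposal is correct and follows the paper's proof essentially step for step. Your monotonicity of $\Phi$ and the decreasing-sequence induction are exactly \Cref{lemmamonotonicity} and \Cref{LemmaSubset}; Part 1 is the same induction $\mathbf C\subseteq\Phi(\mathbf C)\subseteq\Phi(\mathbf C^{n-1})=\mathbf C^n$; and Part 2 is the same finite-stabilization argument $\Phi(\mathbf C^{\bar n})=\mathbf C^{\bar n+1}=\mathbf C^{\bar n}$.
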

\Cref{esempioTGoV} illustrates the fixed-point logic. Although $\mu^0_{-a_1}$ 
and $\mu^0_{-a_2}$ may seem fragile, both survive: each is sustained by a 
mutually reinforcing rationalizable conjecture. For instance, when Valentine 
remains single, the conjecture $\mu^0_{-a_1}$ says that Proteus also remains 
single (and Sylvia remains unmatched). Valentine cannot rule out this 
conjecture, because Proteus may rationally remain single if he conjectures 
$\mu^1_{-a_2}$, namely that Valentine and Sylvia marry. Proteus, in turn, 
cannot rule out $\mu^1_{-a_2}$, because Valentine may rationally marry Sylvia 
if he conjectures $\mu^0_{-a_1}$. The two conjectures thus sustain each other: 
$\mu^0_{-a_1}\in \mathbf C^\infty(a_1,\succsim)$ is supported by 
$\mu^1_{-a_2}\in \mathbf C^\infty(a_2,\succsim)$, and conversely. Rationalizable 
conjectures therefore survive not in isolation, but as part of a 
self-undominated system. \hyperref[fig:beliefcycles]{\textcolor{Gold1}{Figure 4}} depicts the two supporting cycles.

\begin{figure}[ht]
\centering

\begin{tikzpicture}[
    conjecture/.style={
        draw,
        rounded corners=5pt,
        inner sep=7pt,
        align=center,
        minimum width=3.4cm,
        minimum height=1.1cm,
        fill=gray!8
    },
    support/.style={->, thick, bend left=22},
    every node/.style={font=\small},
    >=stealth
]

\node[conjecture] (v0) at (0,0) 
{$\mu^0_{-a_1}$\\
{\footnotesize Valentine: no marriage}};

\node[conjecture] (p1) at (0,-2.4) 
{$\mu^1_{-a_2}$\\
{\footnotesize Proteus: Valentine--Sylvia}};

\draw[support] (v0) to node[right] {\footnotesize supports} (p1);
\draw[support] (p1) to node[left] {\footnotesize supports} (v0);

\node[align=center] at (0,-3.8) 
{\footnotesize Cycle sustaining $\mu^0_{-a_1}$};

\node[conjecture] (p0) at (5.2,0) 
{$\mu^0_{-a_2}$\\
{\footnotesize Proteus: no marriage}};

\node[conjecture] (v2) at (5.2,-2.4) 
{$\mu^2_{-a_1}$\\
{\footnotesize Valentine: Proteus--Sylvia}};

\draw[support] (p0) to node[right] {\footnotesize supports} (v2);
\draw[support] (v2) to node[left] {\footnotesize supports} (p0);

\node[align=center] at (5.2,-3.8) 
{\footnotesize Cycle sustaining $\mu^0_{-a_2}$};

\end{tikzpicture}

\caption{Mutual support of conjectures. The conjectures leading to no marriage survive because each is sustained by another rationalizable conjecture.}
\label{fig:beliefcycles}
\end{figure}

\citet{Sasaki1996} argue that a natural requirement for stability is that the matching under consideration be compatible with agents' conjectures. 
In our terminology, a matching $\mu$ is $\mathbf C$-consistent if each agent regards as possible the residual matching generated by $\mu$. 
Unlike \citet{Sasaki1996}, who build consistency into the definition of stability, we obtain it as a result.\footnote{\citet{Sasaki1996} refer to this requirement as $\mathbf C$-admissibility. Since, in the non-cooperative game theory literature, admissibility is typically associated with undominated behavior, we use the term $\mathbf C$-consistency to avoid confusion. This notion is unrelated to undominance and, in particular, to our notion of $\mathbf C$-dominance.}

\begin{definition}[$\mathbf{C}$-consistency]
A matching \(\mu\) is \(\mathbf C\)-consistent if, for every \(i\in N\),
$
\mu_{-i,\mu(i)}\in \mathbf C(i,\mu(i),\succsim).
$
\end{definition}
We claim that any conjecture-rationalizable stable matching is
\(\mathbf{C}^{\infty}\)-consistent. That is, if \(\mu\) is stable with
rationalizable conjectures, then, for every agent \(i\), the conjecture
\(\mu_{-i,\mu(i)}\) is rationalizable.

\begin{proposition}[$\mathbf{C}^\infty$-consistency]
\label{admiStable}
    Fix $\Gamma:=\langle A,B, \succsim \rangle$. If $\mu\in\mathbf{S}^\infty$, then it is  $\mathbf{C}^\infty$-consistent.
\end{proposition}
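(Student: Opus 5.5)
The plan is to use the fixed-point characterization in \Cref{th2}. Fix $\mu\in\mathbf S^\infty$. Define an enlarged system $\mathbf C'$ by adding to $\mathbf C^\infty$ the residual matchings generated by $\mu$:
$$\mathbf C'(i,j,\succsim)=\mathbf C^\infty(i,j,\succsim)\cup\{\mu_{-i,j}\}\quad\text{if } j=\mu(i),$$
and $\mathbf C'(i,j,\succsim)=\mathbf C^\infty(i,j,\succsim)$ otherwise. This system is symmetric on pairs, because $\mu_{-i,\mu(i)}=\mu_{-\mu(i),i}$. I will show that $\mathbf C'$ is self-undominated. Part 1 of \Cref{th2} then gives $\mathbf C'\subseteq\mathbf C^\infty$, so $\mu_{-i,\mu(i)}\in\mathbf C^\infty(i,\mu(i),\succsim)$ for every $i$. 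This is exactly $\mathbf C^\infty$-consistency.

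The first step is a monotonicity observation about \Cref{conjecturedominance}. Both clauses require a strict preference to hold for \emph{every} conjecture in the relevant sets. Enlarging the system therefore makes dominance harder: if $\mathbf C\subseteq\mathbf C'$ pointwise, every $\mathbf C$-undominated conjecture is $\mathbf C'$-undominated. By part 2 of \Cref{th2}, every element of $\mathbf C^\infty$ is $\mathbf C^\infty$-undominated, hence $\mathbf C'$-undominated. It remains to check only the added conjectures $\mu_{-i,\mu(i)}$.

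The second step, which is the main one, is to show that $\mu_{-i,\mu(i)}$ is $\mathbf C'$-undominated, arguing by contradiction with the stability of $\mu$. In the conjecture $\mu_{-i,\mu(i)}$, each remaining agent $k$ is matched to its $\mu$-partner $\mu(k)$.

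\emph{Clause (1).} Suppose some $k\notin\{i,\mu(i)\}$ satisfies $\nu_{-k}+(k)\succ_k\nu'+(k,\mu(k))$ for all $\nu_{-k}\in\mathbf C'(k,\succsim)$ and all $\nu'\in\mathbf C'(k,\mu(k),\succsim)$. By construction $\mu_{-k,\mu(k)}\in\mathbf C'(k,\mu(k),\succsim)$, and $\mu_{-k,\mu(k)}+(k,\mu(k))=\mu$. Taking $\nu'=\mu_{-k,\mu(k)}$ gives $\nu_{-k}+(k)\succ_k\mu$ for all $\nu_{-k}\in\mathbf C'(k,\succsim)\supseteq\mathbf C^\infty(k,\succsim)$. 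This violates $\mathbf C^\infty$-individual rationality of $\mu$ (vacuously so if the set is empty). If $\mu(k)=k$, then $\mu_{-k}\in\mathbf C'(k,\succsim)$ as well, and taking $\nu=\nu'=\mu_{-k}$ yields $\mu\succ_k\mu$, which is absurd.

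\emph{Clause (2).} Suppose the pair $(a,b)$ dominates. Plug in $\nu'=\mu_{-a,\mu(a)}$ and $\nu''=\mu_{-b,\mu(b)}$, both of which lie in $\mathbf C'$. This gives $\nu_{-a,b}+(a,b)\succ_a\mu$ and $\nu_{-a,b}+(a,b)\succ_b\mu$ for all $\nu_{-a,b}\in\mathbf C'(a,b,\succsim)\supseteq\mathbf C^\infty(a,b,\succsim)$. This contradicts $\mathbf C^\infty$-unblocking of $\mu$. In the degenerate case $\mu(a)=b$, we have $\mu_{-a,b}\in\mathbf C'(a,b,\succsim)$, and plugging it in yields $\mu\succ_a\mu$, again absurd.

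I expect the main obstacle to be the bookkeeping rather than any deep step. Three points need care: the degenerate cases where the dominating agent or pair is already single or already matched together in $\mu$; the fact that "for every" over a possibly empty set must still contradict stability; and checking that $\mathbf C'$ is a legitimate, pair-symmetric system. Once the monotonicity observation is in place, the conclusion follows directly from \Cref{th2}.
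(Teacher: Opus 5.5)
Your proof is correct, but it takes a different route from the paper. The paper argues by direct induction on the elimination stages. It shows $\mu_{-i,\mu(i)}\in\mathbf C^n(i,\mu(i),\succsim)$ for every $n$. At stage $n$ it uses the inductive hypothesis that all on-path residuals of $\mu$ lie in $\mathbf C^n$, together with the definitional inclusion $\mathbf C^\infty\subseteq\mathbf C^n$, so that any $\mathbf C^n$-dominance of an on-path residual becomes a $\mathbf C^\infty$-block of $\mu$.

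You instead enlarge $\mathbf C^\infty$ by the on-path residuals, check that the enlarged system is self-undominated, and invoke the maximality in part~1 of \Cref{th2}. There is no circularity, since \Cref{th2} is proved without appeal to consistency. The core computation is the same in both arguments: put the on-path residual into the $\nu'$ and $\nu''$ slots, restrict the $\nu$ slot to $\mathbf C^\infty$, and obtain a violation of $\mathbf C^\infty$-stability. What differs is the packaging.

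Your version is a Pearce-style best-reply-set argument. It isolates the principle that conjectures which can be added to $\mathbf C^\infty$ without destroying self-undominance are already rationalizable, and it avoids inductive bookkeeping. The cost is that it relies on both parts of \Cref{th2} and on the monotonicity of dominance (\Cref{lemmamonotonicity}). Part~2 of \Cref{th2} in turn relies on the finite stabilization of the procedure. The paper's induction uses only $\mathbf C^\infty=\bigcap_n\mathbf C^n$ and is self-contained.

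Your explicit handling of the degenerate cases $\mu(k)=k$ and $\mu(a)=b$ is a small gain in rigor. The paper's case analysis simply restricts to $\mu(k)\neq k$ and $\mu(a)\neq b$. That restriction is justified there only implicitly, by the inductive hypothesis placing $\mu_{-k}$ or $\mu_{-a,b}$ in the relevant conjecture set.
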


We conclude this section by pointing out that not all conjecture-rationalizable stable matchings  are Pareto efficient, but at least one is.\footnote{An example demonstrating the existence of a Pareto dominated conjecture-rationalizable stable matching   is available upon request.}

\begin{proposition}[Pareto efficiency]
\label{Pareto}
    Fix $\Gamma:=\langle A,B, \succsim \rangle$. Then, there is  a conjecture-rationalizable stable  matching that is Pareto efficient.
\end{proposition}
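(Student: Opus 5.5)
The plan is to combine \Cref{th1} with a monotonicity property of $\mathbf C$-stability under Pareto improvements. The key feature is that the system of rationalizable conjectures $\mathbf C^\infty$ is defined from the preference profile alone and does not depend on the matching being tested. So when we move from one matching to a Pareto-better one, the conjectures used to evaluate deviations stay the same; only the status-quo benchmark changes, and it changes in every agent's favour.

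\emph{Step 1 (upward closure).} Let $\mu\in\mathbf S^\infty$, and let $\nu$ weakly Pareto dominate $\mu$, i.e.\ $\nu\succsim_k\mu$ for all $k\in N$. We show that $\nu\in\mathbf S^\infty$, treating the two conditions of \Cref{stability} separately.

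For $\mathbf C^\infty$-individual rationality, suppose some $k$ satisfies $\nu'_{-k}+(k)\succ_k\nu$ for every $\nu'_{-k}\in\mathbf C^\infty(k,\succsim)$. Since $\nu\succsim_k\mu$, transitivity of $\succsim_k$ gives $\nu'_{-k}+(k)\succ_k\mu$ for every such conjecture. Then $k$ blocks $\mu$, a contradiction.

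For $\mathbf C^\infty$-unblocking, suppose a pair $(a,b)$ satisfies, for every $\nu'_{-a,b}\in\mathbf C^\infty(a,b,\succsim)$,
\[
\nu'_{-a,b}+(a,b)\succ_a\nu \quad\text{and}\quad \nu'_{-a,b}+(a,b)\succ_b\nu .
\]
Using $\nu\succsim_a\mu$ and $\nu\succsim_b\mu$, the same strict preferences hold against $\mu$. Hence $(a,b)$ blocks $\mu$, again a contradiction.

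This argument uses only transitivity and the fact that $\mathbf C^\infty$ is fixed. In particular it does not need non-degeneracy: if a conjecture set is empty, the corresponding block is vacuously present at both $\mu$ and $\nu$ or at neither.

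\emph{Step 2 (selecting an efficient element).} By \Cref{th1}, pick some $\mu\in\mathbf S^\infty$. Let $U(\mu)=\{\nu\in\mathcal M:\nu\succsim_k\mu\ \forall k\in N\}$. This set is finite and non-empty, since it contains $\mu$. Strict Pareto dominance (weakly better for all agents, strictly better for at least one) is irreflexive and transitive, hence acyclic on a finite set. So $U(\mu)$ has an element $\nu^*$ that is not strictly Pareto dominated by any element of $U(\mu)$.

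Now suppose some $\lambda\in\mathcal M$ Pareto dominated $\nu^*$. Then $\lambda\succsim_k\nu^*\succsim_k\mu$ for all $k$, so $\lambda\in U(\mu)$, contradicting the choice of $\nu^*$. Hence $\nu^*$ is Pareto efficient in all of $\mathcal M$. By Step 1, $\nu^*\in\mathbf S^\infty$.

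\emph{Expected obstacle.} I expect no substantive obstacle. The only delicate points are bookkeeping. First, one must check that $\mathbf C^\infty$ really does not depend on the status-quo matching; this is immediate from \Cref{procedure}, in contrast with the $\mathbf C^P_\mu$ systems of \Cref{P-stability}, where the argument would fail. Second, one must fix the notion of Pareto efficiency. If it is instead defined via strict improvement for all agents (weak efficiency), the same argument works verbatim, because strong efficiency implies weak efficiency.
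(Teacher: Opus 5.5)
Your route is the paper's: Step~1 is \Cref{Paretoimpr}, and Step~2 is the finite iteration the paper uses to conclude (your maximal-element argument in $U(\mu)$ just makes that iteration explicit). The problem is your claim that Step~1 needs only transitivity and the status-quo independence of $\mathbf C^\infty$. That claim holds for the literal wording of \Cref{stability}. However, the paper's proofs (of \Cref{nestedness}, \Cref{lemmaDknonempty}, and \Cref{Paretoimpr} itself) read the conditions more narrowly. A violation of $\mathbf C$-individual rationality is a deviation by a \emph{matched} agent, $\mu(k)\neq k$, and a $\mathbf C$-block is a deviation by a pair not already matched, $\mu(a)\neq b$. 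This matches the verbal definition: no matched agent prefers to be single, and no pair prefers each other to their current partners. Under that reading, suppose $k$ blocks $\nu$. Then $\nu(k)\neq k$, and transitivity gives $\nu'_{-k}+(k)\succ_k\mu$ for all $\nu'_{-k}\in\mathbf C^\infty(k,\succsim)$. But this is a block of $\mu$ only if $\mu(k)\neq k$. Likewise $(a,b)$ blocks $\mu$ only if $\mu(a)\neq b$. Nothing in your argument rules out $\mu(k)=k$ or $\mu(a)=b$.

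The missing ingredient is \Cref{admiStable}: every $\mu\in\mathbf S^\infty$ is $\mathbf C^\infty$-consistent. If $\mu(k)=k$, then $\mu_{-k}\in\mathbf C^\infty(k,\succsim)$. Applying your inequality to this conjecture gives $\mu=\mu_{-k}+(k)\succ_k\nu\succsim_k\mu$, which is impossible. If $\mu(a)=b$, then $\mu_{-a,b}\in\mathbf C^\infty(a,b,\succsim)$ gives $\mu\succ_a\nu\succsim_a\mu$, again impossible. This is precisely the extra step in the paper's proof of \Cref{Paretoimpr}.

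For the same reason, your remark about empty conjecture sets fails under this reading. An empty $\mathbf C^\infty(k,\succsim)$ would let $k$ block every matching in which $k$ is matched. It would therefore block $\nu$ but not a $\mu$ with $\mu(k)=k$. Consistency (or the non-degeneracy in \Cref{Lem_useful}) is what excludes this. With this step added, your proof is complete and coincides with the paper's.
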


\section{An application: matching with couples}
\label{couple}

One of the most significant shifts in the  labor market over the twentieth century was the growing participation of married women in the workforce, giving rise to many two-career households. Couples searching for two jobs must navigate a difficult coordination process, both between themselves and with prospective employers.

The matching-with-couples problem \citep{Roth1990,Klaus2005} 
is a special case of our framework. 
Agents $s$ and $s'$ form a couple if $s$'s ranking depends on $s'$'s assignment: 
there exist $\mu,\mu'$ with $\mu(s)=\mu'(s)$ and $\mu\succ_s\mu'$, where $\mu$ 
and $\mu'$ assign $s'$ to different partners. No modification of the model is needed.

It is well established that, in the presence of couples, the set of P-stable matchings may be empty \citep{Roth1990}. 
\citet{Klaus2005} further show that, except when all couples have  (weakly) responsive preferences\footnote{Informally, a couple’s preferences are responsive if the unilateral improvement of one partner’s
job is considered beneficial for the couple as well. \citet[See][p.80]{Klaus2005}. This assumption is often violated in practice, as shown by \citet{Kojima2013}. } to their individual preferences, one can always construct a matching game  with no P-stable matchings. We consider a simple example adapted from \citet{Roth2008} that admits no P-stable matching. We show that a    corresponding conjecture-rationalizable stable matching exists and  constitutes  a  sensible prediction.

\begin{figure}[ht]
    \centering
      \begin{tabular}
[c]{c|c|c|c}%
\multicolumn{2}{c|}{$A$} & \multicolumn{2}{|c}{$B$}
\\\hline
$h_1$ & $h_2$ & $(s_1,s_2)$ & $s_3$  \\\hline
\rule{0pt}{1em} 
$s_1$ & $s_3$ & $(h_1,h_2)$  &$h_1$ \\
$s_3$ & $s_2$ & $(\emptyset,h_2)$  & $h_2$\\
$(\emptyset)$ & $(\emptyset)$ & $(\emptyset,\emptyset)$ & $(\emptyset) $

\end{tabular}
\caption{Preferences of hospitals and medical residents. Residents $(s_1,s_2)$ form a couple and have joint preferences over the pairs of jobs they may obtain.}
\label{excouple}
\end{figure}

In this example there are two hospitals offering one job each,  and three medical residents: two medical residents ($s_1,s_2$) form a couple and have aligned preferences: their first option is $s_1$ being employed by $h_1$ and $s_2$  employed by $h_2$ and the second option is  $s_1$ being unemployed and $s_2$  employed by $h_2$. The two hospitals and $s_3$ do not exhibit externalities in their preferences and they only care about their match (see \hyperref[excouple]{\textcolor{Gold1}{Figure 5}}).
We can easily  translate this example into  our setting where agents’ preferences are defined over matchings and  admit indifferences. For instance, medical resident 3 prefers any matching in which she is employed by hospital 1 to any matching in which she is not, but she is indifferent among all matchings where she works at hospital 1. Similarly, medical residents 1 and 2 are indifferent among all matchings in which resident 2 is employed by hospital 2 and resident 1 remains unemployed; in particular, they are indifferent as to whether hospital 1 hires resident 3 or not.

This market admits no P-stable matching. Restricting attention to individually rational matchings, every matching in which $s_3$ remains unemployed is blocked by the pair $(h_2,s_3)$. The matching in which $s_3$ is matched with $h_2$ while $h_1$ remains vacant is blocked by $(h_1,s_3)$. The matching in which $s_3$ is matched with $h_1$ while $h_2$ remains vacant is blocked by $(h_2,s_2)$. Finally, the matching in which $s_3$ is matched with $h_1$ and $s_2$ with $h_2$ is blocked by $(h_1,s_1)$. Hence, no P-stable matching exists.

Nevertheless, the matching
$(h_1,s_3;, h_2,s_2;, \emptyset,s_1)$ is conjecture-rationalizable stable. The potential deviation by $(h_1,s_1)$ is deterred by the rationalizable conjecture
$(h_2,s_3;, \emptyset,s_2)\in \mathbf C^\infty(s_1,h_1,\succsim).$
Intuitively, resident $s_1$ fears that, if she accepted a position at hospital $h_1$, hospital $h_2$ would instead hire $s_3$, leaving her partner $s_2$ unemployed. Since this outcome is less desirable for the couple than the status quo, the deviation is not profitable.

We conclude this section by observing that, even in the domain of (weakly) responsive preferences, the set of P-stable matchings and the set of conjecture-rationalizable stable matchings may be disjoint. To see this, consider a slight modification of the previous example in which the couple $c\equiv(s_1,s_2)$ has the following responsive preferences:
$
(h_1,h_2)\succ_c(\emptyset,h_2)\succ_c(h_1,\emptyset)\succ_c(\emptyset,\emptyset).
$

The matching
$(h_1,s_1;, h_2,s_3;, \emptyset,s_2)$
is the unique P-stable matching, whereas
$(h_1,s_3;, h_2,s_2;, \emptyset,s_1)$
is the unique conjecture-rationalizable stable matching.
First, consider the matching
$(h_1,s_1;, h_2,s_3;, \emptyset,s_2)$.
This matching is P-stable but not conjecture-rationalizable stable. Since both hospitals employ their most preferred residents, neither hospital has an incentive to deviate. Resident $s_1$ also has no profitable deviation under P-stability, because she conjectures that rejecting employment at $h_1$ would simply leave the position vacant. However, the matching is not conjecture-rationalizable stable. The only rationalizable conjecture available to $s_1$ when unemployed is that $s_3$ is employed by $h_1$ while $s_2$ is employed by $h_2$. Under this conjecture, remaining unemployed is strictly better than accepting employment at $h_1$, making the deviation profitable.
Now consider the matching
$(h_1,s_3;, h_2,s_2;, \emptyset,s_1)$. This matching is conjecture-rationalizable stable but not P-stable. It is conjecture-rationalizable stable because $s_1$ can hold a rationalizable conjecture according to which, after accepting a position at $h_1$, hospital $h_2$ hires $s_3$, leaving $s_2$ unemployed. Consequently, accepting employment at $h_1$ may induce a worst matching for the couple. Nevertheless, the matching is not P-stable because $(s_1,h_1)$ constitutes a blocking pair. Under the conjectures imposed by P-stability, $s_1$ assumes that if she accepts a position at $h_1$, resident $s_2$ retains her position at $h_2$, making the deviation profitable.

\section{Rationalizable Matchings}

\Cref{rationalizableconjectures} has been devoted to studying a stability notion based on rationalizable conjectures. A natural question is whether it is possible to extend the rationalization procedure from the system of conjectures to the set of matchings $\mathcal{M}$. The answer is affirmative, and this extension leads to the  definition of rationalizable matchings.

 We first present an analogous definition for \textbf{C}-dominated matchings.

\begin{definition}[$\mathbf{C}$-Dominated Matching]\footnote{The attentive reader will have noticed that this definition  is similar to the definition of  $\mathbf{C}$-stability  (\Cref{stability}). However, there is a fundamental difference.  In $\mathbf{C}$-stability agents compare all matchings they believe when they deviate from a {\it status quo},  with the {\it status quo} itself,  while  in $\mathbf{C}$-dominance agents compare all the matchings they conjecture when they \lq\lq play an action" with  all the matchings they conjecture when they \lq\lq play a different action".}
\label{dominance}
Fix $\Gamma:=\langle A,B, \succsim \rangle$,  and a system of conjectures $\mathbf{C}$. A matching $\mu\in \mathcal{M}$ is $\mathbf{C}$-dominated if one of the following holds:
    \begin{description}
        \item[(1)] $\exists k\in N$ such that 
         $\nu_{-k}+(k)\succ_k \nu'_{-k,\mu(k)}+(k,\mu(k)) $,
          $\forall \nu_{-k}\in \mathbf{C}(k,\succsim)$ and   $\forall \nu'_{-k,\mu(k)}\in \mathbf{C}(k,\mu(k),\succsim)$. 
    \item[(2)]  $\exists(a,b) \in A \times B$ such that  $\nu_{-a,b}+(a,b)\succ_a \nu'_{-a,\mu(a)}+(a,\mu(a))$  and $\nu_{-a,b}+(a,b)\succ_b \nu''_{-b,\mu(b)}+ (b,\mu(b))$, $\forall \nu_{-a,b}\in \mathbf{C}(a,b,\succsim)$,   $\forall\nu'_{-a,\mu(a)}\in \mathbf{C}(a,\mu(a),\succsim)$ and  $\forall\nu''_{-b,\mu(b)}\in \mathbf{C}(b,\mu(b),\succsim)$.
    \end{description}
\end{definition}

The previous definition naturally leads to the notion of rationalizable matchings.

 \begin{definition}[Rationalizable Matchings] 
 \label{rat}
    Fix  $\Gamma:=\langle A, B,\succsim\rangle$. A matching $\mu$ is rationalizable if it is $\mathbf{C}^\infty$-undominated.
Let $\mathbf{M}^\infty$ denote the set of rationalizable matchings.
\end{definition}

The conceptual distinction between conjecture-rationalizable stability and rationalizability for matchings hinges on the following point. Conjecture-rationalizable stability implicitly assumes that all agents are aware of the putative matching, whereas rationalizability requires only that each agent knows the identity of their own partner.\footnote{This distinction will be explored further in \Cref{foundation}.} Rationalizability for matching is indeed a non-equilibrium notion.

\begin{theorem}[Refinement]
\label{thRef}
Fix $\Gamma:=\langle A,B, \succsim \rangle$. Then $\mathbf{S}^\infty\subseteq \mathbf{M}^\infty$.
\end{theorem}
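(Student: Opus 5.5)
The plan is to argue by contraposition, using \Cref{admiStable} as the key bridge. Fix $\mu\in\mathbf S^\infty$ and suppose, towards a contradiction, that $\mu$ is $\mathbf C^\infty$-dominated in the sense of \Cref{dominance}. The idea is that dominance quantifies over \emph{all} conjectures in $\mathbf C^\infty(k,\mu(k),\succsim)$ attached to the agent's current partner. By \Cref{admiStable}, the realized residual matching $\mu_{-k,\mu(k)}$ is one of those conjectures. Instantiating the universal quantifier at that particular conjecture collapses the dominance condition into exactly a violation of one of the two conditions of \Cref{stability} for $\mathbf C^\infty$.

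\textbf{Case (1).} There is $k\in N$ with $\nu_{-k}+(k)\succ_k \nu'_{-k,\mu(k)}+(k,\mu(k))$ for every $\nu_{-k}\in\mathbf C^\infty(k,\succsim)$ and every $\nu'_{-k,\mu(k)}\in\mathbf C^\infty(k,\mu(k),\succsim)$.
\begin{enumerate}
\item Since $\mu$ is $\mathbf C^\infty$-consistent, I may take $\nu'_{-k,\mu(k)}=\mu_{-k,\mu(k)}$.
\item By construction $\mu_{-k,\mu(k)}+(k,\mu(k))=\mu$.
\item Hence $\nu_{-k}+(k)\succ_k\mu$ for all $\nu_{-k}\in\mathbf C^\infty(k,\succsim)$, which contradicts $\mathbf C^\infty$-individual rationality of $\mu$.
\end{enumerate}
The sub-case $\mu(k)=k$ needs no separate treatment: the same substitution applies and still yields the forbidden statement. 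If $\mathbf C^\infty(k,\succsim)$ happened to be empty, the quantifiers in the dominance and stability conditions match verbatim. So the argument goes through without invoking non-degeneracy, though non-degeneracy could be cited from the existence analysis if desired.

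\textbf{Case (2).} There is $(a,b)\in A\times B$ such that, for all $\nu_{-a,b}\in\mathbf C^\infty(a,b,\succsim)$ and all $\nu'\in\mathbf C^\infty(a,\mu(a),\succsim)$, $\nu''\in\mathbf C^\infty(b,\mu(b),\succsim)$, both $a$ and $b$ strictly prefer $\nu_{-a,b}+(a,b)$.
\begin{enumerate}
\item Again use \Cref{admiStable} to set $\nu'=\mu_{-a,\mu(a)}$ and $\nu''=\mu_{-b,\mu(b)}$.
\item These give $\nu'+(a,\mu(a))=\nu''+(b,\mu(b))=\mu$.
\item Therefore $\nu_{-a,b}+(a,b)\succ_a\mu$ and $\nu_{-a,b}+(a,b)\succ_b\mu$ for every $\nu_{-a,b}\in\mathbf C^\infty(a,b,\succsim)$, contradicting $\mathbf C^\infty$-unblocking.
\end{enumerate}
If $\mu(a)=b$, the substitution produces $\mu\succ_a\mu$ as soon as $\mathbf C^\infty(a,b,\succsim)$ contains $\mu_{-a,b}$, which it does by consistency. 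This is a contradiction outright, so this sub-case is harmless.

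Both cases being impossible, $\mu$ is $\mathbf C^\infty$-undominated, i.e.\ $\mu\in\mathbf M^\infty$. The only nontrivial ingredient is the consistency result \Cref{admiStable}; given it, the remaining step is the bookkeeping identity $\mu_{-i,\mu(i)}+(i,\mu(i))=\mu$. The main point to check carefully is that the quantifier structures of \Cref{dominance} and \Cref{stability} line up exactly after this substitution, including the degenerate cases where an agent is single or the pair is already matched.
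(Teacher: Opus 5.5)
Your proposal is correct and is essentially the paper's own proof. Assume $\mu\in\mathbf S^\infty$ is $\mathbf C^\infty$-dominated, use $\mathbf C^\infty$-consistency from \Cref{admiStable} to instantiate the on-path conjectures at $\mu_{-k,\mu(k)}$ (resp.\ $\mu_{-a,\mu(a)}$ and $\mu_{-b,\mu(b)}$), and read off a violation of $\mathbf C^\infty$-individual rationality or $\mathbf C^\infty$-unblocking. Your extra treatment of the degenerate sub-cases $\mu(k)=k$ and $\mu(a)=b$ is sound; it simply makes explicit what the paper handles by restricting to $\mu(k)\neq k$ and $\mu(a)\neq b$.
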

The following  corollary is implied by  \Cref{th1} and \Cref{thRef}.

\begin{corollary}[Existence]
\label{thRat}
Fix $\Gamma:=\langle A,B, \succsim \rangle$. Then $\mathbf{M}^\infty\neq \emptyset$
\end{corollary}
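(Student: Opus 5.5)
The corollary asserts that $\mathbf{M}^\infty\neq\emptyset$. It follows in one line from the two results just established. By \Cref{th1}, the set $\mathbf{S}^\infty$ of conjecture-rationalizable stable matchings is non-empty, so there is some $\mu\in\mathbf{S}^\infty$. By \Cref{thRef}, $\mathbf{S}^\infty\subseteq\mathbf{M}^\infty$, hence $\mu\in\mathbf{M}^\infty$. Therefore $\mathbf{M}^\infty\neq\emptyset$.

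There is no genuine obstacle here; the work lies entirely in the two invoked results, which I take as given.

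To make clear what is being used, it helps to recall how \Cref{thRef} operates. Take $\mu\in\mathbf{S}^\infty$. By \Cref{admiStable}, $\mu$ is $\mathbf{C}^\infty$-consistent, so for every agent $k$ the conjecture $\mu_{-k,\mu(k)}$ lies in $\mathbf{C}^\infty(k,\mu(k),\succsim)$. If $\mu$ were $\mathbf{C}^\infty$-dominated, one of two things would hold:
\begin{enumerate}
\item Some agent $k$ would strictly prefer every $\nu_{-k}+(k)$ with $\nu_{-k}\in\mathbf{C}^\infty(k,\succsim)$ to every matching $\nu'_{-k,\mu(k)}+(k,\mu(k))$ it could conjecture when keeping its partner, and in particular to $\mu_{-k,\mu(k)}+(k,\mu(k))=\mu$ itself. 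This contradicts $\mathbf{C}^\infty$-individual rationality.
\item Some pair $(a,b)$ would strictly prefer every matching obtained under a conjecture in $\mathbf{C}^\infty(a,b,\succsim)$ to $\mu$. This contradicts $\mathbf{C}^\infty$-unblocking.
\end{enumerate}
The non-emptiness of $\mathbf{C}^\infty$ needed for these comparisons to be non-vacuous is supplied through consistency. For the corollary, only the statements of \Cref{th1} and \Cref{thRef} are needed.
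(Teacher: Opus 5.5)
Your proof is correct and is exactly the paper's argument: the corollary follows by combining the non-emptiness of $\mathbf{S}^\infty$ from \Cref{th1} with the inclusion $\mathbf{S}^\infty\subseteq\mathbf{M}^\infty$ from \Cref{thRef}. Your recap of how \Cref{thRef} works, via the $\mathbf{C}^\infty$-consistency of \Cref{admiStable}, also matches the paper's proof of that theorem. One small precision: what consistency supplies is the membership $\mu_{-k,\mu(k)}\in\mathbf{C}^\infty(k,\mu(k),\succsim)$ that lets you instantiate the universal quantifier at $\mu$ itself, not non-emptiness as such.
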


It is worth noting that rationalizability and  P-stability\footnote{Formal definition of P-stability is in  \Cref{P-stability}}  are unrelated concepts in matching with externalities.

The following example shows that P-stable  matchings are not necessarily  rationalizable.
\begin{example}
\label{ex3}
Consider a modified version of \Cref{esempioTGoV} in which the preferences of $a_2$ align with those of $a_1$. Matchings and preferences are depicted in \hyperref[exB]{\textcolor{Gold1}{Figure 6}}.

\begin{figure}[ht]
\centering

\begin{tabular}{c c c c}

\begin{tikzpicture}[scale=0.75, transform shape,
    box/.style={draw,inner sep=7pt,rounded corners=5pt}]
    \node at (1, 0.2) {\Large$\mu^0$};

    \node[draw, circle] (a1) at (0, -1) {$a_1$};
    \node[draw, circle] (a2) at (0, -2) {$a_2$};
    \node[draw, circle] (b1) at (2, -2) {$b_1$};

    \node[box,fit=(a1)(a2)(b1)] {};
\end{tikzpicture}

&

\begin{tikzpicture}[scale=0.75, transform shape,
    box/.style={draw,mycolor,inner sep=7pt,rounded corners=5pt}]
    \node at (1, 0.2) {\Large$\mu^1$};

    \node[draw, circle] (a1) at (0, -1) {$a_1$};
    \node[draw, circle] (a2) at (0, -2) {$a_2$};
    \node[draw, circle] (b1) at (2, -2) {$b_1$};

    \draw (a1) -- (b1);
    \node[box,fit=(a1)(a2)(b1)] {};
\end{tikzpicture}

&

\begin{tikzpicture}[scale=0.75, transform shape,
    box/.style={draw,mycolor,inner sep=7pt,rounded corners=5pt}]
    \node at (1, 0.2) {\Large$\mu^2$};

    \node[draw, circle] (a1) at (0, -1) {$a_1$};
    \node[draw, circle] (a2) at (0, -2) {$a_2$};
    \node[draw, circle] (b1) at (2, -2) {$b_1$};

    \draw (a2) -- (b1);
    \node[box,fit=(a1)(a2)(b1)] {};
\end{tikzpicture}

&

\raisebox{1.2cm}{
\begin{tabular}[c]{c|c|c}
\multicolumn{2}{c|}{$A$} & \multicolumn{1}{|c}{$B$} \\\hline
$a_1$ & $a_2$ & $b_1$ \\\hline
\rule{0pt}{1em}
$\mu^2$ & $\mu^2$ & $\mu^1$ \\
$\mu^1$ & $\mu^1$ & $\mu^2$ \\
$\mu^0$ & $\mu^0$ & $\mu^0$
\end{tabular}
}

\end{tabular}

\caption{Preferences and matchings of \Cref{ex3}.}
\label{exB}
\end{figure}

First notice that $\mu^0_{-a_1}$ is  \(\mathbf C^{0}\)-dominated. Given any possible conjectures that $b_1$ and $a_2$ may hold when they are single, they prefer matching together than remaining single. Hence $\mu^1$  is $\mathbf{C}^1$-dominated. By (1) in \Cref{dominance} agent $a_1$  prefers remaining alone than matching with $b_1$  for all conjectures in $\mathbf{C}^1(a_1,\succsim)$ and all conjectures in $\mathbf{C}^1(a_1,b_1,\succsim)$. Therefore, $\mu^1$ is not a rationalizable matching. 

Nevertheless, $\mu^1$ is P-stable. Indeed, in $\mu^1$ agent $b_1$ gets her best match. If $a_1$ deviates by remaining alone he believes that $\mu^0$ would occur and therefore $a_1$'s deviation is not profitable.  
\end{example}

\section{Epistemic Foundation}
\label{foundation}

The epistemic view of non-cooperative games can be seen as an attempt to use the same analytical tools for studying rational decision-making in strategic interactions as those used for analyzing rational decision-making under uncertainty \citep{Tan}. The formal description of a decision problem under uncertainty   includes the possible
outcomes and states of the environment, the agent’s preferences over
these outcomes, and a description of the agent’s beliefs about the state
of nature. Once this is specified, a
 choice rule can be used to make recommendations or predictions.
From an epistemic point of view, the classical ingredients of a non-cooperative game (players,
actions, outcomes, and preferences) are thus not enough to formulate recommendations or predictions about how the players should or will choose. One needs to
specify an interactive decision problem agents are in, which includes  the beliefs agents have about each other’s possible actions and beliefs. 
In this  spirit, we begin by defining agents’ beliefs over the possible matchings   and their beliefs.

\subsection*{Belief Hierarchies}

The essential element of epistemic analysis is the notion of belief hierarchies which specify the players' beliefs over a basic
space of uncertainty, their beliefs over the beliefs of opponents, and
so on. In what follows we elaborate a notion of belief hierarchies which suits matching games with externalities and that  is used to
define  the notion of common belief in pairwise rationality.
Recall that beliefs are partner-dependent, thus  the basic space of uncertainty for each agent $i$ depends on $i$'s potential partner. This departs  from the canonical epistemic  model \citep[e.g.][]{Dekel}.

Then, we  denote $X_{i}= \Pi_{j \in \mathcal{P}_i} \mathcal{M}_{-i,j}$ the basic space of uncertainty for any agent $i\in N$. 
Next, for any $i\in N$, we define the set of   $n$-order beliefs of $i$ recursively.
Let $X^1_i=X_i$ be the set of first order beliefs for agent $i$. Suppose $X^{n-1}_i$ has been defined for all $i\in N$. Then $X_i^{n}=\Pi_{j \in \mathcal{P}_i} \Pi_{k \in N \setminus \{i,j\}}X^{n-1}_k$.  A {\bf belief hierarchy} for $i$ is an element of $\mathcal{H}_i= \Pi_{n=1}^\infty X^{n}_i$.
We find  convenient to define the set $\mathcal{H}^{>1}_i= \Pi_{n=1}^\infty X^{n+1}_i$ which consists of   hierarchies  of  higher order beliefs. Note that the collection of belief hierarchies can be rewritten as $\mathcal{H}_i= X_i^1 \times \mathcal{H}_i^{>1}$.

\begin{remark}[Partner-dependence of belief hierarchies]
Unlike the canonical epistemic model of non-cooperative games,
the space of uncertainty $X_i$ is indexed by the potential partner
$j\in\mathcal{P}_i$: as $j$ varies, the residual agents---and hence
the relevant uncertainty---change.
There is therefore no single ``belief of agent~$i$'': rather,
$i$ holds a family of beliefs, one per potential partner.
This partner-dependence is a direct consequence of the cooperative
nature of matching and propagates to every order of the belief
hierarchy, making the construction above irreducible to the
canonical case of \citet{Dekel}.
\end{remark}

\subsection*{Types}

As \citet{Harsanyi} aptly noted, type structures provide an alternative way to generate hierarchies of beliefs.
Let $T_i$ be  the set of epistemic types of $i$ with generic element $t_i$, for every $i,j \in N$ such that $i \neq j$, $T_i \cap T_j= \emptyset$. We write $T_{-i}=\Pi_{j\in (  N)\setminus \{i\}}T_j$ with generic element $t_{-i}$, and, for any nonempty $K\subseteq N$, $T_{K}=\Pi_{i\in K}T_i$, and $T_{-K}=\Pi_{j\in (  N)\setminus K}T_j$. If $i,j \in A  \cup B$ we write $T_{-i,j}=\Pi_{k\in (  N)\setminus \{i,j\}}T_k$ with generic element $t_{-i,j}$.
We also write $T=\Pi_{i\in   N}T_i$ with generic  element $t$.

For all $i \in   N$, the  {\bf belief  map} $ \theta_i:T_i\longrightarrow \Pi_{j\in \mathcal{P}_i}(\mathcal{M}_{-i,j}\times T_{-i,j})$ assigns to each  agent's type a belief hierarchy.

\begin{definition}[Epistemic Type Structure]
    Given a matching game $\Gamma:=\langle A,B,\succsim\rangle$, an  epistemic type structure for 
    $\Gamma$ is a tuple 
    $$\mathscr{T}:=\langle A,B,(T_i,\theta_i)_{i\in   N}\rangle$$  
\end{definition}

Given a type structure $\mathscr{T}$ we denote $\varphi:T\longrightarrow \Pi_{i\in N} \mathcal{H}_i$ the function mapping types into belief hierarchies.
The type structure $\mathscr{T}$ is {\bf complete} if the maps $\varphi_i$ are onto.

Because a type’s first-order and higher order  beliefs play a particularly important role, it is
convenient to introduce specific notation for them.

\begin{definition}[Beliefs]
  Let $\mathscr{T}:=\langle A,B,(T_i,\theta_i)_{i\in   N}\rangle$ be an epistemic  type structure, the \textbf{first order belief}  for agent $i$ of type $t_i$ is $\phi_i(t_i)\equiv \Pi_ {j \in \mathcal{P}_i} \ {\rm proj}_{\mathcal{M}_{-i,j}}\theta_i(t_i)$. The \textbf{higher order beliefs}  for agent $i$ of type $t_i$ are $\psi_i(t_i)\equiv \Pi_ {j \in \mathcal{P}_i} \ {\rm proj}_{T_{-i,j}}\theta_i(t_i)$ (where {\rm proj} denotes the—continuous—projection operator as
canonically defined).
\end{definition}

\begin{remark}[Notation]
 Abusing notation we omit the subscript $i$ from the first and higher order beliefs, and  for any $i,j,k,h\in N$ we write
 \begin{itemize}
     \item $\phi(t_i)^j\in \mathcal{M}_{-i,j}$ the first order belief of $i$ having type $t_i$ when she is matched with $j$.
\item $\psi(t_i)_k^j\in T_k$ the type of $k$ conjectured by $i$ of type $t_i$ when $i$ is matched with $j$.
\item $\phi(\psi(t_i)_k^j)^h\in \mathcal{M}_{-k,h}$ the first order belief of $k$ having type $\psi(t_i)_k^j$ when she is matched with $h$.
     \end{itemize}
     For clarity, note that partners are indicated with superscripts, while conjectured types appear as subscripts.
\end{remark}

\subsection*{Characterizations}

An {\bf epistemic condition} is any subset of $\mathcal{M}\bigtimes T$. 
An \textbf{event $E_{-i}$ for an agent $i \in N$} is a subset of  $\Omega_{-i}\equiv\Pi_{j \in \mathcal{P}_i}(\mathcal{M}_{-i,j}\times T_{-i,j})$. The collection of all events for an agent $i\in N$ is thus $\powerset(\Omega_{-i})$. We call an \textbf{epistemic event}\footnote{Unlike the canonical framework, our characterization  involves epistemic conditions and events that do not coincide. This separation permits a more nuanced foundation of the solution concepts. This is because agent 
$i$’s basic space of uncertainty depends on her potential partner, whereas the solution concepts are defined solely over matchings.}  an element of $\Pi_{i  \in N}\Omega_{-i}$. 

The next definition states that a matching is pairwise rational for a coalition 
$S$ of agents and a profile of types for each agent in the coalition if—given the agents' first-order beliefs—no matched agent in 
$S$ prefers to remain alone, and no pair of agents in 
$S$ who are not matched to each other would prefer to match together.

\begin{definition}[Pairwise Rationality]\label{PR}
     Let $\left\langle A,B,(T_i,\theta_i)_{i\in   N}\right\rangle$ be an epistemic  type structure for  $\langle A,B,\succsim \rangle$ and fix a  $\mu \in \mathcal{M}$, $S \subseteq  N$ and $t_S \in T_S$. We say that $\mu$ is \textbf{pairwise rational} for $(S,t_S)$ if 
     \begin{enumerate}
         \item $\nexists k \in S$ with $\mu(k)\neq k$ such that   $\phi(t_k)^k+(k) \succ_k \phi(t_k)^{\mu(k)}+(k,\mu(k))$ 
         \item $\nexists (a,b) \in (A \cap S) \bigtimes (B \cap S)$ such that  $\phi(t_a)^{b}+(a,b)\succ_a \phi(t_a)^{\mu(a)}+(a,\mu(a)) $ and $ \phi(t_b)^{a}+(b,a)\succ_b \phi(t_b)^{\mu(b)}+(b,\mu(b))$.
     \end{enumerate}

\end{definition}

\begin{remark}
Although conjecture-rationalizable stability evaluates a deviation against the 
entire set $\mathbf C^\infty(i,j,\succeq)$, here each type carries a single point belief 
$\varphi(t_i)^j$. The two coincide under conservative blocking. Since 
$\succeq_i$ is a complete order over a finite set, requiring profitability under 
every conjecture in $\mathbf C^\infty(i,j,\succeq)$ is the same as requiring it under 
the $\succeq_i$-worst one. A point belief placed on that worst conjecture 
therefore reproduces the set-valued evaluation.  The possibility set is thus not discarded but resurfaces as the 
range of rationalizable point beliefs across types. This is the matching 
counterpart of the duality between undominance and best reply in 
\citet{Pearce1984, Bernheim1984}: a matching survives iterated elimination if 
and only if it is pairwise rational under some rationalizable point belief.
\end{remark}

For any pair formed by a matching and a type profile $(\mu,t) \in \mathcal{M} \times T$ we say that $(\mu,t) \in \bsf{PR}$ if $\mu$ is pairwise rational for $(N,t)$. Thus $\bsf{PR}$ is the epistemic condition  capturing  pairwise rationality.
Next, we define the event for player $i$ that consists of all conjectures in which the other players behave rationally.
     For  any $i\in N$ let    
\begin{align*}
 \hspace{-1cm}   {\rm PR}_{-i}\equiv\{(\mu_{-i,j}, t_{-i,j})_{j \in {\mathcal P}_i} \in \Omega_{-i}| \forall j \in \mathcal{P}_i, \mu_{-i,j}+(i,j) \text{ is pairwise  rational for } ( N \setminus \{i,j\}, t_{-i,j})\}.
\end{align*}
the event for $i$ such that  agents $N\setminus\{i,j\}$ are rational.
The epistemic event  ${\rm PR}$ is thus $\Pi_{i \in N} {\rm PR}_{-i}$.
The interactive reasoning is captured by means
of opportune operators. The belief operator $\mathbb{B}_i$ of player $i$ is
defined as
\begin{align*}
    \mathbb{B}_i({\rm PR})=\{t_i\in T_i|\theta_i(t_i)\in {\rm PR}_{-i}\}
\end{align*}
Let   $\mathbb{B}^1_i({\rm PR})=\mathbb{B}_i({\rm PR})$ and define  for  $n>1$
\begin{align*}
    \mathbb{B}^n_i({\rm PR})=\{t_i\in T_i|\psi(t_i)_{k}^j\in \mathbb{B}^{n-1}_k({\rm PR})\  \forall j\in \mathcal{P}_i, \ \forall k\in N\setminus\{i,j\}\} 
\end{align*}

Given this, let
\begin{align*}
    \mathbb{CB}_i({\rm PR})=\bigcap_{n=1}^\infty  \mathbb{B}^n_i({\rm PR}) \ \ \text{   and  }\ \ \mathbb{CB}({\rm PR})=\Pi_{i\in N}\mathbb{CB}_i({\rm PR})
\end{align*}
The following definition states the epistemic condition of {\bf Pairwise Rationality and Common Belief in Pairwise Rationality} ($\bsf{PRCBPR}$).
\begin{definition}[\bsf{PRCBPR}]
   Let $\left\langle A,B,(T_i,\theta_i)_{i\in   N}\right\rangle$ be an epistemic type structure for $\langle A,B,\succsim \rangle$. The epistemic condition {\bf Pairwise Rationality and Common Belief in Pairwise Rationality} is 
   \begin{align*}
       \bsf{PRCBPR}\subseteq \mathcal{M}\bigtimes T: (\mu,t)\in \bsf{PRCBPR} \Longleftrightarrow (\mu,t) \in \bsf{ PR}\bigwedge t\in \mathbb{CB}({\rm PR})
   \end{align*}
\end{definition}

The following \Cref{epistemic1} and \Cref{epistemic2} provide an epistemic characterization of our solution concepts.

\Cref{epistemic1} characterizes the set of rationalizable matchings as a  behavioral implication of {\bf Pairwise Rationality and Common Belief in Pairwise Rationality}.

\begin{theorem}[Foundation of Rationalizability] Fix  $\langle A,B, \succsim\rangle$.
\label{epistemic1}
\begin{enumerate}
    \item In any epistemic type structure  $\mathscr{T}$  for $\langle A,B,\succsim \rangle$,
    \\ ${\rm proj}_\mathcal{M}\bsf{PRCBPR}\subseteq \mathbf{M}^\infty$
    \item In any complete  epistemic type structure  $\mathscr{T}$  for $\langle A,B,\succsim \rangle$, \\ 
    ${\rm proj}_\mathcal{M}\bsf{PRCBPR}=\mathbf{M}^\infty$
\end{enumerate}
    
\end{theorem}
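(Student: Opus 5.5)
The proof follows the classical Tan--Werlang / Brandenburg--Dekel argument, adapted to partner-dependent beliefs. The central link is an induction connecting the belief operators to the elimination procedure of \Cref{procedure}. The inductive claim is this: for every $n\ge 1$, every $i$, every $t_i\in\mathbb{B}^n_i({\rm PR})$ and every $j\in\mathcal P_i$, we have $\phi(t_i)^j\in\mathbf C^{n}(i,j,\succsim)$.

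For $n=1$, suppose $t_i\in\mathbb B_i({\rm PR})$. Then $\phi(t_i)^j+(i,j)$ is pairwise rational for $(N\setminus\{i,j\},\psi(t_i)^j)$. We show $\phi(t_i)^j$ is $\mathbf C^0$-undominated.
\begin{itemize}
\item Suppose condition (1) of \Cref{conjecturedominance} held for some $k$. Since the first-order beliefs of $\psi(t_i)^j_k$ lie in $\mathbf C^0=\mathcal M_{-k,\cdot}$, agent $k$ would strictly prefer staying single under those particular beliefs. This contradicts clause 1 of \Cref{PR}.
\item If $\mu(k)=k$, condition (1) compares $\nu_{-k}+(k)$ with itself and cannot be strict, so it fails automatically.
\item Condition (2) is excluded in the same way by clause 2 of \Cref{PR}.
\end{itemize}
The inductive step is identical. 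If $t_i\in\mathbb B^{n}_i$, the conjectured types $\psi(t_i)^j_k$ lie in $\mathbb B^{n-1}_k$, so by the hypothesis their first-order beliefs lie in $\mathbf C^{n-1}$. Dominance relative to $\mathbf C^{n-1}$ requires a strict preference for \emph{all} conjectures in $\mathbf C^{n-1}$, hence in particular for the conjectured types' point beliefs, and this again contradicts pairwise rationality. One bookkeeping point: $t_i\in\mathbb B^n_i$ alone does not give $t_i\in\mathbb B^1_i$, so I will work with $\bigcap_{m\le n}\mathbb B^m_i$ throughout, which is harmless since we intersect them all in the end.

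Part 1 follows at once. Take $(\mu,t)\in\bsf{PRCBPR}$. Then every $\phi(t_k)^h$ lies in $\mathbf C^\infty$. If $\mu$ were $\mathbf C^\infty$-dominated in the sense of \Cref{dominance}, the universally quantified strict preferences would hold in particular at these beliefs, contradicting $(\mu,t)\in\bsf{PR}$. Hence $\mu\in\mathbf M^\infty$.

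Part 2 needs the converse inclusion in a complete structure. Fix $\mu\in\mathbf M^\infty$. I will build a ``rationalizable'' family of types $t_k^{h,\nu}$, one for each conjecture $\nu\in\mathbf C^\infty(k,h,\succsim)$. Such a type believes the $\succsim_k$-worst element of $\mathbf C^\infty(k,h',\succsim)$ for each partner $h'\ne h$, and believes $\nu$ at partner $h$.

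We must choose its higher-order beliefs so that, for each partner, the residual matching is pairwise rational given those conjectured types. This is where \Cref{th2}(2) enters: $\nu$ is $\mathbf C^\infty$-undominated. For each $k'$ not single in $\nu$ there is therefore some pair of conjectures in $\mathbf C^\infty$ witnessing that $k'$ does not strictly prefer deviating. For each potential blocking pair there is likewise a witness that one member does not strictly gain. We assign types whose point beliefs are these witnesses.

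This witness assignment is the main obstacle. The witnesses come agent by agent and pair by pair, but a single type of $k'$ must simultaneously carry beliefs at every partner $h'$ of $k'$, and one agent $k'$ is involved in several potential blocking pairs. I expect to resolve this with worst-case choices. For a matched agent, use the worst conjecture when single and the best conjecture at the actual partner. For a blocking pair, use the worst conjecture at the blocking partner. The negated dominance condition then says that with these extremal selections no profitable deviation survives.

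With all rationalizable beliefs assigned this way, completeness lets us realise each such hierarchy by a type, and the fact that all beliefs are rationalizable gives common belief in PR. Finally, for $\mu$ itself we pick, for each agent, a type of this form whose beliefs are the extremal witnesses certifying that $\mu$ is $\mathbf C^\infty$-undominated, so that $(\mu,t)\in\bsf{PR}$.

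The delicate point is verifying that the extremal selections are mutually compatible, since a worst choice on one side of a blocking pair must still beat the best available choice on the other side. If this fails, I would fall back on indexing types by the full profile of required witnesses, which completeness still permits.
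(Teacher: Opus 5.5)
Your proposal is correct and is essentially the paper's proof. Part 1 is the paper's lemma that first-order beliefs of types in $\mathbb{CB}_i({\rm PR})$ lie in $\mathbf C^\infty$, proved by the same induction: you index by $\bigcap_{m\le n}\mathbb B^m_i$, whereas the paper inducts over $\mathbb{CB}$-types using the closure lemma. Part 2 is the paper's best-on-path, worst-off-path construction.

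The compatibility worry you flag is vacuous. Each type's first-order beliefs depend only on its own conjectured partner: best in $\mathbf C^\infty$ there, worst at every other partner. So for any conjectured $\nu$, negated $\mathbf C^\infty$-dominance of $\nu$ (which holds by self-undominance), together with $\mathbf C^\infty(a,b,\succsim)=\mathbf C^\infty(b,a,\succsim)$, directly gives a chain such as $\mu^+_{-a,\nu(a)}+(a,\nu(a))\succsim_a\nu'+(a,\nu(a))\succsim_a\nu_{-a,b}+(a,b)\succsim_a\mu^-_{-a,b}+(a,b)$ for one member of each potential blocking pair. Hence neither the $\nu$-indexed family nor the fallback is needed.
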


The epistemic condition of {\bf Pairwise Rationality and Common Belief in Pairwise Rationality} alone is not sufficient to characterize the set of conjecture-rationalizable stable matchings. It must be complemented by the following condition of {\bf Belief Correctness}. The epistemic condition {\bf Belief Correctness}  requires that if $i$ is matched  under $\mu$ then she believes  that all the other agents  are matched under $\mu$.

\begin{definition}[\bsf{BC}]
   Let $\left\langle A,B,(T_i,\theta_i)_{i\in   N}\right\rangle$ be an epistemic type structure for $\langle A,B,\succsim \rangle$.  The epistemic condition {\bf Belief Correctness} is
   \begin{align*} 
\bsf{BC}:=
\left\{
(\mu,t)\in\mathcal M\times T:
\phi(t_i)^{\mu(i)}=\mu_{-i,\mu(i)}
\text{ for every }i\in N
\right\}.
   \end{align*}
\end{definition}

\begin{theorem}[Foundation of Stability with Rationalizable Conjectures]
    \label{epistemic2}
     Fix  $\langle A,B, \succsim\rangle$.
\begin{enumerate}
    \item In any epistemic type structure  $\left\langle A,B,(T_i,\theta_i)_{i\in   N}\right\rangle$  for $\langle A,B,\succsim \rangle$,
    \\ ${\rm proj}_{\mathcal M}\bigl(\bsf{PRCBPR}\cap \bsf{BC}\bigr)
\subseteq \mathbf S^\infty$
    \item In any complete  epistemic type structure  $\mathscr{T}$   for $\langle A,B,\succsim \rangle$,\\
    ${\rm proj}_{\mathcal M}\bigl(\bsf{PRCBPR}\cap \bsf{BC}\bigr)
=
\mathbf S^\infty$  
\end{enumerate}
\end{theorem}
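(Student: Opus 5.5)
We follow the standard two-inclusion scheme for epistemic characterizations of rationalizability and reuse the machinery that will serve for \Cref{epistemic1}. The first step is a lemma linking the belief operators to the elimination procedure. For every $n\ge 1$, every $i\in N$, every $t_i\in\mathbb B^n_i({\rm PR})$ and every $j\in\mathcal P_i$, we claim $\phi(t_i)^j\in\mathbf C^n(i,j,\succsim)$.

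We prove the lemma by induction on $n$. Suppose $\phi(t_i)^j$ were $\mathbf C^{n-1}$-dominated through some agent $k$ or some pair $(a,b)$ outside $\{i,j\}$. By the induction hypothesis, the conjectured types $\psi(t_i)^j_k$ carry first-order beliefs in $\mathbf C^{n-1}$. Dominance holds against all conjectures in $\mathbf C^{n-1}$, so in particular it holds against these point beliefs. Hence $\phi(t_i)^j+(i,j)$ fails pairwise rationality for $(N\setminus\{i,j\},\psi(t_i)^j)$, which contradicts $t_i\in\mathbb B_i({\rm PR})$. Since the procedure uses finite sets, it stabilizes at some finite stage, so $\phi(t_i)^j\in\mathbf C^\infty(i,j,\succsim)$ whenever $t_i\in\mathbb{CB}_i({\rm PR})$.

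For part 1, take $(\mu,t)\in\bsf{PRCBPR}\cap\bsf{BC}$ and check \Cref{stability} for $\mathbf C^\infty$. Suppose agent $k$ strictly gains by staying single under every $\nu_{-k}\in\mathbf C^\infty(k,\succsim)$. Then she gains in particular under $\phi(t_k)^k$, which lies in $\mathbf C^\infty(k,\succsim)$ by the lemma. By \bsf{BC}, $\phi(t_k)^{\mu(k)}+(k,\mu(k))=\mu$, so the gain means $\phi(t_k)^k+(k)\succ_k\mu$. This contradicts clause 1 of \bsf{PR}. The pair case is identical: take the beliefs $\phi(t_a)^b$ and $\phi(t_b)^a$, which both lie in $\mathbf C^\infty(a,b,\succsim)$ by the lemma and the symmetry remark. \bsf{BC} again turns the comparison in \bsf{PR} into a comparison with $\mu$, contradicting clause 2 of \bsf{PR}.

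For part 2, fix $\mu\in\mathbf S^\infty$ in a complete structure. We build types by choosing point beliefs. For each deviation option $j\ne\mu(i)$, let $t_i$ put first-order belief on the $\succsim_i$-worst element of $\mathbf C^\infty(i,j,\succsim)$, following the worst-conjecture remark. For $j=\mu(i)$, let $t_i$ believe $\mu_{-i,\mu(i)}$; this belief lies in $\mathbf C^\infty$ by \Cref{admiStable}, and this choice gives \bsf{BC}.

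The main obstacle is the higher-order part. Every conjecture $\nu_{-i,j}$ in the support must be backed by types of the agents outside $\{i,j\}$ that make $\nu_{-i,j}+(i,j)$ pairwise rational and that themselves hold beliefs in $\mathbf C^\infty$, and so on recursively. We get this from self-undominance (\Cref{th2}). Since $\nu_{-i,j}$ is $\mathbf C^\infty$-undominated, for every agent $k$ and every pair $(a,b)$ there exist witnessing conjectures in $\mathbf C^\infty$ under which the deviation is not strict. Because preferences are complete orders on a finite set, we can pick for each $k$ a single point belief per partner, namely the $\succsim_k$-worst element of $\mathbf C^\infty(k,k,\succsim)$ and the $\succsim_k$-best element of $\mathbf C^\infty(k,\nu(k),\succsim)$. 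These choices defeat all single and pair deviations simultaneously, since the undominance condition quantifies over all elements. This defines a self-referential assignment, from rationalizable conjectures to hierarchies, that consists entirely of rationalizable beliefs. Completeness supplies types realizing these hierarchies. These types lie in $\mathbb{CB}({\rm PR})$, and by construction the actual $t$ satisfies \bsf{PR} at $\mu$, since $\mu$ is $\mathbf C^\infty$-stable and the worst-conjecture beliefs reproduce the set-valued test.
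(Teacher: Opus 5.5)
Your proposal is correct and follows the paper's proof essentially step for step: a lemma that types in $\mathbb{CB}_i({\rm PR})$ hold first-order beliefs in $\mathbf C^\infty$ (the paper's \Cref{lemEpi2}); part~1 by testing these beliefs against a failure of $\mathbf C^\infty$-stability, with \bsf{BC} pinning the on-path comparison to $\mu$; and part~2 via correct on-path beliefs (\Cref{admiStable}), $\succsim_i$-worst off-path rationalizable beliefs, and a best-on-path/worst-off-path recursion for conjectured types justified by self-undominance (\Cref{th2}). Two small repairs are needed: since the paper's $\mathbb B^n_i({\rm PR})$ is not cumulative ($t_i\in\mathbb B^2_i({\rm PR})$ does not imply $t_i\in\mathbb B^1_i({\rm PR})$), your lemma must be stated for $t_i\in\bigcap_{m\le n}\mathbb B^m_i({\rm PR})$, or directly for $\mathbb{CB}$ types as the paper does, because its contradiction step uses $t_i\in\mathbb B_i({\rm PR})$; and in the recursion each conjectured type of $k$ must hold the $\succsim_k$-worst element of $\mathbf C^\infty(k,h,\succsim)$ at every off-path partner $h\neq\nu(k)$, not only at $h=k$, since otherwise pair deviations are not defeated.
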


Finally, we show that the notion of P-stability needs an even further agents' belief restriction. 
  The epistemic condition {\bf quasifixed belief} requires that when an agent $i$ deviates, she believes that  others do not react.

\begin{definition}[\bsf{QFB}]
  Let $\left\langle A,B,(T_i,\theta_i)_{i\in N}\right\rangle$ be an epistemic
  type structure for $\langle A,B,\succsim\rangle$.  For all $\mu\in\mathcal{M}$
  and all $i,j\in N$ with $j\in\mathcal{P}_i\setminus\{\mu(i)\}$, let
  $\hat{\mu}_{-ij}\in\mathcal{M}_{-i,j}$ be the matching of
  $N\setminus\{i,j\}$ defined by: $\hat{\mu}_{-ij}(k)=k$ if
  $k\in\{\mu(i),\mu(j)\}$, and $\hat{\mu}_{-ij}(k)=\mu(k)$ otherwise.

  The epistemic condition \textbf{Quasifixed Belief} is
  \[
   \bsf{QFB}:=
\left\{
(\mu,t)\in\mathcal M\times T:
\phi(t_i)^j=\widehat{\mu}_{-i,j}
\text{ for every } i\in N
\text{ and every } j\in\mathcal P_i\setminus\{\mu(i)\}
\right\}.
  \]
\end{definition}

\begin{theorem}[Foundation of P-stability]
    \label{epistemic3}
     Fix  $\langle A,B, \succsim\rangle$, and let $\mathbf{P}$ the set of P-stable matchings of $\langle A,B, \succsim\rangle$.
\begin{enumerate}
    \item In any epistemic type structure  $\left\langle A,B,(T_i,\theta_i)_{i\in   N}\right\rangle$  for $\langle A,B,\succsim \rangle$,
    \\ ${\rm proj}_{\mathcal M}
\bigl(\bsf{PR}\cap\bsf{BC}\cap\bsf{QFB}\bigr)
\subseteq \mathbf P$
    \item In any complete  epistemic type structure  $\mathscr{T}$   for $\langle A,B,\succsim \rangle$,\\
    ${\rm proj}_{\mathcal M}
\bigl(\bsf{PR}\cap\bsf{BC}\cap\bsf{QFB}\bigr)= \mathbf P$ 
\end{enumerate}
\end{theorem}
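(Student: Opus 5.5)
The plan is to show that, once \bsf{BC} and \bsf{QFB} hold, the pairwise-rationality condition \bsf{PR} coincides clause by clause with $\mathbf C^P_\mu$-stability in \Cref{P-stability}. The first step is to compare the point beliefs imposed by the two conditions with the singleton conjectures of \Cref{P-stability}. Fix $\mu$ and a type profile $t$.

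\textbf{Beliefs under \bsf{BC}.} \bsf{BC} gives $\phi(t_i)^{\mu(i)}=\mu_{-i,\mu(i)}$, hence $\phi(t_i)^{\mu(i)}+(i,\mu(i))=\mu$. So in \Cref{PR} the status-quo side of every comparison is $\mu$ itself.

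\textbf{Beliefs under \bsf{QFB}.} For $j\neq\mu(i)$, \bsf{QFB} gives $\phi(t_i)^j=\widehat\mu_{-i,j}$. I will check that $\widehat\mu_{-i,j}$ is exactly the P-stability conjecture.
\begin{itemize}
\item If $j=i$ and $\mu(i)\neq i$, then $\mu(j)=\mu(i)$. So $\widehat\mu_{-i,i}$ leaves $\mu(i)$ single and keeps everyone else as in $\mu$, which is $\mu^P_{-i}$.
\item If $(i,j)=(a,b)$ with $\mu(a)\neq b$, then $\widehat\mu_{-a,b}$ leaves $\mu(a)$ and $\mu(b)$ single, which is $\mu^P_{-a,b}$.
\end{itemize}

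\textbf{Degenerate cases.} These are the cases not covered by \bsf{QFB}, and they should both be vacuous.
\begin{itemize}
\item If $\mu(k)=k$, clause 1 of \Cref{PR} excludes $k$. In P-stability the single deviation of an already-single $k$ yields $\mu^P_{-k}+(k)=\mu$, so it can never be strictly profitable.
\item For a pair with $\mu(a)=b$, clause 2 of \Cref{PR} compares $\phi(t_a)^b+(a,b)$ with itself, so no strict preference arises. Such pairs are also excluded from P-unblocking.
\end{itemize}
This case bookkeeping, especially the convention $\mu(j)=\mu(i)$ when $j=i$ and the footnote in \Cref{P-stability}, is where I expect the only real care to be needed.

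\textbf{Part 1.} Let $(\mu,t)\in\bsf{PR}\cap\bsf{BC}\cap\bsf{QFB}$. Substituting the identities above into \Cref{PR}, clause 1 says that no matched $k$ has $\mu^P_{-k}+(k)\succ_k\mu$. Clause 2 says that no unmatched pair $(a,b)$ has $\mu^P_{-a,b}+(a,b)\succ_a\mu$ and $\mu^P_{-a,b}+(a,b)\succ_b\mu$. Because the conjecture sets are singletons, the universal quantifiers in \Cref{stability} are trivial. Hence $\mu$ is $\mathbf C^P_\mu$-stable, that is, $\mu\in\mathbf P$.

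\textbf{Part 2.} Let $\mu\in\mathbf P$. For each $i$, define a first-order belief $x_i\in X_i$ by $x_i^{\mu(i)}=\mu_{-i,\mu(i)}$ and $x_i^j=\widehat\mu_{-i,j}$ for $j\in\mathcal P_i\setminus\{\mu(i)\}$. Complete it with arbitrary higher-order beliefs to obtain a hierarchy $h_i\in\mathcal H_i=X^1_i\times\mathcal H^{>1}_i$. Since the structure is complete, each $\varphi_i$ is onto, so some type $t_i$ has $\varphi_i(t_i)=h_i$, and in particular $\phi(t_i)=x_i$.

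By construction $(\mu,t)\in\bsf{BC}\cap\bsf{QFB}$. By the same substitution as in Part 1, P-stability of $\mu$ gives $(\mu,t)\in\bsf{PR}$. Note that no restriction on higher-order beliefs is needed, since $\bsf{PR}$, unlike \bsf{PRCBPR}, involves only first-order beliefs. Combining with Part 1 yields the equality.
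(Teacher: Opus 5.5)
Your proposal is correct and follows the same route as the paper: under \bsf{BC} the status-quo side of every pairwise-rationality comparison is $\mu$, and under \bsf{QFB} the deviation beliefs are exactly the singleton P-stability conjectures, so \bsf{PR} reduces to $\mathbf C^P_\mu$-stability; for the converse, completeness supplies types with precisely these first-order beliefs. The paper argues both directions by contradiction and leaves several points implicit that you check explicitly: that $\widehat\mu_{-i,j}$ coincides with $\mu^P_{-i}$ or $\mu^P_{-a,b}$, that the cases $\mu(k)=k$ and $\mu(a)=b$ are vacuous, and that higher-order beliefs play no role, which makes your version somewhat more careful.
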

\begin{remark}[Epistemic simplicity of P-stability]
Unlike \Cref{epistemic1} and \Cref{epistemic2},
the foundation of P-stability does not require common belief in
pairwise rationality.
This is not an oversight: \bsf{QFB} already pins down
off-path beliefs rigidly and uniformly, leaving no room for
higher-order reasoning about others' rationality to play any role.
Epistemic simplicity here comes at the cost of behavioral
implausibility, since \bsf{QFB} imposes a very specific and
exogenous counterfactual on every agent.
\end{remark}

The three characterizations reveal a clear hierarchy. 
Rationalizability requires \bsf{PRCBPR}; conjectural stability 
adds \bsf{BC}.  P-stability moves in a different direction: 
it weakens interactive reasoning (dropping common belief) 
but rigidly fixes counterfactual beliefs via \bsf{QFB}. 
Epistemic simplicity comes at the cost of behavioral implausibility.

\section*{Discussion: Toward a Theory of Coalitional Stability}
\label{manytoone}
Our analysis has focused on environments in which deviations are individual or
pairwise. This
restriction is substantive, but analytically useful: it keeps the notation
parsimonious and isolates the role of conjectures after deviations. The difficulty of defining meaningful stability concepts under
externalities is, of course, broader. It
is well known in  partition functions form games,
where deviations are coalitional and the payoff from a deviation depends on
the partition formed by the remaining agents. Several approaches have been
proposed, but no definitive benchmark solution concept has emerged.\footnote{
For a comprehensive review of coalitional games in partition-function form,
see \citet{Koczy2008}.} The conjectural approach developed here can serve as a
starting point for a  theory of coalitional stability under externalities. A natural
next step is to extend the framework  to many-to-one
matching markets. Such an extension would broaden the scope of the model to
settings with peer effects \citep[e.g.][]{DuttaMassò,Fede2007}. In this
section, we outline the main features of this extension and highlight the
conceptual issues it raises.

\paragraph*{Many-to-one framework.}
Let $F$ and $W$ be finite disjoint sets of firms and workers, with $N=F\cup W$.
Each firm $f\in F$ has quota $q_f\geq 1$. A many-to-one matching is a function
$\mu$ such that $\mu(w)\in F\cup\{w\}$ for every worker $w$,
$\mu(f)\subseteq W$ with $|\mu(f)|\leq q_f$ for every firm $f$, and
$w\in\mu(f)\Leftrightarrow\mu(w)=f$. Each agent $i\in N$ has preferences
$\succsim_i$ over the set $\mathcal{M}$ of many-to-one matchings, and the
matching game is $\mathscr{G}=\langle F,W,\succsim\rangle$.

\paragraph*{Space of uncertainty.}
A feasible firm-worker group is a pair $(f,S) \in \mathcal{P}$, where 
$\mathcal{P} := \{(f,S) : f \in F,\ S \subseteq W,\ |S| \le q_f\}$. As in the 
one-to-one case, a conjecture specifies how the agents \emph{other than} those 
in the contemplated deviation are matched. 
The relevant residual depends on the deviation.
\begin{enumerate}
\item If  $(f,S)$ forms, the conjecture concerns the agents in 
$N \setminus (\{f\} \cup S)$, so conjectures are residual matchings in 
$\mathcal{M}_{-(\{f\} \cup S)}$.
\item If a worker $w$ considers remaining single, the conjecture concerns 
$N \setminus \{w\}$, so conjectures are residual matchings in 
$\mathcal{M}_{-\{w\}}$.
\item If a worker $w$ considers joining a firm $f$, the conjecture concerns 
$N \setminus \{w\}$, but $f$ is part of it: the conjecture must specify how many 
positions of $f$ are already filled. Joining is feasible only when $f$ has a 
vacancy, so the relevant conjectures are 
$
\mathcal{M}^f_{-\{w\}} := \{\nu \in \mathcal{M}_{-\{w\}} : |\nu(f)| < q_f\}.
$
\end{enumerate}

\paragraph*{Conjectural stability and dominance.}
The definitions of $\mathbf C$-stability and $\mathbf C$-dominated conjectures extend naturally, 
with firm-side deviations being group-valued. Two conceptual novelties arise 
relative to the one-to-one case. First, a firm may dominate a conjecture by 
replacing its conjectured group with a different feasible one, so dominance on 
the firm side is inherently group-valued. Second, a worker's conjecture space 
depends on firm capacity: $w$ must conjecture how many positions of $f$ are 
already filled, since joining is feasible only if $f$ has a vacancy. This 
conditioning of conjectures on residual capacity has no analogue in the 
one-to-one model.
A subtlety arises in the third case. When a group $(f,S)$ forms, the deviation 
fixes $f$'s match (with $S$), so $f$ falls outside the residual the conjecture 
ranges over. When instead a single worker $w$ considers joining $f$, the 
deviation does not fix $f$'s match: $f$ belongs to the residual, and $w$ must 
conjecture $f$'s already-filled positions and remaining capacity. This treats 
$f$ as a capacity state rather than as part of the deviation. The two need not 
coincide: $w$ can join $f$ only with $f$'s consent, and a firm at capacity may 
accept $w$ only by adjusting its roster. A complete many-to-one theory must 
therefore specify whether, in a worker-firm deviation, the firm is part of the 
deviation, a capacity constraint, or both, and how firm consent and roster 
adjustment enter the conjecture. We leave this to future work.

These features make the formalism more complex, but the conceptual 
structure---stability under conjectures, rationalization by iterated 
dominance---is preserved. The existence argument of \Cref{th1} extends 
in spirit, though the auxiliary construction of \Cref{lemmaDknonempty} must be adapted 
to a many-to-one auxiliary market: firm-side deviations are group-valued, and 
workers' conjecture spaces must be restricted to feasible residual matchings. 
The relevant condition is that the preferences over groups induced for firms in 
the auxiliary market satisfy substitutability; under this condition the 
auxiliary market admits a stable matching, and the logic of the existence proof 
goes through.
\section*{\hypertarget{Appendix}{Appendix}}

\addcontentsline{toc}{section}{Appendix: Proofs}

For mathematical ease, we introduce the notion of feasible coalition. A coalition $(i,j) \in N \times N$ is feasible  if there exists a matching $\mu \in \mathcal{M}(i,j)$.  $\mathcal{S}$ is the set of feasible coalitions.  Also, we denote $Z\in\{A,B\}$ an arbitrary side of the market and $Z^C$ the other side.
\medskip


\noindent{\bf Proof of \Cref{nestedness}}.
We prove the statement by contraposition. Suppose that \(\mu\) is not
\(\mathbf C'\)-stable. We show that \(\mu\) is not \(\mathbf C\)-stable.

There are two cases.

\noindent{\bf [1]} There exists \(k\in N\) with \(\mu(k)\neq k\) such that
$
\nu_{-k}+(k)\succ_k\mu
\quad\text{for every }\nu_{-k}\in\mathbf C'(k,\succsim).
$
Since \(\mathbf C(k,\succsim)\subseteq\mathbf C'(k,\succsim)\), the same
inequality holds for every \(\nu_{-k}\in\mathbf C(k,\succsim)\). Hence \(\mu\)
violates \(\mathbf C\)-individual rationality.

\noindent{\bf [2]} There exists \((a,b)\in A\times B\) with \(\mu(a)\neq b\)
such that
$
\nu_{-a,b}+(a,b)\succ_a\mu
\quad\text{and}\quad
\nu_{-a,b}+(a,b)\succ_b\mu
$
for every \(\nu_{-a,b}\in\mathbf C'(a,b,\succsim)\).
Since \(\mathbf C(a,b,\succsim)\subseteq\mathbf C'(a,b,\succsim)\), the same
inequalities hold for every \(\nu_{-a,b}\in\mathbf C(a,b,\succsim)\). Hence
\(\mu\) is \(\mathbf C\)-blocked.

Thus, if \(\mu\) is not \(\mathbf C'\)-stable, then \(\mu\) is not
\(\mathbf C\)-stable. Equivalently, every \(\mathbf C\)-stable matching is
\(\mathbf C'\)-stable.
\hfill \(\blacksquare\)

\medskip

\noindent{\bf Proof of \Cref{th1}}. The proof builds on the following lemmata.

\begin{lemma} \label{lemmamonotonicity}
     If for all feasible coalitions $(i,j)$ it holds that $\mathbf{C}(i,j,\succsim)\supseteq \mathbf{C}'(i,j,\succsim)$ then every conjecture that is $\mathbf{C}'$-undominated  is $\mathbf{C}$-undominated.
\end{lemma}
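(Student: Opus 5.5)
The plan is to argue by contraposition. We fix a conjecture $\mu_{-i,j}$ that is $\mathbf C$-dominated and show that it is also $\mathbf C'$-dominated. The key observation is that every clause of \Cref{conjecturedominance} is a universally quantified strict preference: the quantifiers range over members of conjecture sets $\mathbf C(\cdot,\cdot,\succsim)$. Shrinking those sets from $\mathbf C$ to $\mathbf C'$ can only make such a universal statement easier to satisfy. This is the same logic used in the proof of \Cref{nestedness}.

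We split into the two cases of \Cref{conjecturedominance}.

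In case (1), there is $k\in N\setminus\{i,j\}$ such that $\nu_{-k}+(k)\succ_k \nu'_{-k,\mu(k)}+(k,\mu(k))$ for every $\nu_{-k}\in\mathbf C(k,\succsim)$ and every $\nu'_{-k,\mu(k)}\in\mathbf C(k,\mu(k),\succsim)$. By hypothesis, $\mathbf C'(k,\succsim)\subseteq\mathbf C(k,\succsim)$ and $\mathbf C'(k,\mu(k),\succsim)\subseteq\mathbf C(k,\mu(k),\succsim)$. Hence the same strict preference holds for all pairs drawn from the $\mathbf C'$ sets. The same $k$ therefore witnesses that $\mu_{-i,j}$ is $\mathbf C'$-dominated.

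In case (2), there is a pair $(a,b)$ such that the two strict preferences hold uniformly over $\mathbf C(a,b,\succsim)$, $\mathbf C(a,\mu(a),\succsim)$ and $\mathbf C(b,\mu(b),\succsim)$. Each of these sets contains its $\mathbf C'$ counterpart, so the preferences continue to hold uniformly over the $\mathbf C'$ sets. The same pair $(a,b)$ then witnesses $\mathbf C'$-dominance.

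Together the two cases give the contrapositive: $\mathbf C$-dominated implies $\mathbf C'$-dominated. Equivalently, every $\mathbf C'$-undominated conjecture is $\mathbf C$-undominated.

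There is no real obstacle here. The only point needing care is the degenerate situation where some $\mathbf C'$ set is empty. In that case the universal condition holds vacuously, which makes $\mathbf C'$-dominance easier still, so the implication is unaffected and no non-degeneracy assumption is needed. We also check that the pairs $(k,\mu(k))$, $(a,b)$, $(a,\mu(a))$ and $(b,\mu(b))$ are feasible coalitions, so that the inclusion hypothesis applies to them. This holds because each arises from an actual residual matching, or from a cross-side pair in $A\times B$.
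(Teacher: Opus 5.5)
Your proposal is correct and matches the paper's proof essentially step for step. Both argue by contraposition, split into the same two clauses of the dominance definition, and use the same observation: a strict preference that holds for every element of the larger $\mathbf C$-sets still holds on the pointwise smaller $\mathbf C'$-sets. Your extra remarks on empty $\mathbf C'$-sets and on the feasibility of the pairs $(k,\mu(k))$, $(a,b)$, $(a,\mu(a))$, $(b,\mu(b))$ are correct and only make explicit what the paper leaves implicit.
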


\begin{proof}
We prove the statement by contraposition. Fix a feasible pair \((i,j)\) and a
conjecture \(\mu_{-i,j}\in\mathcal M_{-i,j}\). Suppose that \(\mu_{-i,j}\) is not
\(\mathbf C\)-undominated. We show that it is not \(\mathbf C'\)-undominated.
There are two cases.

\noindent{\bf [1]} There exists \(k\in N\setminus\{i,j\}\) with
\(\mu_{-i,j}(k)\neq k\) such that
$
\nu_{-k}+(k)\succ_k
\nu'_{-k,\mu_{-i,j}(k)}+\bigl(k,\mu_{-i,j}(k)\bigr)
$
for every \(\nu_{-k}\in\mathbf C(k,\succsim)\) and every
$
\nu'_{-k,\mu_{-i,j}(k)}
\in\mathbf C\bigl(k,\mu_{-i,j}(k),\succsim\bigr).
$
Since \(\mathbf C'\subseteq\mathbf C\) pointwise, the same inequalities hold on
the corresponding \(\mathbf C'\)-sets. Hence \(\mu_{-i,j}\) is not
\(\mathbf C'\)-undominated.

\noindent{\bf [2]} There exists
$
(a,b)\in(A\setminus\{i,j\})\times(B\setminus\{i,j\})
$
with \(\mu_{-i,j}(a)\neq b\) such that
$
\nu_{-a,b}+(a,b)\succ_a
\nu'_{-a,\mu_{-i,j}(a)}+\bigl(a,\mu_{-i,j}(a)\bigr)
$
and
$
\nu_{-a,b}+(a,b)\succ_b
\nu''_{-b,\mu_{-i,j}(b)}+\bigl(b,\mu_{-i,j}(b)\bigr)
$
for every \(\nu_{-a,b}\in\mathbf C(a,b,\succsim)\), every
$
\nu'_{-a,\mu_{-i,j}(a)}
\in\mathbf C\bigl(a,\mu_{-i,j}(a),\succsim\bigr),
$
and every
$
\nu''_{-b,\mu_{-i,j}(b)}
\in\mathbf C\bigl(b,\mu_{-i,j}(b),\succsim\bigr).
$
Again, since \(\mathbf C'\subseteq\mathbf C\) pointwise, the same inequalities
hold on the corresponding \(\mathbf C'\)-sets. Hence \(\mu_{-i,j}\) is not
\(\mathbf C'\)-undominated.
\end{proof}
\begin{lemma} \label{LemmaSubset}
For each $(i,j)\in \mathcal{S}$  it holds that $\mathbf{C}^n(i,j,\succsim)\supseteq \mathbf{C}^{n+1}(i,j,\succsim)$ for all $n \in \mathbb{ N}$.  
\end{lemma}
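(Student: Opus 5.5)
We argue by induction on $n$, using \Cref{lemmamonotonicity} as the inductive engine. The statement is that the sequence of conjecture systems produced by the rationalizable procedure (\Cref{procedure}) is pointwise decreasing. The only ingredient is that shrinking the system used to test dominance can only enlarge the set of undominated conjectures. Before starting, note that $\mathbf C^{n+1}(i,j,\succsim)$ is defined as the set of $\mathbf C^{n}$-undominated elements of $\mathcal M_{-i,j}$. Hence each step of the procedure is the application of a single operator $\Phi$, with $\mathbf C^{n+1}=\Phi(\mathbf C^{n})$, and \Cref{lemmamonotonicity} says exactly that $\Phi$ is antitone with respect to pointwise inclusion.

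\emph{Base case ($n=0$).} By construction $\mathbf C^0(i,j,\succsim)=\mathcal M_{-i,j}$ for every feasible $(i,j)$. Since $\mathbf C^1(i,j,\succsim)$ is defined as a subset of $\mathcal M_{-i,j}$, we get $\mathbf C^1(i,j,\succsim)\subseteq\mathbf C^0(i,j,\succsim)$ immediately, with no appeal to dominance.

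\emph{Inductive step.} Assume $\mathbf C^{n}(i,j,\succsim)\supseteq\mathbf C^{n+1}(i,j,\succsim)$ for all feasible $(i,j)$. Fix a feasible $(i,j)$ and any $\mu_{-i,j}\in\mathbf C^{n+2}(i,j,\succsim)$, so $\mu_{-i,j}$ is $\mathbf C^{n+1}$-undominated. Apply \Cref{lemmamonotonicity} with $\mathbf C=\mathbf C^{n}$ and $\mathbf C'=\mathbf C^{n+1}$; the inductive hypothesis supplies exactly its premise. The lemma then gives that $\mu_{-i,j}$ is $\mathbf C^{n}$-undominated, so $\mu_{-i,j}\in\mathbf C^{n+1}(i,j,\succsim)$. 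Since $(i,j)$ and $\mu_{-i,j}$ were arbitrary, $\mathbf C^{n+1}\supseteq\mathbf C^{n+2}$ pointwise, which closes the induction.

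There is no genuine obstacle here. The one point needing care is that \Cref{lemmamonotonicity} requires the inclusion for \emph{all} feasible coalitions simultaneously, not only for $(i,j)$ itself. This is because dominance of a conjecture for $(i,j)$ invokes the conjecture sets of other agents $k$ and pairs $(a,b)$. The induction must therefore be stated uniformly over $\mathcal S$, as above, rather than coalition by coalition. A secondary remark: because the sets are finite and decreasing, the sequence stabilizes after finitely many steps, so $\mathbf C^\infty$ equals some $\mathbf C^{\bar n}$. This is not needed for the statement but will presumably be used later.
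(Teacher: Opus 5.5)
Your induction is correct and is the same argument as the paper's: the base case comes from $\mathbf C^0=\mathcal M_{-i,j}$, and the inductive step from \Cref{lemmamonotonicity}. Stating the hypothesis uniformly over $\mathcal S$, as you do, is actually more careful than the paper, which fixes $(i,j)$ before inducting even though \Cref{lemmamonotonicity} needs the inclusion at every feasible coalition.

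One correction, to your framing only. That lemma says shrinking the testing system can only \emph{shrink} the set of undominated conjectures, so the one-step operator $\Phi$ is monotone (isotone), not antitone. Isotonicity is exactly what makes the iterates from $\mathbf C^0$ decrease. Your formal inductive step applies the lemma in the right direction, so the proof itself is unaffected.
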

\begin{proof}
    Fix any feasible coalition $(i,j)$.   We  proceed by induction over $n$.
    
     \noindent{\bf ($n=0$).}  $\mathbf{C}^0(i,j,\succsim)\supseteq \mathbf{C}^{1}(i,j,\succsim)$ holds by definition of $\mathbf{C}^0(i,j,\succsim)\equiv\mathcal{M}_{-i,j}$.
     
     \noindent{\bf ($n>0$).} Fix any $n>0$ and suppose that   $\mathbf{C}^{n-1}(i,j,\succsim)\supseteq \mathbf{C}^{n}(i,j,\succsim)$.
Then, by  \Cref{lemmamonotonicity} we have that  every conjecture for $(i,j)$ that is  $\mathbf{C}^{n}$-undominated is  $\mathbf{C}^{n-1}$-undominated as well.  Since, by definition, the set of  $\mathbf{C}^{n}$-undominated conjectures for $(i,j)$ is   $\mathbf{C}^{n+1}(i,j,\succsim)$ and  the set of  $\mathbf{C}^{n-1}$-undominated conjectures for $(i,j)$ is $\mathbf{C}^n(i,j,\succsim)$,   it follows  that $\mathbf{C}^n(i,j,\succsim)\supseteq \mathbf{C}^{n+1}(i,j,\succsim)$ for any  $(i,j)\in \mathcal{S}$.  
\end{proof}

\begin{lemma}
\label{lemmaDknonempty}
Fix $n\in\mathbb N$. If, for every feasible coalition $(i,j)\in\mathcal S$, $\mathbf C^n(i,j,\succsim)\neq\emptyset$, then the set of $\mathbf C^n$-stable matchings is nonempty.
\end{lemma}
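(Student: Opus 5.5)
The plan is to reduce the problem to a market without externalities and invoke Gale--Shapley. For every feasible pair $(i,j)\in\mathcal S$, with $i=j$ allowed, the hypothesis gives $\mathbf C^n(i,j,\succsim)\neq\emptyset$. Let $w_i(j)$ denote a $\succsim_i$-worst element of $\{\nu_{-i,j}+(i,j):\nu_{-i,j}\in\mathbf C^n(i,j,\succsim)\}$; it exists because the set is finite and nonempty. Define an auxiliary one-to-one marriage market without externalities in which agent $i$ ranks a potential partner $j\in\mathcal P_i$, including herself, by $w_i(j)$ under $\succsim_i$. Ties are broken arbitrarily into a strict order, so that a strict comparison $w_i(j)\succ_i w_i(j')$ is never reversed. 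By \citet{Gale1962} this auxiliary market has a stable matching $\mu$. I will show that $\mu$ is $\mathbf C^n$-stable.

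The key intermediate claim is that $\mu$ is $\mathbf C^n$-consistent, that is, $\mu_{-i,\mu(i)}\in\mathbf C^n(i,\mu(i),\succsim)$ for every $i$. For $n=0$ this is trivial. For $n\ge1$ I must show that $\mu_{-i,\mu(i)}$ is $\mathbf C^{n-1}$-undominated, and I argue by contradiction.

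Suppose condition (1) of \Cref{conjecturedominance} held for some $k\notin\{i,\mu(i)\}$. Then every outcome in $\mathbf C^{n-1}(k,\succsim)$, with $k$ single, would be strictly preferred by $k$ to every outcome in $\mathbf C^{n-1}(k,\mu(k),\succsim)$. By \Cref{LemmaSubset}, $\mathbf C^n\subseteq\mathbf C^{n-1}$ pointwise, so the same strict comparisons hold on the nonempty $\mathbf C^n$-sets. In particular $w_k(k)\succ_k w_k(\mu(k))$, so $k$ would rather be single in the auxiliary market, contradicting its individual rationality.

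Suppose instead condition (2) held for some $(a,b)$ outside $\{i,\mu(i)\}$. The same restriction to $\mathbf C^n$ gives $w_a(b)\succ_a w_a(\mu(a))$ and $w_b(a)\succ_b w_b(\mu(b))$, so $(a,b)$ would block $\mu$ in the auxiliary market, again a contradiction. This restriction step is the only delicate point: dominance is defined with respect to $\mathbf C^{n-1}$, while the auxiliary preferences use $\mathbf C^n$. The inclusion of \Cref{LemmaSubset} combined with nonemptiness is exactly what makes the two compatible.

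With consistency established, $\mu\succsim_i w_i(\mu(i))$ for every $i$, because $\mu$ itself is one of the conjectured outcomes for the pair $(i,\mu(i))$.

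\emph{$\mathbf C^n$-individual rationality.} If $k$ strictly preferred every outcome in $\mathbf C^n(k,\succsim)$ to $\mu$, then in particular $w_k(k)\succ_k\mu\succsim_k w_k(\mu(k))$. The comparison $w_k(k)\succ_k w_k(\mu(k))$ is strict, so it survives tie-breaking, and it contradicts auxiliary individual rationality.

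\emph{$\mathbf C^n$-unblocking.} If $(a,b)$ $\mathbf C^n$-blocked $\mu$, then $w_a(b)\succ_a\mu\succsim_a w_a(\mu(a))$, and symmetrically $w_b(a)\succ_b w_b(\mu(b))$. Both comparisons are strict, so $(a,b)$ would block $\mu$ in the auxiliary market, a contradiction.

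Hence $\mu$ is $\mathbf C^n$-stable. I expect the consistency claim to be the main obstacle, since without it the worst-case auxiliary ranking says nothing about $\mu$ itself. It is resolved by checking that auxiliary stability rules out both dominance conditions at level $n-1$.
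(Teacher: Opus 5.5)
Your proposal is correct and follows essentially the same route as the paper: the same worst-case auxiliary marriage market, the same intermediate $\mathbf C^n$-consistency claim for its stable matching, and the same final check of $\mathbf C^n$-individual rationality and $\mathbf C^n$-unblocking. The differences are minor. You prove consistency via $\mathbf C^{n-1}$-undominatedness and then restrict to the nonempty $\mathbf C^n$-sets, whereas the paper checks $\mathbf C^n$-undominatedness directly and uses $\mathbf C^{n+1}\subseteq\mathbf C^n$, which spares it a separate $n=0$ case. You also make the tie-breaking in the auxiliary market explicit, where the paper leaves it implicit.
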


\begin{proof}
Fix $n\in\mathbb N$ and suppose that $\mathbf C^n(i,j,\succsim)\neq\emptyset$ for every feasible coalition $(i,j)\in\mathcal S$.

For every feasible coalition $(i,j)\in\mathcal S$, choose a conjecture $c_{ij}\in \mathbf C^n(i,j,\succsim)$ such that $c_{ij}+(i,j)$ is worst for agent $i$ among all complete matchings induced by conjectures in $\mathbf C^n(i,j,\succsim)$; that is, $\nu+(i,j)\succsim_i c_{ij}+(i,j)$ for every $\nu\in \mathbf C^n(i,j,\succsim)$. Such a conjecture exists because $\mathbf C^n(i,j,\succsim)$ is finite and nonempty.

We now construct an auxiliary matching problem without externalities. The sets of agents are $A$ and $B$. For every agent $i\in N$, define a preference ordering $\succsim_i^n$ over $\mathcal P_i$ as follows: for all $j,h\in\mathcal P_i$, $j\succsim_i^n h$ if and only if $c_{ij}+(i,j)\succsim_i c_{ih}+(i,h)$. Since this is a standard two-sided matching problem, it admits a stable matching. Let $\mu^*$ be such a stable matching.

We first show that, for every $i\in N$, $\mu^*_{-i,\mu^*(i)}\in \mathbf C^n(i,\mu^*(i),\succsim)$. It is enough to show that $\mu^*_{-i,\mu^*(i)}$ is $\mathbf C^n$-undominated for every $i\in N$, because then $\mu^*_{-i,\mu^*(i)}\in \mathbf C^{n+1}(i,\mu^*(i),\succsim)\subseteq \mathbf C^n(i,\mu^*(i),\succsim)$, where the inclusion follows from \Cref{LemmaSubset}.

Suppose, toward a contradiction, that for some $i\in N$, the conjecture $\mu^*_{-i,\mu^*(i)}$ is not $\mathbf C^n$-undominated. There are two cases.

\noindent{\bf [1]} There exists $k\in N\setminus\{i,\mu^*(i)\}$ with $\mu^*(k)\neq k$ such that $\nu+(k)\succ_k \nu'+(k,\mu^*(k))$ for every $\nu\in\mathbf C^n(k,\succsim)$ and every $\nu'\in \mathbf C^n(k,\mu^*(k),\succsim)$. In particular, $c_{kk}+(k)\succ_k c_{k,\mu^*(k)}+(k,\mu^*(k))$. Hence $k\succ_k^n\mu^*(k)$, contradicting the individual rationality of $\mu^*$ in the auxiliary matching problem.

\noindent{\bf [2]} There exists $(a,b)\in A\times B$, with $a,b\notin\{i,\mu^*(i)\}$ and $\mu^*(a)\neq b$, such that $\nu+(a,b)\succ_a \nu'+(a,\mu^*(a))$ for every $\nu\in\mathbf C^n(a,b,\succsim)$ and every $\nu'\in \mathbf C^n(a,\mu^*(a),\succsim)$, and $\eta+(b,a)\succ_b \eta'+(b,\mu^*(b))$ for every $\eta\in\mathbf C^n(b,a,\succsim)$ and every $\eta'\in \mathbf C^n(b,\mu^*(b),\succsim)$. In particular, $c_{ab}+(a,b)\succ_a c_{a,\mu^*(a)}+(a,\mu^*(a))$ and $c_{ba}+(b,a)\succ_b c_{b,\mu^*(b)}+(b,\mu^*(b))$. Hence $b\succ_a^n\mu^*(a)$ and $a\succ_b^n\mu^*(b)$, contradicting the stability of $\mu^*$ in the auxiliary matching problem.
Therefore, for every $i\in N$, $\mu^*_{-i,\mu^*(i)}\in \mathbf C^n(i,\mu^*(i),\succsim)$.
We now prove that $\mu^*$ is $\mathbf C^n$-stable. Suppose, toward a contradiction, that $\mu^*$ is not $\mathbf C^n$-stable. There are two cases.

\noindent{\bf [1]} $\mu^*$ violates $\mathbf C^n$-individual rationality. Then there exists $k\in N$ with $\mu^*(k)\neq k$ such that $\nu+(k)\succ_k\mu^*$ for every $\nu\in\mathbf C^n(k,\succsim)$. Since $\mu^*_{-k,\mu^*(k)}\in \mathbf C^n(k,\mu^*(k),\succsim)$, and since $c_{k,\mu^*(k)}$ is worst for $k$ in $\mathbf C^n(k,\mu^*(k),\succsim)$, we have $\mu^*=\mu^*_{-k,\mu^*(k)}+(k,\mu^*(k))\succsim_k c_{k,\mu^*(k)}+(k,\mu^*(k))$. Moreover, $c_{kk}+(k)\succ_k\mu^*$. Hence $c_{kk}+(k)\succ_k c_{k,\mu^*(k)}+(k,\mu^*(k))$, so $k\succ_k^n\mu^*(k)$, contradicting individual rationality of $\mu^*$ in the auxiliary matching problem.

\noindent{\bf [2]} $\mu^*$ is $\mathbf C^n$-blocked by a pair $(a,b)\in A\times B$ with $\mu^*(a)\neq b$. Then $\nu+(a,b)\succ_a\mu^*$ for every $\nu\in\mathbf C^n(a,b,\succsim)$, and $\eta+(b,a)\succ_b\mu^*$ for every $\eta\in\mathbf C^n(b,a,\succsim)$. Since $\mu^*_{-a,\mu^*(a)}\in \mathbf C^n(a,\mu^*(a),\succsim)$ and $\mu^*_{-b,\mu^*(b)}\in \mathbf C^n(b,\mu^*(b),\succsim)$, and since $c_{a,\mu^*(a)}$ and $c_{b,\mu^*(b)}$ are worst conjectures in their respective sets, we have $\mu^*=\mu^*_{-a,\mu^*(a)}+(a,\mu^*(a))\succsim_a c_{a,\mu^*(a)}+(a,\mu^*(a))$ and $\mu^*=\mu^*_{-b,\mu^*(b)}+(b,\mu^*(b))\succsim_b c_{b,\mu^*(b)}+(b,\mu^*(b))$. Moreover, $c_{ab}+(a,b)\succ_a\mu^*$ and $c_{ba}+(b,a)\succ_b\mu^*$. Therefore, $c_{ab}+(a,b)\succ_a c_{a,\mu^*(a)}+(a,\mu^*(a))$ and $c_{ba}+(b,a)\succ_b c_{b,\mu^*(b)}+(b,\mu^*(b))$. Thus $b\succ_a^n\mu^*(a)$ and $a\succ_b^n\mu^*(b)$, contradicting the stability of $\mu^*$ in the auxiliary matching problem.

Hence $\mu^*$ is $\mathbf C^n$-stable.
\end{proof}

\begin{lemma}
\label{Lem_useful}
If $\mathbf C^n(r,s,\succsim)\neq\emptyset$ for every feasible coalition $(r,s)\in\mathcal S$, then $\mathbf C^{n+1}(i,j,\succsim)\neq\emptyset$ for every feasible coalition $(i,j)\in\mathcal S$. Consequently, $\mathbf C^n(i,j,\succsim)\neq\emptyset$ for every feasible coalition $(i,j)\in\mathcal S$ and every $n\in\mathbb N$.
\end{lemma}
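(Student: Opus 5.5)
The plan is to reuse the auxiliary-market construction from the proof of \Cref{lemmaDknonempty}, with the base case $n=0$ holding trivially since $\mathbf C^0(i,j,\succsim)=\mathcal M_{-i,j}\neq\emptyset$. One subtlety deserves care: $\mathcal M_{-i,j}$ is nonempty because the empty (all-single) matching of $N\setminus\{i,j\}$ always belongs to it. The second sentence of the lemma then follows by induction on $n$, starting at $n=0$.

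For the inductive step, assume $\mathbf C^n(r,s,\succsim)\neq\emptyset$ for every feasible $(r,s)$. Fix an arbitrary feasible $(i,j)$; we must produce an element of $\mathcal M_{-i,j}$ that is $\mathbf C^n$-undominated. As in \Cref{lemmaDknonempty}, for every feasible $(r,s)$ we choose a $\succsim_r$-worst conjecture $c_{rs}\in\mathbf C^n(r,s,\succsim)$. This defines partner preferences $\succsim^n_r$ over $\mathcal P_r$ by $s\succsim^n_r h$ iff $c_{rs}+(r,s)\succsim_r c_{rh}+(r,h)$. We then restrict this auxiliary marriage market to the agents $N\setminus\{i,j\}$, keeping the same induced preferences restricted to partners in the residual market (for $r\in N\setminus\{i,j\}$, the set $\mathcal P_r\setminus\{i,j\}$, which still contains $r$). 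Being a standard two-sided market without externalities, it admits a Gale--Shapley stable matching $\mu^*_{-i,j}\in\mathcal M_{-i,j}$, and this will be our candidate.

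It remains to check that $\mu^*_{-i,j}$ is $\mathbf C^n$-undominated, by following case [1] and case [2] of \Cref{lemmaDknonempty} verbatim. If condition (1) of \Cref{conjecturedominance} held for some $k\in N\setminus\{i,j\}$, then in particular $c_{kk}+(k)\succ_k c_{k,\mu^*(k)}+(k,\mu^*(k))$. This gives $k\succ^n_k\mu^*(k)$, contradicting individual rationality in the residual auxiliary market. If condition (2) held for some $(a,b)$ with $a,b\notin\{i,j\}$, then in particular $b\succ^n_a\mu^*(a)$ and $a\succ^n_b\mu^*(b)$, using the symmetry $\mathbf C^n(a,b,\succsim)=\mathbf C^n(b,a,\succsim)$ to evaluate $b$'s comparison. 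This contradicts the stability of $\mu^*_{-i,j}$. Both comparisons only ever involve the worst elements $c_{\cdot\cdot}$ of the nonempty sets $\mathbf C^n(\cdot,\cdot,\succsim)$, and that is exactly where the inductive hypothesis is needed: the universal quantifiers in the dominance conditions must not be vacuous, since an empty set would make every conjecture dominated. Hence $\mu^*_{-i,j}\in\mathbf C^{n+1}(i,j,\succsim)$.

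The only delicate point is the bookkeeping in the restriction step. The dominance conditions quantify only over agents and pairs inside $N\setminus\{i,j\}$, and their partners under $\mu^*_{-i,j}$ also lie there. So stability of the residual auxiliary market, rather than of the full one, is exactly what is required, and no agent's conjectured partner falls outside the residual market.
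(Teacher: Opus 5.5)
Your proposal is correct and follows the paper's proof essentially step for step. The paper likewise fixes $(i,j)$, builds an externality-free auxiliary market on $N\setminus\{i,j\}$ from $\succsim_\ell$-worst conjectures $c_{\ell h}\in\mathbf C^n(\ell,h,\succsim)$, takes a stable matching $\lambda_{-i,j}$ of it, rules out both dominance cases to conclude it is $\mathbf C^n$-undominated, and gets the second sentence by induction from $\mathbf C^0=\mathcal M_{-i,j}\neq\emptyset$; your explicit appeal to $\mathbf C^n(a,b,\succsim)=\mathbf C^n(b,a,\succsim)$ in case [2] spells out a step the paper uses only implicitly.
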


\begin{proof}
Fix a feasible coalition $(i,j)\in\mathcal S$ and suppose that $\mathbf C^n(r,s,\succsim)\neq\emptyset$ for every feasible coalition $(r,s)\in\mathcal S$.

We construct an auxiliary matching game on the set of agents $N\setminus\{i,j\}$. For every agent $\ell\in N\setminus\{i,j\}$ and every feasible partner $h\in\mathcal P_\ell\setminus\{i,j\}$, choose a conjecture $c_{\ell h}\in\mathbf C^n(\ell,h,\succsim)$ such that $c_{\ell h}+(\ell,h)$ is worst for $\ell$ among all complete matchings induced by conjectures in $\mathbf C^n(\ell,h,\succsim)$; that is, $\nu+(\ell,h)\succsim_\ell c_{\ell h}+(\ell,h)$ for every $\nu\in\mathbf C^n(\ell,h,\succsim)$. Such a conjecture exists because $\mathbf C^n(\ell,h,\succsim)$ is finite and nonempty.
Define preferences $\succsim^\circ_\ell$ over $\mathcal P_\ell\setminus\{i,j\}$ as follows: for all $h,h'\in\mathcal P_\ell\setminus\{i,j\}$, $h\succsim^\circ_\ell h'$ if and only if $c_{\ell h}+(\ell,h)\succsim_\ell c_{\ell h'}+(\ell,h')$.
The auxiliary game has no externalities, and therefore admits a stable matching. Denote such a matching by $\lambda_{-i,j}\in\mathcal M_{-i,j}$. We show that $\lambda_{-i,j}\in\mathbf C^{n+1}(i,j,\succsim)$, or equivalently, that $\lambda_{-i,j}$ is $\mathbf C^n$-undominated.
Suppose, toward a contradiction, that $\lambda_{-i,j}$ is not $\mathbf C^n$-undominated. There are two cases.

\noindent{\bf [1]} There exists $k\in N\setminus\{i,j\}$ with $\lambda_{-i,j}(k)\neq k$ such that $\nu+(k)\succ_k \nu'+(k,\lambda_{-i,j}(k))$ for every $\nu\in\mathbf C^n(k,\succsim)$ and every $\nu'\in\mathbf C^n(k,\lambda_{-i,j}(k),\succsim)$. In particular, $c_{kk}+(k)\succ_k c_{k,\lambda_{-i,j}(k)}+(k,\lambda_{-i,j}(k))$. Hence $k\succ^\circ_k \lambda_{-i,j}(k)$, contradicting the individual rationality of $\lambda_{-i,j}$ in the auxiliary game.

\noindent{\bf [2]} There exists $(a,b)\in(A\setminus\{i,j\})\times(B\setminus\{i,j\})$ with $\lambda_{-i,j}(a)\neq b$ such that $\nu+(a,b)\succ_a \nu'+(a,\lambda_{-i,j}(a))$ for every $\nu\in\mathbf C^n(a,b,\succsim)$ and every $\nu'\in\mathbf C^n(a,\lambda_{-i,j}(a),\succsim)$, and $\eta+(b,a)\succ_b \eta'+(b,\lambda_{-i,j}(b))$ for every $\eta\in\mathbf C^n(b,a,\succsim)$ and every $\eta'\in\mathbf C^n(b,\lambda_{-i,j}(b),\succsim)$. In particular, $c_{ab}+(a,b)\succ_a c_{a,\lambda_{-i,j}(a)}+(a,\lambda_{-i,j}(a))$ and $c_{ba}+(b,a)\succ_b c_{b,\lambda_{-i,j}(b)}+(b,\lambda_{-i,j}(b))$. Hence $b\succ^\circ_a \lambda_{-i,j}(a)$ and $a\succ^\circ_b \lambda_{-i,j}(b)$, contradicting the stability of $\lambda_{-i,j}$ in the auxiliary game.
Therefore, $\lambda_{-i,j}$ is $\mathbf C^n$-undominated. By definition of the rationalization procedure, $\lambda_{-i,j}\in\mathbf C^{n+1}(i,j,\succsim)$. Hence $\mathbf C^{n+1}(i,j,\succsim)\neq\emptyset$.
Since $(i,j)$ was arbitrary, $\mathbf C^{n+1}(i,j,\succsim)\neq\emptyset$ for every feasible coalition $(i,j)\in\mathcal S$. The final statement follows by induction from $\mathbf C^0(i,j,\succsim)=\mathcal M_{-i,j}\neq\emptyset$.
\end{proof}

For each feasible coalition $(i,j)\in\mathcal S$, \Cref{LemmaSubset} implies that
the sequence $\mathbf C^0(i,j,\succsim)\supseteq\mathbf C^1(i,j,\succsim)\supseteq\cdots$
is decreasing; since $\mathbf C^0(i,j,\succsim)=\mathcal M_{-i,j}$ is finite, it
stabilizes. As $\mathcal S$ is finite, there exists $q\in\mathbb N$ such that
$
\mathbf C^q(i,j,\succsim)=\mathbf C^{q+1}(i,j,\succsim)=\mathbf C^\infty(i,j,\succsim)
\qquad\text{for every }(i,j)\in\mathcal S.
$
By \Cref{Lem_useful}, $\mathbf C^q(i,j,\succsim)\neq\emptyset$ for every feasible
coalition; hence, by \Cref{lemmaDknonempty}, the set of $\mathbf C^q$-stable
matchings is nonempty. Since $\mathbf C^q=\mathbf C^\infty$ componentwise,
$\mathbf S^\infty\neq\emptyset$.
\hfill $\blacksquare$

\noindent{\bf Proof of \Cref{th2}.}
   Fix any  feasible coalition $(i,j)\in \mathcal{S}$. To prove $(1)$  we show that  if $\mathbf{C}$ is self-undominated, then $\mathbf{C}(i,j,\succsim)\subseteq \mathbf{C}^n(i,j,\succsim)$ for all $n \in \mathbf{ N}$. We proceed by induction over $n$.
     
     \noindent{\bf ($n=0$).} $\mathbf{C}(i,j,\succsim)\subseteq \mathbf{C}^0(i,j,\succsim)\equiv \mathcal{M}_{-i,j}$.
     
     \noindent{\bf ($n>0$).} Fix any $n>0$ and  suppose that $\mathbf{C}(i,j,\succsim) \subseteq \mathbf{C}^{n-1}(i,j,\succsim)$.
      Note that since $\mathbf{C}(i,j,\succsim)$ is self-undominated it contains all conjectures  of the pair $(i,j)$ that are $\mathbf{C}$-undominated. Also, by definition, $\mathbf{C}^{n}(i,j,\succsim)$ contains all conjectures of the pair $(i,j)$that are $\mathbf{C}^{n-1}$ undominated. 
      By \Cref{lemmamonotonicity}, we know that    every conjecture  that is $\mathbf{C}$-undominated is $\mathbf{C}^{n-1}$-undominated too. Then, $\mathbf{C}(i,j,\succsim)\subseteq \mathbf{C}^n(i,j,\succsim)$.
     Since by induction  $\mathbf{C}(i,j,\succsim)\subseteq \mathbf{C}^n(i,j,\succsim)$ for all $n\in \mathbb{ N}$, then $\mathbf{C}(i,j,\succsim)\subseteq \mathbf{C}^\infty(i,j,\succsim)$.
   Finally, we prove point (2), that is, \(\mathbf C^\infty\) is self-undominated.
Since the set of feasible coalitions is finite and, for each feasible coalition
\((r,s)\in\mathcal S\), the sequence
\(\mathbf C^0(r,s,\succsim)\supseteq \mathbf C^1(r,s,\succsim)\supseteq\cdots\)
is decreasing and finite, there exists \(q\in\mathbb N\) such that, for every
\((r,s)\in\mathcal S\),
\(\mathbf C^q(r,s,\succsim)=\mathbf C^{q+1}(r,s,\succsim)
=\mathbf C^\infty(r,s,\succsim)\).
Fix a feasible coalition \((i,j)\in\mathcal S\). By definition of the procedure,
$
\mathbf C^{q+1}(i,j,\succsim)
=
\{\mu_{-i,j}\in\mathcal M_{-i,j}:
\mu_{-i,j}\text{ is }\mathbf C^q\text{-undominated}\}.
$
Since \(\mathbf C^q=\mathbf C^\infty\) componentwise, this implies
$
\mathbf C^\infty(i,j,\succsim)
=
\{\mu_{-i,j}\in\mathcal M_{-i,j}:
\mu_{-i,j}\text{ is }\mathbf C^\infty\text{-undominated}\}.
$
Thus \(\mathbf C^\infty\) is self-undominated.
\hfill $\blacksquare$
\medskip

\noindent{\bf Proof of \Cref{admiStable}.}
Let \(\mu\in\mathbf S^\infty\). We show that, for every \(n\in\mathbb N\) and
every \(i\in N\),
$
\mu_{-i,\mu(i)}\in\mathbf C^n(i,\mu(i),\succsim).
$
The result follows by taking intersections over \(n\).
We proceed by induction on \(n\).

\noindent $(n=0)$ The claim follows from
$
\mathbf C^0(i,\mu(i),\succsim)=\mathcal M_{-i,\mu(i)}.
$

\noindent ($n>0$)
Suppose that, for every \(r\in N\),
$
\mu_{-r,\mu(r)}\in\mathbf C^n(r,\mu(r),\succsim).
$
Fix \(i\in N\) and let \(j=\mu(i)\). We show that
$
\mu_{-i,j}\in\mathbf C^{n+1}(i,j,\succsim).
$
Equivalently, we show that \(\mu_{-i,j}\) is \(\mathbf C^n\)-undominated.
Suppose, toward a contradiction, that \(\mu_{-i,j}\) is not
\(\mathbf C^n\)-undominated.

\noindent{\bf [1]} There exists \(k\in N\setminus\{i,j\}\) with \(\mu(k)\neq k\)
such that
$
\nu_{-k}+(k)\succ_k
\nu'_{-k,\mu(k)}+(k,\mu(k))
$
for every \(\nu_{-k}\in\mathbf C^n(k,\succsim)\) and every
$
\nu'_{-k,\mu(k)}\in\mathbf C^n(k,\mu(k),\succsim).
$
By the inductive hypothesis,
$
\mu_{-k,\mu(k)}\in\mathbf C^n(k,\mu(k),\succsim).
$
Moreover,
$
\mathbf C^\infty(k,\succsim)\subseteq\mathbf C^n(k,\succsim).
$
Therefore,
$
\nu_{-k}+(k)\succ_k\mu
\quad
\text{for every }\nu_{-k}\in\mathbf C^\infty(k,\succsim),
$
contradicting \(\mathbf C^\infty\)-individual rationality of \(\mu\).

\noindent{\bf [2]} There exists
$
(a,b)\in(A\setminus\{i,j\})\times(B\setminus\{i,j\})
$
with \(\mu(a)\neq b\) such that
$
\nu_{-a,b}+(a,b)\succ_a
\nu'_{-a,\mu(a)}+(a,\mu(a))
$
and
$
\nu_{-a,b}+(a,b)\succ_b
\nu''_{-b,\mu(b)}+(b,\mu(b))
$
for every \(\nu_{-a,b}\in\mathbf C^n(a,b,\succsim)\), every
\(\nu'_{-a,\mu(a)}\in\mathbf C^n(a,\mu(a),\succsim)\), and every
\(\nu''_{-b,\mu(b)}\in\mathbf C^n(b,\mu(b),\succsim)\).
By the inductive hypothesis,
$
\mu_{-a,\mu(a)}\in\mathbf C^n(a,\mu(a),\succsim)
\quad\text{and}\quad
\mu_{-b,\mu(b)}\in\mathbf C^n(b,\mu(b),\succsim).
$
Since
$
\mathbf C^\infty(a,b,\succsim)\subseteq\mathbf C^n(a,b,\succsim),
$
we obtain
$
\nu_{-a,b}+(a,b)\succ_a\mu
\quad\text{and}\quad
\nu_{-a,b}+(a,b)\succ_b\mu
$
for every \(\nu_{-a,b}\in\mathbf C^\infty(a,b,\succsim)\), contradicting
\(\mathbf C^\infty\)-unblocking of \(\mu\).
Thus \(\mu_{-i,j}\) is \(\mathbf C^n\)-undominated, and hence
$
\mu_{-i,j}\in\mathbf C^{n+1}(i,j,\succsim).
$
This completes the induction.
\hfill \(\blacksquare\)
\medskip

\noindent{\bf Proof of \Cref{Pareto}.}
Fix $\langle A,B,\succsim\rangle$. The proof builds on the following lemma.

\begin{lemma}
\label{Paretoimpr}
For any $\mu\in\mathbf S^\infty$, if $\mu$ is Pareto dominated by $\mu'$, then $\mu'\in\mathbf S^\infty$.
\end{lemma}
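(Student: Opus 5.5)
The plan is a direct transfer argument. The key observation is that the system $\mathbf C^\infty$ does not depend on the matching under evaluation. The sets $\mathbf C^\infty(k,\succsim)$ and $\mathbf C^\infty(a,b,\succsim)$ used to test $\mu'$ are therefore the same sets used to test $\mu$. Since $\mu'$ Pareto dominates $\mu$, we have $\mu'\succsim_i\mu$ for every $i\in N$. Transitivity then converts any profitable deviation from $\mu'$ into a profitable deviation from $\mu$, contradicting $\mu\in\mathbf S^\infty$. I will argue by contraposition: assume $\mu'$ is not $\mathbf C^\infty$-stable and derive that $\mu$ is not $\mathbf C^\infty$-stable.

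\emph{Individual rationality.} Suppose some $k\in N$ satisfies $\nu_{-k}+(k)\succ_k\mu'$ for every $\nu_{-k}\in\mathbf C^\infty(k,\succsim)$. Because $\mu'\succsim_k\mu$, transitivity of $\succsim_k$ gives $\nu_{-k}+(k)\succ_k\mu$ for every such $\nu_{-k}$. Hence $\mu$ violates $\mathbf C^\infty$-individual rationality.

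\emph{Unblocking.} Suppose some $(a,b)\in A\times B$ satisfies $\nu_{-a,b}+(a,b)\succ_a\mu'$ and $\nu_{-a,b}+(a,b)\succ_b\mu'$ for every $\nu_{-a,b}\in\mathbf C^\infty(a,b,\succsim)$. Using $\mu'\succsim_a\mu$ and $\mu'\succsim_b\mu$, the same strict inequalities hold with $\mu$ in place of $\mu'$. Hence $(a,b)$ $\mathbf C^\infty$-blocks $\mu$.

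The only delicate point, which I expect to be the main obstacle, is whether the transferred deviation is a legitimate violation for $\mu$. This matters if the definition is read as excluding deviations to one's current partner, i.e.\ the cases $\mu(k)=k$ or $\mu(a)=b$. I will handle these cases with \Cref{admiStable}. Since $\mu\in\mathbf S^\infty$, it is $\mathbf C^\infty$-consistent, so $\mu_{-k}\in\mathbf C^\infty(k,\succsim)$ when $\mu(k)=k$, and $\mu_{-a,b}\in\mathbf C^\infty(a,b,\succsim)$ when $\mu(a)=b$. Plugging this particular conjecture into the transferred inequality yields $\mu\succ_k\mu$ (respectively $\mu\succ_a\mu$), which is impossible. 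So the premise cannot arise in these cases, and in all remaining cases the transferred deviation is a genuine violation for $\mu$.

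Combining both parts, every violation of $\mathbf C^\infty$-stability for $\mu'$ yields a contradiction with $\mu\in\mathbf S^\infty$. Therefore $\mu'\in\mathbf S^\infty$.
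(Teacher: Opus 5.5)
Your proposal is correct and follows essentially the same route as the paper: any $\mathbf C^\infty$-violation of $\mu'$ is transferred to $\mu$ via $\mu'\succsim_i\mu$ and transitivity, and the $\mathbf C^\infty$-consistency of $\mu$ (\Cref{admiStable}) rules out the cases $\mu(k)=k$ and $\mu(a)=b$. The only cosmetic difference is where you substitute $\mu_{-k}$ (resp.\ $\mu_{-a,b}$): you use the transferred inequality and obtain $\mu\succ_k\mu$, whereas the paper uses the original one and obtains $\mu\succ_k\mu'$, contradicting $\mu'\succsim_k\mu$.
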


\begin{proof}
Let $\mu\in\mathbf S^\infty$ and suppose that there is a matching $\mu'$ such that $\mu'\succsim_i\mu$ for every $i\in N$ and $\mu'\succ_j\mu$ for some $j\in N$. Suppose, toward a contradiction, that $\mu'\notin\mathbf S^\infty$. Then one of the following two cases occurs.

\noindent{\bf [1]} There exists $k\in N$ with $\mu'(k)\neq k$ such that $\nu_{-k}+(k)\succ_k\mu'$ for every $\nu_{-k}\in\mathbf C^\infty(k,\succsim)$. First note that it must be that $\mu(k)\neq k$. Indeed, if $\mu(k)=k$, then by $\mathbf C^\infty$-consistency of $\mu$, $\mu_{-k}\in\mathbf C^\infty(k,\succsim)$. Hence $\mu=\mu_{-k}+(k)\succ_k\mu'$, contradicting $\mu'\succsim_k\mu$. Since $\mu(k)\neq k$, and since $\mu'\succsim_k\mu$, we have $\nu_{-k}+(k)\succ_k\mu'\succsim_k\mu$ for every $\nu_{-k}\in\mathbf C^\infty(k,\succsim)$. Thus $\mu$ violates $\mathbf C^\infty$-individual rationality, contradicting $\mu\in\mathbf S^\infty$.

\noindent{\bf [2]} There exists a pair $(a,b)\in A\times B$ with $\mu'(a)\neq b$ such that $\nu_{-a,b}+(a,b)\succ_a\mu'$ and $\nu_{-a,b}+(a,b)\succ_b\mu'$ for every $\nu_{-a,b}\in\mathbf C^\infty(a,b,\succsim)$. First note that it must be that $\mu(a)\neq b$. Indeed, if $\mu(a)=b$, then by $\mathbf C^\infty$-consistency of $\mu$, $\mu_{-a,b}\in\mathbf C^\infty(a,b,\succsim)$. Hence $\mu=\mu_{-a,b}+(a,b)\succ_a\mu'$ and $\mu=\mu_{-a,b}+(a,b)\succ_b\mu'$, contradicting $\mu'\succsim_a\mu$ and $\mu'\succsim_b\mu$. Since $\mu(a)\neq b$, and since $\mu'\succsim_a\mu$ and $\mu'\succsim_b\mu$, we have $\nu_{-a,b}+(a,b)\succ_a\mu'\succsim_a\mu$ and $\nu_{-a,b}+(a,b)\succ_b\mu'\succsim_b\mu$ for every $\nu_{-a,b}\in\mathbf C^\infty(a,b,\succsim)$. Thus $\mu$ is $\mathbf C^\infty$-blocked by $(a,b)$, contradicting $\mu\in\mathbf S^\infty$.
Hence $\mu'\in\mathbf S^\infty$.
\end{proof}

Since $\mathcal M$ is finite, starting from any conjecture-rationalizable stable matching and iteratively applying \Cref{Paretoimpr}, we reach a conjecture-rationalizable stable matching that is Pareto efficient.
\hfill $\blacksquare$

\medskip


\noindent{\bf Proof of \Cref{thRat}.}
The proof is implied by \Cref{th1} and \Cref{thRef} together.
\hfill $\blacksquare$

\noindent{\bf Proof of \Cref{thRef}.}
Let $\mu\in\mathbf S^\infty$. Suppose, toward a contradiction, that $\mu\notin\mathbf M^\infty$. Then $\mu$ is $\mathbf C^\infty$-dominated. There are two cases.

\noindent{\bf [1]} There exists $k\in N$ with $\mu(k)\neq k$ such that $\nu_{-k}+(k)\succ_k\nu'_{-k,\mu(k)}+(k,\mu(k))$ for every $\nu_{-k}\in\mathbf C^\infty(k,\succsim)$ and every $\nu'_{-k,\mu(k)}\in\mathbf C^\infty(k,\mu(k),\succsim)$. By \Cref{admiStable}, $\mu_{-k,\mu(k)}\in\mathbf C^\infty(k,\mu(k),\succsim)$. Therefore, $\nu_{-k}+(k)\succ_k\mu$ for every $\nu_{-k}\in\mathbf C^\infty(k,\succsim)$, which violates $\mathbf C^\infty$-individual rationality of $\mu$.

\noindent{\bf [2]} There exists $(a,b)\in A\times B$ with $\mu(a)\neq b$ such that $\nu_{-a,b}+(a,b)\succ_a\nu'_{-a,\mu(a)}+(a,\mu(a))$ and $\nu_{-a,b}+(a,b)\succ_b\nu''_{-b,\mu(b)}+(b,\mu(b))$ for every $\nu_{-a,b}\in\mathbf C^\infty(a,b,\succsim)$, every $\nu'_{-a,\mu(a)}\in\mathbf C^\infty(a,\mu(a),\succsim)$, and every $\nu''_{-b,\mu(b)}\in\mathbf C^\infty(b,\mu(b),\succsim)$. By \Cref{admiStable}, $\mu_{-a,\mu(a)}\in\mathbf C^\infty(a,\mu(a),\succsim)$ and $\mu_{-b,\mu(b)}\in\mathbf C^\infty(b,\mu(b),\succsim)$. Hence, $\nu_{-a,b}+(a,b)\succ_a\mu$ and $\nu_{-a,b}+(a,b)\succ_b\mu$ for every $\nu_{-a,b}\in\mathbf C^\infty(a,b,\succsim)$, which violates $\mathbf C^\infty$-unblocking of $\mu$.
Thus $\mu\in\mathbf M^\infty$.
\hfill $\blacksquare$
\medskip

The following two lemmas will be used in the proof of \Cref{epistemic1}.

\begin{lemma}
    \label{lemEpi}
    If $t_i\in \mathbb{CB}_i(\rm {PR})$ then $\psi(t_i)_k^j\in \mathbb{CB}_k(\rm{PR})$ for all $j\in \mathcal{P}_i$ and all $k\in N\setminus\{i,j\}$. 
    \end{lemma}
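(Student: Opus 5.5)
The plan is to unfold the definition of $\mathbb{CB}_i(\mathrm{PR})$ as the intersection of the iterated belief sets $\mathbb{B}^n_i(\mathrm{PR})$. We show that for every $n\geq 1$, $t_i\in\mathbb{B}^{n+1}_i(\mathrm{PR})$ implies $\psi(t_i)^j_k\in\mathbb{B}^n_k(\mathrm{PR})$ for all $j\in\mathcal P_i$ and all $k\in N\setminus\{i,j\}$. This is immediate from the recursive definition: for $n+1>1$,
\[
\mathbb{B}^{n+1}_i(\mathrm{PR})=\{t_i\in T_i\mid \psi(t_i)^j_k\in\mathbb{B}^{n}_k(\mathrm{PR})\ \forall j\in\mathcal P_i,\ \forall k\in N\setminus\{i,j\}\},
\]
so membership of $t_i$ in the left-hand side is literally the statement that every conjectured type $\psi(t_i)^j_k$ lies in $\mathbb{B}^n_k(\mathrm{PR})$.

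Next, fix $t_i\in\mathbb{CB}_i(\mathrm{PR})$, a partner $j\in\mathcal P_i$ and $k\in N\setminus\{i,j\}$. Since $t_i\in\mathbb{B}^{n+1}_i(\mathrm{PR})$ for every $n\geq1$, the previous step gives $\psi(t_i)^j_k\in\mathbb{B}^n_k(\mathrm{PR})$ for every $n\geq 1$. Intersecting over $n$ yields $\psi(t_i)^j_k\in\bigcap_{n\geq1}\mathbb{B}^n_k(\mathrm{PR})=\mathbb{CB}_k(\mathrm{PR})$, as claimed. No case split on whether $j=i$ is needed, since the recursion quantifies uniformly over all $j\in\mathcal P_i$, including $j=i$.

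The only point requiring care, and the one I expect to be the main obstacle if any, is the indexing. The level $n=1$ of the hierarchy is defined through $\theta_i(t_i)\in\mathrm{PR}_{-i}$ rather than through $\psi$, so the shift must go from level $n+1$ of $i$ to level $n$ of $k$, never to level $0$. Because we only ever use $\mathbb{B}^{n+1}_i$ with $n\geq1$, level $1$ for $k$ is obtained from level $2$ for $i$, and this issue disappears. No monotonicity of the sequence $\mathbb{B}^n$ is required, since we intersect over all $n$ directly.
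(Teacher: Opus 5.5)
Your proposal is correct and follows the paper's argument: membership of $t_i$ in every $\mathbb{B}^{n+1}_i(\mathrm{PR})$ gives $\psi(t_i)^j_k\in\mathbb{B}^n_k(\mathrm{PR})$ for every $n\geq 1$ by the recursive definition, and intersecting over $n$ yields $\mathbb{CB}_k(\mathrm{PR})$. Your explicit shift from level $n+1$ to level $n$ is slightly cleaner than the paper's wording, which says the conclusion holds for all $n\in\mathbb{N}$ and so implicitly refers to an undefined $\mathbb{B}^0_k$.
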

\begin{proof}
    Fix any $i\in N$ and suppose that $t_i\in \mathbb{CB}_i(\rm {PR})$. Then,  it holds that $t_i\in \mathbb{B}^n_i(\rm{PR}) $ for all $n>1$. It follows that by  definition of $\mathbb{B}^n_i(\rm{PR})$ we have that  $\psi(t_i)_k^j\in \mathbb{B}_k^{n-1}(\rm{PR})$ for all $j\in \mathcal{P}_i$  all $k\in N\setminus\{i,j\}$ and all $n\in \mathbb{N}$, that is, $\psi(t_i)_k^j\in \mathbb{CB}_k(\rm{PR})$ for all $j\in \mathcal{P}_i$ and all $k\in N\setminus\{i,j\}$. 
\end{proof}

The following \Cref{lemEpi2} states that if the type of any agent $i$ believes  in pairwise rationality, then the  first order belief of $i$ consists of a sequence of rationalizable conjectures. 
\begin{lemma}
    \label{lemEpi2}
If $t_i\in \mathbb{CB}_i(\rm PR)$ then $\phi(t_i)\in (\mathbf{C}^\infty(i,j,\succsim))_{j\in \mathcal{P}_i}$.
\end{lemma}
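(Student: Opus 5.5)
The natural route is induction on $n$, proving a stronger claim: if $t_i\in\mathbb{B}^{n}_i(\mathrm{PR})$ for every $n$, then $\phi(t_i)^j\in\mathbf C^{n}(i,j,\succsim)$ for every $j\in\mathcal P_i$. The claim has to be formulated uniformly over agents and types, because the inductive step passes from $i$ to the types $\psi(t_i)^j_k$ that $i$ attributes to others. By \Cref{lemEpi}, these attributed types again lie in $\mathbb{CB}_k(\mathrm{PR})$. So the cleanest statement to prove is: for every $n$, every agent $i$, every $t_i\in\mathbb{CB}_i(\mathrm{PR})$ and every $j\in\mathcal P_i$, we have $\phi(t_i)^j\in\mathbf C^n(i,j,\succsim)$. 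Intersecting over $n$ then gives the lemma.

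The base case $n=0$ is trivial, since $\mathbf C^0(i,j,\succsim)=\mathcal M_{-i,j}$. For the inductive step, assume the claim holds at $n$ for all agents and all types in $\mathbb{CB}(\mathrm{PR})$. Fix $t_i\in\mathbb{CB}_i(\mathrm{PR})$, $j\in\mathcal P_i$, and write $\mu_{-i,j}=\phi(t_i)^j$. We must show that $\mu_{-i,j}$ is $\mathbf C^n$-undominated. Since $t_i\in\mathbb B_i(\mathrm{PR})$, the pair $(\mu_{-i,j},\psi(t_i)^j)$ lies in $\mathrm{PR}_{-i}$ at coordinate $j$. Hence $\mu_{-i,j}+(i,j)$ is pairwise rational for $(N\setminus\{i,j\},t_{-i,j})$, where $t_k=\psi(t_i)^j_k$. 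By \Cref{lemEpi}, each such $t_k\in\mathbb{CB}_k(\mathrm{PR})$, so the inductive hypothesis gives $\phi(t_k)^h\in\mathbf C^n(k,h,\succsim)$ for every $h\in\mathcal P_k$.

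Now suppose, for contradiction, that $\mu_{-i,j}$ is $\mathbf C^n$-dominated.

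In case (1) of \Cref{conjecturedominance}, there is $k\notin\{i,j\}$ with $\nu_{-k}+(k)\succ_k\nu'+(k,\mu(k))$ for all $\nu_{-k}\in\mathbf C^n(k,\succsim)$ and all $\nu'\in\mathbf C^n(k,\mu(k),\succsim)$. Instantiating at $\nu_{-k}=\phi(t_k)^k$ and $\nu'=\phi(t_k)^{\mu(k)}$, which is allowed by the inductive hypothesis, gives $\phi(t_k)^k+(k)\succ_k\phi(t_k)^{\mu(k)}+(k,\mu(k))$. This contradicts clause 1 of \Cref{PR}. If $\mu(k)=k$, the strict preference compares a matching with itself, so this case cannot arise.

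In case (2), there is a pair $(a,b)$ outside $\{i,j\}$. Here the conjectures must be chosen carefully. Dominance uses a single $\nu_{-a,b}\in\mathbf C^n(a,b,\succsim)$ for both $a$ and $b$. Pairwise rationality, by contrast, uses $\phi(t_a)^b$ for $a$ and $\phi(t_b)^a$ for $b$, and these may differ. Since the dominance condition holds for every $\nu_{-a,b}$, I would apply it once with $\nu=\phi(t_a)^b$ to get $a$'s strict preference. I would apply it a second time with $\nu=\phi(t_b)^a$ to get $b$'s. This second application uses the symmetry $\mathbf C^n(a,b,\succsim)=\mathbf C^n(b,a,\succsim)$, so that $\phi(t_b)^a\in\mathbf C^n(a,b,\succsim)$. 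The comparison conjectures are $\phi(t_a)^{\mu(a)}$ and $\phi(t_b)^{\mu(b)}$, both of which lie in the relevant $\mathbf C^n$ sets by induction. The result contradicts clause 2 of \Cref{PR} for the pair $(a,b)\subseteq N\setminus\{i,j\}$.

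The step I expect to be delicate is this decoupling in case (2), together with the well-foundedness of the induction. The hypothesis must be applied to types attributed by $i$, which \Cref{lemEpi} keeps inside $\mathbb{CB}(\mathrm{PR})$, so the induction should be on $n$ simultaneously for all agents and all such types rather than on $i$'s hierarchy alone. With both cases excluded, $\mu_{-i,j}\in\mathbf C^{n+1}(i,j,\succsim)$, which completes the induction.
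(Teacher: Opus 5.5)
Your proposal is correct and follows essentially the same route as the paper. Both use a simultaneous induction on $n$ over all agents and all types in $\mathbb{CB}(\mathrm{PR})$, use \Cref{lemEpi} to pass the inductive hypothesis to the attributed types $\psi(t_i)^j_k$, and in each dominance case contradict pairwise rationality of $\phi(t_i)^j+(i,j)$ for $(N\setminus\{i,j\},\psi(t_i)^j)$. Your explicit appeal to the symmetry $\mathbf C^n(a,b,\succsim)=\mathbf C^n(b,a,\succsim)$ in case (2), and your irreflexivity remark for $\mu(k)=k$, spell out steps the paper leaves implicit.
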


\begin{proof}
 We show that  if  $t_i\in \mathbb{CB}_i(\rm PR)$ then $\phi(t_i)\in (\mathbf{C}^n(i,j,\succsim))_{j\in \mathcal{P}_i}$ for all $n\in \mathbb{N}$.
 Fix any $i\in N$ and suppose that $t_i\in \mathbb{CB}_i(\rm {PR})$.
 We proceed by induction over $n\in \mathbb{N}$.
\medskip

    \noindent {\bf ($n=0$).}  $\phi(t_i)\in (\mathbf{C}^0(i,j,\succsim))_{j\in \mathcal{P}_i}$ holds by definition of $\mathbf{C}^0$ which contains all possible conjectures.
    \medskip
    
    \noindent {\bf ($n>0$).} Fix any $n>0$ and suppose that  for all $i\in N$ if  $t_i\in \mathbb{CB}_i(\rm  PR)$ then $\phi(t_i)\in (\mathbf{C}^{n-1}(i,j,\succsim))_{j\in \mathcal{P}_i}$. 
Fix any $j \in \mathcal{P}_i$. Since the choice of $j$ was arbitrary, it is sufficient to  show that   $\phi(t_i)^j\in \mathbf{C}^{n}(i,j,\succsim)$, i.e., $\phi(t_i)^j$ is $\mathbf{C}^{n-1}$-undominated. Suppose that  $j\neq i$. Then, the second bullet point in the definition of $\mathbf{C}$-dominance applies. Define $\mu_{-i,j}= \phi(t_i)^j$  and suppose, toward a contradiction, that $\mu_{-i,j}$ is not $\mathbf{C}^{n-1}$-undominated. There are  two (non-mutually exclusive) cases.

 \noindent{\bf[1]} There is a $k\in   N$ with $i\neq k\neq j$ and $\mu_{-i,j}(k)\neq k$ such that $\nu_{-k}+(k)\succ_k \nu'_{-k,\mu_{-i,j}(k)}+(k,\mu_{-i,j}(k)) $
    for every $\nu_{-k}\in \mathbf{C}^{n-1}(k,\succsim)$ and every $\nu'_{-k,\mu_{-i,j}(k)}\in \mathbf{C}^{n-1}(k,\mu_{-i,j}(k),\succsim)$. Since by assumption $t_i\in \mathbb{CB}_i(\rm PR)$, \Cref{lemEpi} applies and we have that  $\psi(t_i)_k^j\in \mathbb{CB}_k(\rm{PR})$. Moreover,  by the inductive hypothesis $\phi(\psi(t_i)_k^j)^h \in  \mathbf{C}^{n-1}(k,h,\succsim)$ for all  ${h \in \mathcal{P}_k} $. In particular, $\phi(\psi(t_i)_k^j)^k \in  \mathbf{C}^{n-1}(k,\succsim)$ and $\phi(\psi(t_i)_k^j)^{\mu_{-i,j}(k)} \in  \mathbf{C}^{n-1}(k,\mu_{-i,j}(k),\succsim)$.
    This, together with the contradiction hypothesis, implies that
   $ \phi(\psi(t_i)^j_k)^k+(k)
\succ_k
\phi(\psi(t_i)^j_k)^{\mu_{-i,j}(k)}+(k,\mu_{-i,j}(k))$.
       However, since $t_i \in \mathbb{CB}_i(\rm PR)$, we have $(\phi(t_i),\psi(t_i))=\theta_i(t_i)\in \bsf{PR}_{-i}$. This means that  $\phi(t_i)^{j}$ is pairwise rational for $(N \setminus \{i,j\}, \psi(t_i)^{j})$, so by Point $1$ in \Cref{PR} $\nexists k  \in N \setminus \{i,j\}$ such that $ \phi(\psi(t_i)^j_k)^k+(k)
\succ_k
\phi(\psi(t_i)^j_k)^{\mu_{-i,j}(k)}+(k,\mu_{-i,j}(k))$, a contradiction.
    
    \noindent{\bf [2]} There is a pair $(a,b)\in (A\setminus\{i,j\})\times (B\setminus \{i,j\})$ with $\mu_{-i,j}(a)\neq b$ such that $\nu_{-a,b}+(a,b)\succ_a \nu'_{-a,\mu_{-i,j}(a)}+(a,\mu_{-i,j}(a))$  and $\nu_{-a,b}+(a,b)\succ_b \nu''_{-b,\mu_{-i,j}(b)}+(b,\mu_{-i,j}(b))$ for every $\nu_{-a,b}\in \mathbf{C}^{n-1}(a,b,\succsim)$,   every $\nu'_{-a,\mu_{-i,j}(a)}\in \mathbf{C}^{n-1}(a,\mu_{-i,j}(a),\succsim)$ and  every $\nu''_{-b,\mu_{-i,j}(b)}\in \mathbf{C}^{n-1}(b,\mu_{-i,j}(b),\succsim)$. Since by assumption $t_i\in \mathbb{CB}_i(\rm PR)$, \Cref{lemEpi} applies and we have that $\psi(t_i)_a^j\in \mathbb{CB}_a(\rm PR)$ and $\psi(t_i)_b^j\in \mathbb{CB}_b(\rm PR)$. 
    Moreover,   by the inductive hypothesis,  
    $\phi(\psi(t_i)_a^j)^b \in  \mathbf{C}^{n-1}(a,b,\succsim)$, $\phi(\psi(t_i)_a^j)^{\mu_{-i,j}(a)} \in  \mathbf{C}^{n-1}(a,\mu_{-i,j}(a),\succsim)$, $\phi(\psi(t_i)_b^j)^a \in  \mathbf{C}^{n-1}(b,a,\succsim)$, and $\phi(\psi(t_i)_b^j)^{\mu_{-i,j}(b)} \in  \mathbf{C}^{n-1}(b,\mu_{-i,j}(b),\succsim)$.
    This, together with the contradiction hypothesis, implies that
    $\phi(\psi(t_i)^j_a)^b+(a,b)
\succ_a
\phi(\psi(t_i)^j_a)^{\mu_{-i,j}(a)}+(a,\mu_{-i,j}(a))$ and   $\phi(\psi(t_i)_b^j)^{a}+(b,a) \succ_b \phi(\psi(t_i)_b^j)^{\mu_{-i,j}(b)}+(b,\mu_{-i,j}(b))$. 
      However, since $t_i \in \mathbb{CB}_i(\rm PR)$, we have $(\phi(t_i),\psi(t_i))=\theta_i(t_i)\in \bsf{PR}_{-i}$. Then, $\phi(t_i)^j$ is pairwise rational for $(N \setminus \{i,j\}, \psi(t_i)^j)$, so by Point $2$ in \Cref{PR} there is no $(a,b) \in (A \setminus \{i,j\}) \times  (B \setminus \{i,j\})$ such that $\phi(\psi(t_i)^j_a)^b+(a,b)
\succ_a
\phi(\psi(t_i)^j_a)^{\mu_{-i,j}(a)}+(a,\mu_{-i,j}(a))$ and   $\phi(\psi(t_i)_b^j)^{a}+(b,a) \succ_b \phi(\psi(t_i)_b^j)^{\mu_{-i,j}(b)}+(b,\mu_{-i,j}(b))$, a contradiction.
The case  $i=j$  is analogous and left to the reader.  
\end{proof}
\medskip

\noindent{\bf Proof of \Cref{epistemic1}.}
Fix $\langle A,B,\succsim\rangle$. We start by showing point (1). Take any epistemic type structure $\langle A,B,(T_i,\theta_i)_{i\in N}\rangle$ appended to $\langle A,B,\succsim\rangle$ and take any $(\mu,t)\in\bsf{PRCBPR}$. We show that $\mu\in\mathbf M^\infty$.

Suppose, toward a contradiction, that $\mu\notin\mathbf M^\infty$. Then $\mu$ is $\mathbf C^\infty$-dominated. There are two cases.

\noindent{\bf [1]} There exists $k\in N$ with $\mu(k)\neq k$ such that $\nu_{-k}+(k)\succ_k\nu'_{-k,\mu(k)}+(k,\mu(k))$ for every $\nu_{-k}\in\mathbf C^\infty(k,\succsim)$ and every $\nu'_{-k,\mu(k)}\in\mathbf C^\infty(k,\mu(k),\succsim)$. Since $(\mu,t)\in\bsf{PRCBPR}$, we have $t\in\mathbb{CB}({\rm PR})$, and hence $t_k\in\mathbb{CB}_k({\rm PR})$. By \Cref{lemEpi2}, $\phi(t_k)^j\in\mathbf C^\infty(k,j,\succsim)$ for every $j\in\mathcal P_k$. In particular, $\phi(t_k)^k\in\mathbf C^\infty(k,\succsim)$ and $\phi(t_k)^{\mu(k)}\in\mathbf C^\infty(k,\mu(k),\succsim)$. Therefore, $\phi(t_k)^k+(k)\succ_k\phi(t_k)^{\mu(k)}+(k,\mu(k))$, contradicting pairwise rationality of $\mu$ for $(N,t)$.

\noindent{\bf [2]} There exists $(a,b)\in A\times B$ with $\mu(a)\neq b$ such that $\nu_{-a,b}+(a,b)\succ_a\nu'_{-a,\mu(a)}+(a,\mu(a))$ and $\nu_{-a,b}+(a,b)\succ_b\nu''_{-b,\mu(b)}+(b,\mu(b))$ for every $\nu_{-a,b}\in\mathbf C^\infty(a,b,\succsim)$, every $\nu'_{-a,\mu(a)}\in\mathbf C^\infty(a,\mu(a),\succsim)$, and every $\nu''_{-b,\mu(b)}\in\mathbf C^\infty(b,\mu(b),\succsim)$. Since $(\mu,t)\in\bsf{PRCBPR}$, $t_a\in\mathbb{CB}_a({\rm PR})$ and $t_b\in\mathbb{CB}_b({\rm PR})$. By \Cref{lemEpi2}, $\phi(t_a)^j\in\mathbf C^\infty(a,j,\succsim)$ for every $j\in\mathcal P_a$ and $\phi(t_b)^j\in\mathbf C^\infty(b,j,\succsim)$ for every $j\in\mathcal P_b$. In particular, $\phi(t_a)^b\in\mathbf C^\infty(a,b,\succsim)$, $\phi(t_a)^{\mu(a)}\in\mathbf C^\infty(a,\mu(a),\succsim)$, $\phi(t_b)^a\in\mathbf C^\infty(b,a,\succsim)$, and $\phi(t_b)^{\mu(b)}\in\mathbf C^\infty(b,\mu(b),\succsim)$. Therefore, $\phi(t_a)^b+(a,b)\succ_a\phi(t_a)^{\mu(a)}+(a,\mu(a))$ and $\phi(t_b)^a+(b,a)\succ_b\phi(t_b)^{\mu(b)}+(b,\mu(b))$, contradicting pairwise rationality of $\mu$ for $(N,t)$.
Thus $\mu\in\mathbf M^\infty$.

We now prove point (2). Let $\mu\in\mathbf M^\infty$. For every feasible $(i,j)\in\mathcal S$, choose conjectures $\mu^+_{-i,j},\mu^-_{-i,j}\in\mathbf C^\infty(i,j,\succsim)$ such that $\mu^+_{-i,j}+(i,j)\succsim_i\nu_{-i,j}+(i,j)\succsim_i\mu^-_{-i,j}+(i,j)$ for every $\nu_{-i,j}\in\mathbf C^\infty(i,j,\succsim)$. Thus, $\mu^+_{-i,j}$ is a best rationalizable conjecture for $i$ conditional on being matched with $j$, and $\mu^-_{-i,j}$ is a worst one.
By completeness of the type structure, choose a type profile $t$ whose first-order beliefs satisfy $\phi(t_i)^{\mu(i)}=\mu^+_{-i,\mu(i)}$ for every $i\in N$, and $\phi(t_i)^j=\mu^-_{-i,j}$ for every $j\in\mathcal P_i\setminus\{\mu(i)\}$. Choose higher-order beliefs recursively in the same way: every type conjectured by any agent has first-order beliefs selected from $\mathbf C^\infty$ according to the same best-on-path and worst-off-path rule, relative to the matching conjectured at the preceding level. Such a hierarchy exists by completeness.
We first show that $\mu$ is pairwise rational for $(N,t)$. Suppose not. If individual rationality is violated, then there exists $k\in N$ with $\mu(k)\neq k$ such that $\phi(t_k)^k+(k)\succ_k\phi(t_k)^{\mu(k)}+(k,\mu(k))$. By construction, $\phi(t_k)^k=\mu^-_{-k,k}$ and $\phi(t_k)^{\mu(k)}=\mu^+_{-k,\mu(k)}$. Hence $\nu_{-k}+(k)\succ_k\nu'_{-k,\mu(k)}+(k,\mu(k))$ for every $\nu_{-k}\in\mathbf C^\infty(k,\succsim)$ and every $\nu'_{-k,\mu(k)}\in\mathbf C^\infty(k,\mu(k),\succsim)$, contradicting $\mu\in\mathbf M^\infty$. If pairwise rationality is violated, then there exists $(a,b)\in A\times B$ with $\mu(a)\neq b$ such that $\phi(t_a)^b+(a,b)\succ_a\phi(t_a)^{\mu(a)}+(a,\mu(a))$ and $\phi(t_b)^a+(b,a)\succ_b\phi(t_b)^{\mu(b)}+(b,\mu(b))$. By construction, this implies that the pair $(a,b)$ $\mathbf C^\infty$-dominates $\mu$, again contradicting $\mu\in\mathbf M^\infty$. Thus $(\mu,t)\in\bsf{PR}$.
Finally, by the recursive construction of higher-order beliefs and by the self-undominance of $\mathbf C^\infty$, every type conjectured at every order also satisfies pairwise rationality. Hence $t\in\mathbb{CB}({\rm PR})$. Therefore, $(\mu,t)\in\bsf{PRCBPR}$, and so $\mu\in{\rm proj}_{\mathcal M}\bsf{PRCBPR}$.
\hfill $\blacksquare$
\medskip

\noindent{\bf Proof of \Cref{epistemic2}.}
Fix $\langle A,B,\succsim\rangle$. We start by showing point (1). Take any epistemic type structure $\langle A,B,(T_i,\theta_i)_{i\in N}\rangle$ appended to $\langle A,B,\succsim\rangle$ and take any $(\mu,t)\in\bsf{PRCBPR}\cap\bsf{BC}$. We show that $\mu\in\mathbf S^\infty$.
Suppose, toward a contradiction, that $\mu\notin\mathbf S^\infty$. There are two cases.

\noindent{\bf [1]} There exists $k\in N$ with $\mu(k)\neq k$ such that $\nu_{-k}+(k)\succ_k\mu$ for every $\nu_{-k}\in\mathbf C^\infty(k,\succsim)$. Since $(\mu,t)\in\bsf{BC}$, $\phi(t_k)^{\mu(k)}+(k,\mu(k))=\mu$. Since $(\mu,t)\in\bsf{PRCBPR}$, $t_k\in\mathbb{CB}_k({\rm PR})$, and by \Cref{lemEpi2}, $\phi(t_k)^k\in\mathbf C^\infty(k,\succsim)$. Therefore, $\phi(t_k)^k+(k)\succ_k\phi(t_k)^{\mu(k)}+(k,\mu(k))$, contradicting pairwise rationality.

\noindent{\bf [2]} There exists $(a,b)\in A\times B$ with $\mu(a)\neq b$ such that $\nu_{-a,b}+(a,b)\succ_a\mu$ and $\nu_{-a,b}+(a,b)\succ_b\mu$ for every $\nu_{-a,b}\in\mathbf C^\infty(a,b,\succsim)$. Since $(\mu,t)\in\bsf{BC}$, $\phi(t_a)^{\mu(a)}+(a,\mu(a))=\mu$ and $\phi(t_b)^{\mu(b)}+(b,\mu(b))=\mu$. Since $(\mu,t)\in\bsf{PRCBPR}$, by \Cref{lemEpi2}, $\phi(t_a)^b\in\mathbf C^\infty(a,b,\succsim)$ and $\phi(t_b)^a\in\mathbf C^\infty(b,a,\succsim)$. Therefore, $\phi(t_a)^b+(a,b)\succ_a\phi(t_a)^{\mu(a)}+(a,\mu(a))$ and $\phi(t_b)^a+(b,a)\succ_b\phi(t_b)^{\mu(b)}+(b,\mu(b))$, contradicting pairwise rationality.
Thus $\mu\in\mathbf S^\infty$.

We now prove point (2). Let $\mu\in\mathbf S^\infty$. For every feasible $(i,j)\in\mathcal S$, choose conjectures $\mu^+_{-i,j},\mu^-_{-i,j}\in\mathbf C^\infty(i,j,\succsim)$ such that $\mu^+_{-i,j}+(i,j)\succsim_i\nu_{-i,j}+(i,j)\succsim_i\mu^-_{-i,j}+(i,j)$ for every $\nu_{-i,j}\in\mathbf C^\infty(i,j,\succsim)$. By \Cref{admiStable}, $\mu_{-i,\mu(i)}\in\mathbf C^\infty(i,\mu(i),\succsim)$ for every $i\in N$.
By completeness of the type structure, choose a type profile $t$ such that $\phi(t_i)^{\mu(i)}=\mu_{-i,\mu(i)}$ for every $i\in N$, and $\phi(t_i)^j=\mu^-_{-i,j}$ for every $j\in\mathcal P_i\setminus\{\mu(i)\}$. Choose higher-order beliefs recursively as in the proof of \Cref{epistemic1}, so that all conjectured types have rationalizable first-order beliefs and satisfy pairwise rationality. By construction, $(\mu,t)\in\bsf{BC}$.
We show that $(\mu,t)\in\bsf{PR}$. If individual rationality were violated, then there would exist $k\in N$ with $\mu(k)\neq k$ such that $\phi(t_k)^k+(k)\succ_k\phi(t_k)^{\mu(k)}+(k,\mu(k))=\mu$. Since $\phi(t_k)^k\in\mathbf C^\infty(k,\succsim)$, this would contradict $\mathbf C^\infty$-individual rationality of $\mu$. If pairwise rationality were violated, then there would exist $(a,b)\in A\times B$ with $\mu(a)\neq b$ such that $\phi(t_a)^b+(a,b)\succ_a\mu$ and $\phi(t_b)^a+(b,a)\succ_b\mu$. Since $\phi(t_a)^b\in\mathbf C^\infty(a,b,\succsim)$ and $\phi(t_b)^a\in\mathbf C^\infty(b,a,\succsim)$, this would contradict $\mathbf C^\infty$-unblocking of $\mu$. Therefore, $(\mu,t)\in\bsf{PR}$.
The recursive construction of higher-order beliefs implies $t\in\mathbb{CB}({\rm PR})$. Hence $(\mu,t)\in\bsf{PRCBPR}\cap\bsf{BC}$, and so $\mu\in{\rm proj}_{\mathcal M}(\bsf{PRCBPR}\cap\bsf{BC})$.
\hfill $\blacksquare$

\medskip

\noindent{\bf Proof of \Cref{epistemic3}.}
Fix $\langle A,B,\succsim\rangle$. We start by showing point (1). Take any epistemic type structure $\langle A,B,(T_i,\theta_i)_{i\in N}\rangle$ appended to $\langle A,B,\succsim\rangle$ and take any $(\mu,t)\in\bsf{PR}\cap\bsf{BC}\cap\bsf{QFB}$. We show that $\mu\in\mathbf P$. Suppose, toward a contradiction, that $\mu\notin\mathbf P$. There are two cases.

\noindent{\bf [1]} There exists $k\in N$ with $\mu(k)\neq k$ such that $\mu^P_{-k}+(k)\succ_k\mu$, where $\mu^P_{-k}\in\mathbf C^P_\mu(k,\succsim)$ is the P-stability conjecture following the unilateral deviation of $k$. By belief correctness, $\phi(t_k)^{\mu(k)}+(k,\mu(k))=\mu$. By quasifixed belief, $\phi(t_k)^k=\mu^P_{-k}$. Hence $\phi(t_k)^k+(k)\succ_k\phi(t_k)^{\mu(k)}+(k,\mu(k))$, contradicting pairwise rationality.

\noindent{\bf [2]} There exists $(a,b)\in A\times B$ with $\mu(a)\neq b$ such that $\mu^P_{-a,b}+(a,b)\succ_a\mu$ and $\mu^P_{-a,b}+(a,b)\succ_b\mu$, where $\mu^P_{-a,b}\in\mathbf C^P_\mu(a,b,\succsim)$ is the P-stability conjecture following the deviation of $(a,b)$. By belief correctness, $\phi(t_a)^{\mu(a)}+(a,\mu(a))=\mu$ and $\phi(t_b)^{\mu(b)}+(b,\mu(b))=\mu$. By quasifixed belief, $\phi(t_a)^b=\mu^P_{-a,b}$ and $\phi(t_b)^a=\mu^P_{-a,b}$. Hence $\phi(t_a)^b+(a,b)\succ_a\phi(t_a)^{\mu(a)}+(a,\mu(a))$ and $\phi(t_b)^a+(b,a)\succ_b\phi(t_b)^{\mu(b)}+(b,\mu(b))$, contradicting pairwise rationality.

Thus $\mu\in\mathbf P$.
We now prove point (2). Let $\mu\in\mathbf P$. By completeness, construct a type profile $t\in T$ such that, for every $i\in N$, $\phi(t_i)^{\mu(i)}=\mu_{-i,\mu(i)}$, and for every $j\in\mathcal P_i\setminus\{\mu(i)\}$, $\phi(t_i)^j$ is the unique P-stability conjecture in $\mathbf C^P_\mu(i,j,\succsim)$. By construction, $(\mu,t)\in\bsf{BC}\cap\bsf{QFB}$.
Suppose, toward a contradiction, that $(\mu,t)\notin\bsf{PR}$. There are two cases.

\noindent{\bf [1]} There exists $k\in N$ with $\mu(k)\neq k$ such that $\phi(t_k)^k+(k)\succ_k\phi(t_k)^{\mu(k)}+(k,\mu(k))$. By construction, $\phi(t_k)^k$ is the unique conjecture in $\mathbf C^P_\mu(k,\succsim)$ and $\phi(t_k)^{\mu(k)}+(k,\mu(k))=\mu$. Hence $\mu$ is not P-stable, a contradiction.

\noindent{\bf [2]} There exists $(a,b)\in A\times B$ with $\mu(a)\neq b$ such that $\phi(t_a)^b+(a,b)\succ_a\phi(t_a)^{\mu(a)}+(a,\mu(a))$ and $\phi(t_b)^a+(b,a)\succ_b\phi(t_b)^{\mu(b)}+(b,\mu(b))$. By construction, $\phi(t_a)^b$ and $\phi(t_b)^a$ are the relevant P-stability conjectures following the deviation of $(a,b)$, while $\phi(t_a)^{\mu(a)}+(a,\mu(a))=\phi(t_b)^{\mu(b)}+(b,\mu(b))=\mu$. Hence $\mu$ is not P-stable, a contradiction.

Therefore, $(\mu,t)\in\bsf{PR}\cap\bsf{BC}\cap\bsf{QFB}$, and so $\mu\in{\rm proj}_{\mathcal M}(\bsf{PR}\cap\bsf{BC}\cap\bsf{QFB})$.
\hfill $\blacksquare$

\let\OLDthebibliography\thebibliography
\renewcommand\thebibliography[1]{
  \OLDthebibliography{#1}
  \setlength{\parskip}{0pt}
  \setlength{\itemsep}{0pt plus 0.3ex}
}


\end{document}